\documentclass[twocolumn]{autart}
\usepackage{eurosym}
\usepackage[round]{natbib}
\usepackage{graphicx}
\usepackage{graphpap}
\usepackage{amsmath}
\usepackage{amsfonts}
\usepackage{graphicx}
\usepackage{amssymb}
\usepackage{color}
\usepackage{xcolor}
\usepackage{algorithm}
\usepackage{algpseudocode}

\allowdisplaybreaks[3]
\definecolor{light}{gray}{.65}
\input{tcilatex}
\begin{document}

\begin{frontmatter}

\title{System representations in subspaces of finite-sample signals and their application to data-driven fault detection}

\thanks[footnoteinfo]{This work has been supported by the National Natural Science Foundation of
China under Grant 62322303 and 62233012.}

\author[linlin]{Linlin Li}, \ead{linlin.li@ustb.edu.cn}
%\author[Cui]{Kaixin Cui}, %\ead{kaixincui@bit.edu.cn}
%\author[Cui]{Dawei Shi},  %\ead{daweishi@bit.edu.cn}
\author[Ding]{Steven X. Ding},  \ead{steven.ding@uni-due.de}
\author[linlin]{Jiahao Wang}, \ead{jiahaowang@xs.ustb.edu.cn}
\author[Zhong]{Maiying Zhong}, \ead{myzhong@sdust.edu.cn}
\author[Ding]{Wei Cheng}  \ead{cheng.wei@uni-due.de}

\address[linlin]{School of Automation and Electrical Engineering, University of Science and Technology Beijing, 100083 Beijing, China}
\address[Ding]{Institute for Automatic Control and Complex Systems, University of Duisburg-Essen, 47057 Duisburg, Germany}
\address[Zhong]{College of Electrical Engineering and Automation,
			Shandong University of Science and Technology, Qingdao 266590, China}

\begin{keyword}                           % Five to ten keywords,
Image and kernel representations; subspaces of finite-sample signals; data-driven fault detection; fundamental lemma; orthogonal projections
\end{keyword}                             % keyword list or with the
                                          % help of the Automatica
                                          % keyword wizard

\begin{abstract}                          % Abstract of not more than 200 words.
This paper deals with system representations in finite-sample signal subspaces and their application to data-driven fault detection. The first part addresses concepts of finite-sample image and kernel system representations and, associated with them, image and residual subspaces of finite-sample signals. On this basis, the equivalence between the fundamental lemma and finite-sample image subspace is demonstrated. While the image representation models the nominal system dynamics, the residual representation describes uncertainties in the input-output data and is essential for fault detection. This result extends the fundamental lemma and builds the basis for exploring data-driven fault detection. In the second part, a data-driven projection-based fault detection approach is developed. By means of a singular value decomposition, orthogonal projections onto the image and residual subspaces are realized in the context of a low-rank matrix approximation, leading to projection-based residual generation and evaluation. Finally, analysis of detection performance in the framework of matrix perturbation theory and comparison with existing data-driven fault detection methods are explored.
\end{abstract}

\end{frontmatter}

\section{Introduction}

Fault detection is a fundamental problem in the monitoring and supervision
of dynamic control systems, with critical importance in applications.
Undetected faults may lead to significant performance degradation,
constraint violations, or even loss of closed-loop stability. Consequently,
the development of reliable methods for the timely detection of faults
remains a central topic in automatic control systems %
\citep{annurev-control-survey}. Conventional fault detection approaches are
predominantly model-based, relying on accurate mathematical descriptions of
the underlying system input-output causality and dynamics %
\citep{PFC00,Ding2008}. Nevertheless, the dominat methodology in recent
years is data-driven, particularly machine learning (ML) based methods %
\citep{JinWang2021,ZHAO2026Survey}. Undoubtedly, in the era of information,
digitalization and big-data, this trend is a logic consequence. Due to the
distinguishing technological difference between these two classes of
methods, first-principles models vs. process operation data, the
methodological difference receives less attention. Existing data-driven
fault detection methods for dynamic control systems mainly rely on a
data-driven reconstruction of the system output %
\citep{Ding_IJP_2014,Hao2023,CHEN2023Automatica}. Methodologically
differently, ML-based data-driven methods characterize the system behavior
in terms of a lower-dimensional feature space or latent-variable
representations learned from historical observations. To capture the nominal
system behavior implicitly from the measurement data, which span a
higher-dimensional subspace due to system uncertainties, projection methods
serve as a capable tool \citep{JinWang2021}. Inspired by this methodological
practice, a model-based fault detection scheme has been proposed in our
recent work using an orthogonal projection \citep{DL2026}. In this
framework, the input-output data space is decomposed into an image subspace
representing the nominal system behavior and a residual subspace spanned by
uncertainties. Projecting online input-output data onto the image subspace
enables us to detect faults in the residual subspace. This approach is
particularly capable of dealing with multiplicative uncertainties and faults
that span the whole input-output data space. A direct transfer of this
approach to a data-driven implementation is hindered due to the lack of a
firm theoretic foundation. Specifically, in the model-based projection
framework, coprime factorization theory serves as the mathematical tool to
define system image, kernel and residual representations and subspaces in $%
\ell _{2}\left[ 0,\infty \right) $ Hilbert space, and construct projection
operators by means of image and kernel representations. Although in our
previous work, data-driven image and kernel representations were addressed %
\citep{Ding_automatica2014}, their realization remains in the setting of
input-output dynamics.

Willems' behavioral systems theory describes a system as the set of all
admissible trajectories and focuses primarily on the nominal dynamics %
\citep{Willems1998}. The fundamental lemma \citep{WILLEMS2005} provides a
non-parametric characterization of all finite-length trajectories of an
unknown linear time-invariant system through Hankel matrices constructed
from persistently exciting data, thereby offering a model-free description
of the system image (nominal) behavior. Although the fundamental lemma as a
cornerstone is widely applied for the development of advanced data-driven
control methods \citep{Waarde2020TAC,markovsky2021,DorflerTAC2023}, it can
be observed that limited research leverages it to address fault detection
issues. During preparing this work, we notice probably the first publication
on the fundamental lemma-based fault detection \citep{MARKOVSKY2026}, in
which an orthogonal projection-based residual generation and evaluation are
realized using the nominal data set representing the system image behavior.
It is remarkable that while advanced fault detection methods are strongly
focused on fault detection in uncertain systems \citep{Ding2020}, the
fundamental lemma-based control and detection methods either rely on nominal
data set without explicitly imposing uncertainties induced subspace, or
address noises as uncertainties that are then estimated in an input-output
dynamic setting \citep{WANGautomatica2025}.

Motivated by the aforementioned observations and analysis, the endeavors in
this paper are devoted to two objectives. Firstly, analogous to the image
and kernel representations in $\ell _{2}\left[ 0,\infty \right) $ Hilbert
space, finite-sample image and kernel representations are investigated and,
on this basis, image and residual subspaces of finite-sample signals are
introduced. Furthermore, relations of these concepts to the fundamental
lemma are explored. In the second part, a data-driven projection-based fault
detection approach is then developed on the assumption of availability of
input-output data that are corrupted with uncertainties and span the space
of finite-sample input-output data. The main contributions of this work are,
(i) introduction of image, kernel representations and, corresponding to
them, image and residual subspaces of finite-sample signals, a theoretic
basis for subsequent research, (ii) presentation of an alternative proof and
extension of the fundamental lemma by taking into account uncertainties in
data, (iii) development of a data-driven fault detection approach by means
of orthogonal projections, and (iv) analysis of detection performance of the
proposed detection system and comparison with the existing data-driven
detection methods.

The rest of the paper is organized as follows. The necessary preliminaries
and problem formulation are introduced in Section 2. Section 3 explores the
theoretical foundation of this work, including image and kernel
representations, the corresponding subspaces of finite-sample signals, and
discusses their relations to the fundamental lemma. It is followed by the
development of a data-driven projection-based fault detection approach in
Section 4.

\textbf{Notation.} Throughout this paper, standard notation known in control
theory and linear algebra is adopted. In addition, $\mathbb{Z}$ $\left( 
\mathbb{Z}_{\geq 0}\right) $ denotes the set of all integers (greater or
equal to zero). Given $i,k,q,s,N\in \mathbb{Z}_{\geq 0},\phi (i)\in \mathbb{R%
}^{q},i\in \left[ k+1,k+N\right] ,$ 
\begin{gather*}
\phi _{s}(k)=\left[ 
\begin{array}{c}
\phi (k+1) \\ 
\vdots  \\ 
\phi (k+s)%
\end{array}%
\right] \in \mathbb{R}^{sq}, \\
\mathcal{H}_{s}(\phi _{\lbrack k+1:k+N]})\hspace{-2pt}=\hspace{-2pt}\left[ 
\hspace{-2pt}%
\begin{array}{ccc}
\phi _{s}(k) & \cdots  & \hspace{-2pt}\phi _{s}(k\hspace{-2pt}+\hspace{-2pt}N%
\hspace{-2pt}-\hspace{-2pt}s)%
\end{array}%
\hspace{-2pt}\right] \hspace{-4pt}\in \hspace{-2pt}\mathbb{R}^{sq\times
\left( N-s+1\right) }
\end{gather*}%
denote the stacked vector and Hankel matrix of $\phi (k),$ respectively. For
a sequence of blocks $G_{k},k\in \mathbb{Z},\mathcal{T}_{q,t}\left( G\right) 
$ denotes a Toeplitz matrix defined by 
\begin{equation*}
\left( \mathcal{T}_{q,t}\left( G\right) \right) _{i,j}=G_{l+i-j},i=1,\cdots
,q,j=1,\cdots ,t,l\in \mathbb{Z}_{\geq 0}.
\end{equation*}%
A typical example from our subsequent work is%
\begin{gather}
\left( \mathcal{T}_{s,s}\left( G\right) \right) _{i,j}=G_{i-j}=\left\{ 
\begin{array}{l}
CA^{i-j}B,i>j \\ 
D,i=j \\ 
0,i<j,%
\end{array}%
\right.   \label{eq1-1} \\
\mathcal{T}_{s,s}\left( G\right) =\left[ 
\begin{array}{cccc}
D & 0 & \cdots  & 0 \\ 
CB & \text{ }D & \ddots  & \vdots  \\ 
\vdots  & \ddots  & \ddots  & 0 \\ 
CA^{s-2}B & \cdots  & \text{ }CB & \text{ }D%
\end{array}%
\right] .  \label{eq1-2}
\end{gather}%
$\mathcal{O}_{s}$ and $\mathcal{C}_{s}$ denote matrices 
\begin{equation}
\setlength{\abovedisplayskip}{6pt}\setlength{\belowdisplayskip}{0pt}\mathcal{%
O}_{s}=\left[ 
\begin{array}{c}
C \\ 
\vdots  \\ 
CA^{s-1}%
\end{array}%
\right] ,\mathcal{C}_{s}=\left[ 
\begin{array}{ccc}
B & \text{ }\cdots  & \text{ }A^{s-1}B%
\end{array}%
\right] .  \label{eq1-3}
\end{equation}

\textbf{Remark 1} \textit{We adopt notation typically used in the fault
detection framework, which may slightly differ from the standard one adopted
in behavior systems theory and related study of data-driven system analysis
and control}.

\section{Preliminaries and problem formulation}

\label{sec:preliminaries}

In this section, system coprime factorizations and the associated
alternative system representations, basic fault detection methods and the
fundamental lemma are briefly introduced, followed by a problem formulation
addressed in this work.

\subsection{System factorizations, image, kernel representations, and signal
subspaces}

Consider a linear time invariant (LTI) system $G(z)$, whose minimal state
space realization is described by%
\begin{align}
x(k+1)& =Ax(k)+Bu(k),  \label{eq2-10} \\
y(k)& =Cx(k)+Du(k),  \label{eq2-11}
\end{align}%
where $u\in \mathbb{R}^{p},x\in \mathbb{R}^{n},y\in \mathbb{R}^{m}$ are the
system input, state, and output vectors, respectively. The left and right
coprime factorizations (LCF and RCF) of $G(z),$ 
\begin{equation*}
\setlength{\abovedisplayskip}{6pt}\setlength{\belowdisplayskip}{6pt} G(z)=%
\hat{M}^{-1}(z)\hat{N}(z)=N(z)M^{-1}(z),
\end{equation*}%
are alternative system representation forms. Associated with them, there
exist right and left coprime pairs (RCP and LCP), $\left( \hat{X}(z),\hat{Y}%
(z)\right) $ and $\left( X(z),Y(z)\right) $ so that the double Bezout
identity holds \citep{Vinnicombe-book},%
\begin{equation}
\setlength{\abovedisplayskip}{6pt}\setlength{\belowdisplayskip}{6pt} \left[ 
\begin{array}{cc}
X(z) & \text{ }Y(z) \\ 
-\hat{N}(z) & \text{ }\hat{M}(z)%
\end{array}%
\right] \left[ 
\begin{array}{cc}
M(z) & \text{ }-\hat{Y}(z) \\ 
N(z) & \text{ }\hat{X}(z)%
\end{array}%
\right] =I.  \label{eq2-6}
\end{equation}%
The state space models of the LCPs and RCPs $\left( \hat{M},\hat{N}\right)
,\left( X,Y\right) $ and $\left( M,N\right) ,\left( \hat{X},\hat{Y}\right) $
are as follows 
\begin{align}
\left[ 
\begin{array}{cc}
\hat{M} & \text{ }\hat{N}%
\end{array}%
\right] & =\left( A_{L},\left[ 
\begin{array}{cc}
-L & B_{L}%
\end{array}%
\right] ,C,\left[ 
\begin{array}{cc}
I & D%
\end{array}%
\right] \right) ,  \label{eq2-1}
\end{align}%
\begin{align}
\left[ 
\begin{array}{cc}
X & \text{ }Y%
\end{array}%
\right] & =\left( A_{L},\left[ 
\begin{array}{cc}
-B_{L} & -L%
\end{array}%
\right] ,F,\left[ 
\begin{array}{cc}
I & 0%
\end{array}%
\right] \right) ,  \notag \\
\left[ 
\begin{array}{c}
M \\ 
N%
\end{array}%
\right] & =\left( A_{F},B,\left[ 
\begin{array}{c}
F \\ 
C_{F}%
\end{array}%
\right] ,\left[ 
\begin{array}{c}
I \\ 
D%
\end{array}%
\right] \right) ,  \label{eq2-1a} \\
\left[ 
\begin{array}{c}
\hat{Y} \\ 
\hat{X}%
\end{array}%
\right] & =\left( A_{F},-L,\left[ 
\begin{array}{c}
F \\ 
-C_{F}%
\end{array}%
\right] ,\left[ 
\begin{array}{c}
0 \\ 
I%
\end{array}%
\right] \right) ,  \label{eq2-2}
\end{align}%
where $B_{L}=B-LD,C_{F}=C+DF$, and matrices $F$ and $L$ are designed such
that $A_{F}=A+BF$ and $A_{L}=A-LC$ are Schur matrices.

\textbf{Remark 2} \textit{Hereafter, we may drop the domain variable }$z$%
\textit{\ or }$k$\textit{\ when there is no risk of confusion}.

On account of the state space models of $\left( \hat{M},\hat{N}\right)
,\left( X,Y\right) ,\left( M,N\right) $ and $\left( \hat{X},\hat{Y}\right) $
as well as the Bezout identity (\ref{eq2-6}), the system input-output data $%
\left( u,y\right) $ are expressed by 
\begin{equation}
\setlength{\abovedisplayskip}{6pt}\setlength{\belowdisplayskip}{6pt} \left[ 
\begin{array}{c}
u \\ 
y%
\end{array}%
\right] =\left[ 
\begin{array}{c}
M \\ 
N%
\end{array}%
\right] v+\left[ 
\begin{array}{c}
-\hat{Y} \\ 
\hat{X}%
\end{array}%
\right] r,  \label{eq2-4}
\end{equation}%
which can be interpreted as a control loop with an observer-based state
feedback controller, $u=F\hat{x}+v.$ The state estimate $\hat{x}$ is
delivered by a state observer that serves as a residual generator as well %
\citep{Ding2020}, 
\begin{gather}
\hat{x}(k+1)=A_{L}\hat{x}(k)+B_{L}u(k)+Ly(k),  \label{eq2-9} \\
r(k)=y(k)-\hat{y}(k),\hat{y}(k)=C\hat{x}(k)+Du(k).  \label{eq2-9c}
\end{gather}%
Signal $r$ is called residual and comprises exclusively information about
any uncertainties in the system, like unknown inputs, model mismatching and
even faults. It is noteworthy that the Bezout identity (\ref{eq2-6})
introduces a one-to-one mapping between $\left( u,y\right) $ and $\left(
v,r\right) ,$%
\begin{equation}
\setlength{\abovedisplayskip}{6pt}\setlength{\belowdisplayskip}{6pt} \left[ 
\begin{array}{c}
v \\ 
r%
\end{array}%
\right] =\left[ 
\begin{array}{cc}
X & \text{ }Y \\ 
-\hat{N} & \text{ }\hat{M}%
\end{array}%
\right] \left[ 
\begin{array}{c}
u \\ 
y%
\end{array}%
\right] ,  \label{eq2-3}
\end{equation}%
which allows us to define the system image and residual subspaces with $%
\left( v,r\right) $ as the latent (hidden) variables. The residual generator
(\ref{eq2-9})-(\ref{eq2-9c}) is also called stable kernel representation of $%
G$ and plays a core role in observer-based fault diagnosis \citep{Ding2020}.
In control theory, systems $\left( I_{G},I_{C}\right) ,$ 
\begin{equation}
\setlength{\abovedisplayskip}{6pt}\setlength{\belowdisplayskip}{6pt} I_{G}=%
\left[ 
\begin{array}{c}
M \\ 
N%
\end{array}%
\right] ,I_{C}=\left[ 
\begin{array}{c}
-\hat{Y} \\ 
\hat{X}%
\end{array}%
\right] ,
\end{equation}%
are called image representations of $G$ and feedback controller $C$,
respectively, where $u=Cy=-\hat{Y}\hat{X}^{-1}y$ \citep{Vinnicombe-book}.
Corresponding to them, the $\mathcal{H}_{2}$ subspaces $\mathcal{I}_{G}$ and 
$\mathcal{R}_{G}$, 
\begin{gather}
\mathcal{I}_{G}=\left\{ \left[ 
\begin{array}{c}
u \\ 
y%
\end{array}%
\right] :\left[ 
\begin{array}{c}
u \\ 
y%
\end{array}%
\right] \hspace{-2pt}=\hspace{-2pt}\left[ 
\begin{array}{c}
M \\ 
N%
\end{array}%
\right] v,v\in \mathcal{H}_{2}^{p}\right\} ,  \label{eq3-12a} \\
\mathcal{R}_{G}=\left\{ \left[ 
\begin{array}{c}
u \\ 
y%
\end{array}%
\right] :\left[ 
\begin{array}{c}
u \\ 
y%
\end{array}%
\right] =\left[ 
\begin{array}{c}
-\hat{Y} \\ 
\hat{X}%
\end{array}%
\right] r,r\in \mathcal{H}_{2}^{m}\right\} ,  \label{eq3-13a}
\end{gather}%
are defined and called image and residual subspaces, respectively.
Attributed to the Bezout identity (\ref{eq2-6}), $\mathcal{R}_{G}$ is the
complementary subspace of $\mathcal{I}_{G}$ and $\mathcal{H}_{2}^{p+m}=%
\mathcal{I}_{G}\oplus \mathcal{R}_{G}.$ Signals in $\mathcal{I}_{G}$
represent the nominal system dynamics, while signals in $\mathcal{R}_{G}$
describe the system uncertain dynamics that cause $r\neq 0.$

\subsection{Fault detection methods \label{Subsection2-2}}

It is known that all observer-based residual generators can be parameterized
by 
\begin{equation}
\setlength{\abovedisplayskip}{6pt}\setlength{\belowdisplayskip}{6pt}%
r=R\left( \hat{M}y-\hat{N}u\right)  \label{eq2-12}
\end{equation}%
with a stable post-filter $R$ as the parameterisation system \citep{Ding2008}%
. A further classical fault detection technique is the parity space method %
\citep{WillskyAUTO76}. A parity space-based residual generator is a finite
impulse response (FIR) system of the form 
\begin{gather}
P_{s}\left( y_{s}(k)-\mathcal{T}_{s,s}\left( G\right) u_{s}(k)\right)
\label{eq2-14} \\
=\left[ 
\begin{array}{cc}
-P_{s}\mathcal{T}_{s,s}\left( G\right) & \text{ }P_{s}%
\end{array}%
\right] \left[ 
\begin{array}{c}
u_{s}(k) \\ 
y_{s}(k)%
\end{array}%
\right] \in \mathbb{R}^{l},  \label{eq2-14a}
\end{gather}%
where $\mathcal{T}_{s,s}\left( G\right) $ is given in (\ref{eq1-1}), $1\leq
l\leq sm-n,$ and $P_{s}\left( \neq 0\right) \in \mathbb{R}^{l\times sm}$ is
called parity matrix that solves the equation%
\begin{equation}
\setlength{\abovedisplayskip}{6pt}\setlength{\belowdisplayskip}{6pt}P_{s}%
\mathcal{O}_{s}=0.  \label{eq2-15}
\end{equation}%
The integer $s$ is called the order of the parity matrix. It is worth
mentioning that a data-driven variation of parity space method was developed
and, based on it, the data-driven kernel representation-based fault
detection was proposed in \citep{Ding_automatica2014}.

Recently, an orthogonal projection-based fault detection has been proposed %
\citep{DL2026}. The core of this method is an orthogonal projection of $%
(u,y) $ onto the image subspace ${\mathcal{I}}_{G}$ by an operator $\mathcal{%
P}_{{\mathcal{I}}_{G}}$ and the generation of a residual vector $r_{{%
\mathcal{I}}_{G}}$ by means of 
\begin{equation*}
\setlength{\abovedisplayskip}{6pt}\setlength{\belowdisplayskip}{6pt} r_{{%
\mathcal{I}}_{G}}=\left[ 
\begin{array}{c}
u \\ 
y%
\end{array}%
\right] -\mathcal{P}_{{\mathcal{I}}_{G}}\left[ 
\begin{array}{c}
u \\ 
y%
\end{array}%
\right] =\left( \mathcal{I}-\mathcal{P}_{{\mathcal{I}}_{G}}\right) \left[ 
\begin{array}{c}
u \\ 
y%
\end{array}%
\right] .
\end{equation*}%
It is known from operator theory \citep{Kato_book} that 
\begin{equation*}
\func{dist}\left( \left[ 
\begin{array}{c}
u \\ 
y%
\end{array}%
\right] ,{\mathcal{I}}_{G}\right) =\left\Vert \left( \mathcal{I}-\mathcal{P}%
_{{\mathcal{I}}_{G}}\right) \left[ 
\begin{array}{c}
u \\ 
y%
\end{array}%
\right] \right\Vert _{2},
\end{equation*}%
namely $\left\Vert r_{{\mathcal{I}}_{G}}\right\Vert _{2}$ measures the
(minimal) distance from $(u,y)$ to ${\mathcal{I}}_{G}.$ Hence, $\left\Vert
r_{{\mathcal{I}}_{G}}\right\Vert _{2}$ serves as a residual evaluation
function. On this basis, gap metric \citep{Kato_book} that measures the
similarity between two subspaces is introduced for the threshold setting.

\subsection{Fundamental lemma}

In this subsection, we briefly review the fundamental lemma and some
relevant results \citep{WILLEMS2005,markovsky2021}.

\begin{definition}
\label{Definition 1} The system input $u_{s}(k)\in \mathbb{R}^{sp}$ is
persistently exciting of order $s$ if the associated Hankel matrix $\mathcal{%
H}_{s}(u_{[k+1:k+N]})$ is of full row rank, i.e.%
\begin{equation*}
rank\left( \mathcal{H}_{s}(u_{[k+1:k+N]})\right) =sp.
\end{equation*}%
\end{definition}

\begin{lemma}
\label{Lemma2-1} Let $\left( u_{s}(k),y_{s}(k)\right) $ denote $s$-samples
long input-output trajectories of the LTI system (\ref{eq2-10})-(\ref{eq2-11}%
) over the horizon $\left[ k_{0}+1,k_{0}+N\right] $ and assume $u$ to be
persistently exciting of order $s+n$. Then, any trajectory $\left(
u_{s}(k),y_{s}(k)\right) $ of the system can be expressed by  
\begin{equation}
\begin{bmatrix}
\,u_{s}(k)\, \\ 
\,y_{s}(k)\,%
\end{bmatrix}%
=T_{G,s}g,T_{G,s}=%
\begin{bmatrix}
\,\mathcal{H}_{s}(u_{[k_{0}+1:k_{0}+N]})\, \\ 
\,\mathcal{H}_{s}(y_{[k_{0}+1:k_{0}+N]})\,%
\end{bmatrix}%
,  \label{eq2-8}
\end{equation}%
for some $g\in \mathbb{R}^{N-s+1}$, and 
\begin{equation*}
\setlength{\abovedisplayskip}{6pt}\setlength{\belowdisplayskip}{6pt}
\forall g\in \mathbb{R}^{N-s+1},%
\begin{bmatrix}
\,u_{s}(k)\, \\ 
\,y_{s}(k)\,%
\end{bmatrix}%
=T_{G,s}g
\end{equation*}%
is a trajectory of the system.
\end{lemma}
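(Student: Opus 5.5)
The plan is to reduce the statement to a rank comparison between two matrices with the same column structure. Every $s$-sample trajectory of (\ref{eq2-10})-(\ref{eq2-11}) has the form
\begin{equation*}
\begin{bmatrix} u_{s}(k)\\ y_{s}(k)\end{bmatrix}
=\begin{bmatrix} 0 & I\\ \mathcal{O}_{s} & \mathcal{T}_{s,s}(G)\end{bmatrix}
\begin{bmatrix} x(k+1)\\ u_{s}(k)\end{bmatrix}=:\Gamma_{s}\begin{bmatrix} x(k+1)\\ u_{s}(k)\end{bmatrix}.
\end{equation*}
The set of all trajectories is therefore $\operatorname{im}\Gamma_{s}$, whose dimension is at most $n+sp$.

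Stacking the same relation column-wise over the data gives
\begin{equation*}
T_{G,s}=\Gamma_{s}\begin{bmatrix} X\\ \mathcal{H}_{s}(u_{[k_{0}+1:k_{0}+N]})\end{bmatrix},\qquad
X=\begin{bmatrix} x(k_{0}+1) & \cdots & x(k_{0}+N-s+1)\end{bmatrix}.
\end{equation*}
This immediately yields the second claim, that every $T_{G,s}g$ is a trajectory. Indeed, $T_{G,s}g=\Gamma_{s}\,[\,Xg;\ \mathcal{H}_{s}(u)g\,]$ is generated by the initial state $Xg$ and the input $\mathcal{H}_{s}(u)g$. Equivalently, it is a linear combination of trajectories, which is again a trajectory by linearity.

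For the first claim, that every trajectory equals $T_{G,s}g$ for some $g$, it suffices to show $\operatorname{im}T_{G,s}=\operatorname{im}\Gamma_{s}$. Since the inclusion $\subseteq$ already holds, this reduces to proving the rank condition
\begin{equation*}
\operatorname{rank}\begin{bmatrix} X\\ \mathcal{H}_{s}(u_{[k_{0}+1:k_{0}+N]})\end{bmatrix}=n+sp.
\end{equation*}
Once this holds, the middle factor is surjective, so $\operatorname{im}T_{G,s}=\operatorname{im}\Gamma_{s}$.

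The main obstacle is this rank condition, and I would prove it by the classical argument of Willems et al. Suppose a row vector $[\,\xi^{T}\ \eta^{T}\,]$ annihilates the stacked matrix. Minimality gives controllability of $(A,B)$, and I will use this to propagate the annihilating relation forward. From $x(j+s)=A^{s}x(j)+\mathcal{C}_{s}$-type terms in $u$, one can build, for $t=0,\dots,n$, shifted annihilators of the depth-$(s+n)$ Hankel matrix $\mathcal{H}_{s+n}(u)$, each paired with the state row. Taking a combination with the coefficients of the characteristic polynomial of $A$ (Cayley–Hamilton) eliminates the state part. The result is a left annihilator of $\mathcal{H}_{s+n}(u)$ alone, which by persistency of excitation of order $s+n$ must vanish.

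Unwinding the shifts, together with controllability applied to $\xi^{T}A^{i}B$ and a block-by-block argument, then forces $\eta=0$ and $\xi=0$. The delicate part is the bookkeeping of these shifted vectors: one must show that the combined annihilator is nonzero whenever $(\xi,\eta)\neq 0$. There I would argue by contradiction using the leading nonzero block together with controllability.
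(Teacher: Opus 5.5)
Your proof is correct. It takes a genuinely different route from the paper, and it closes a step the paper leaves open.

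The paper quotes Lemma~\ref{Lemma2-1} from Willems et al. Its own ``constructive'' account (Theorem~\ref{Theorem 3-1}, Corollary~\ref{Co3-1}) factors the data as $T_{G,s}=I_{G,s}\,\mathcal{H}_{s+n}(v_{[k_p+1:k_p+N]})$. Here $v=u-F_dx$ is the latent input of a deadbeat loop, so the state is encoded by past latent samples, $x(k+1)=[A_{F_d}^{n-1}B\ \cdots\ B]\,v_n(k-n)$. That gives the second claim at once and recasts the first as the latent excitation condition (\ref{eq3-6b}). However, the paper obtains (\ref{eq3-6a}) by citing Lemma~\ref{Lemma2-1} itself. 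That argument therefore never shows that persistency of excitation of $u$ of order $s+n$ forces (\ref{eq3-6b}).

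Your factorization is the $F=0$ instance of the same structure, since $I_{G_0}=\Gamma_s\,\mathrm{diag}([A^{n-1}B\ \cdots\ B],I)$ with the first block surjective by controllability. The difference is that you keep the state explicit. Your Cayley--Hamilton/annihilator argument (the state-space version of Willems' proof, as in van Waarde et al.) supplies exactly the missing implication. In your route, the extra $n$ in the excitation order comes from the $n$ shifts rather than from an over-parametrized latent vector.

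The bookkeeping you flag does go through:
\begin{itemize}
\item The shifted relations hold on the first $N-s-n+1$ columns, which is exactly the column count of $\mathcal{H}_{s+n}(u)$.
\item Take row vectors $w_t=[\,\xi^TA^{t-1}B\ \cdots\ \xi^TB\ \ \eta^T\ \ 0\,]$, $t=0,\dots,n$, and $\chi_A(\lambda)=\sum_{t=0}^{n}\alpha_t\lambda^t$ with $\alpha_n=1$.
\item Blocks $n+1,\dots,n+s$ of $\sum_t\alpha_tw_t$ contain only $\eta$, in triangular form with leading coefficient $\alpha_n=1$. Hence $\eta=0$.
\item Blocks $n,\dots,1$ then give $\xi^TA^iB=0$ for $i=0,\dots,n-1$, so $\xi=0$ by controllability.
\end{itemize}
One small slip: in ``$x(j+s)=A^sx(j)+\cdots$'' the shift should be $t\in\{0,\dots,n\}$, not $s$.

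Your route yields a complete, self-contained proof from the stated hypotheses. The paper's route buys structural insight instead: invariance with respect to $F$, the extension to uncertain data via $\Psi_s$ and $\bar\Psi_{s,0}$, and identifiability of $I_{G,s}$ when $\mathcal{H}_{s+n}(v)$ has full row rank.
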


Lemma \ref{Lemma2-1} is known as fundamental lemma, which can be
equivalently expressed by $\func{Im}\left( T_{G,s}\right) =\mathcal{B}%
_{G,s}, $%
\begin{equation}
\setlength{\abovedisplayskip}{6pt}\setlength{\belowdisplayskip}{6pt}\mathcal{%
B}_{G,s}=\left\{ \left[ 
\begin{array}{c}
u_{s}(k) \\ 
y_{s}(k)%
\end{array}%
\right] :\left( u,y\right) \text{ is a trajectory of }G\right\} .
\end{equation}%
In the data-driven framework, the concept of persistent excitation and the
fundamental lemma are the cornerstone.

\subsection{Problem formulation}

In our work, it is assumed that data are collected during fault-free
operations under persistent excitation, whose dynamics are governed by 
\begin{align}
x(k+1)& =Ax(k)+Bu(k)+\Delta _{p}(k),  \label{eq2-20} \\
y(k)& =Cx(k)+Du(k)+\Delta _{s}(k),  \label{eq2-21}
\end{align}%
where $\Delta _{p},\Delta _{s}$ represent uncertainties in the process and
sensors, respectively. They are assumed to be unknown but $\ell _{2}$%
-bounded. We also consider a special case 
\begin{gather}
\left[ 
\begin{array}{c}
\Delta _{p}(k) \\ 
\Delta _{s}(k)%
\end{array}%
\right] =\left[ 
\begin{array}{c}
\omega (k) \\ 
v(k)%
\end{array}%
\right] \sim \mathcal{N}\left( 0,\Sigma \right) ,  \label{eq2-20a} \\
\Sigma =\mathbb{E}\left( \left[ 
\begin{array}{c}
w(i) \\ 
v(i)%
\end{array}%
\right] \left[ 
\begin{array}{c}
w(j) \\ 
v(j)%
\end{array}%
\right] ^{T}\right) =\left[ 
\begin{array}{cc}
\Sigma _{\omega } & S_{\omega v} \\ 
S_{\omega v}^{T} & \Sigma _{v}%
\end{array}%
\right] \delta _{ij}  \label{eq2-21b}
\end{gather}%
with the process and measurement noises $\left( w,v\right) $ that are
uncorrelated with the system input. Here, $\delta _{ij}=1$ for $i=j,\delta
_{ij}=0$ for $i\neq j.$ Note that the system input-output dynamics can be
equivalently written as%
\begin{equation}
\setlength{\abovedisplayskip}{6pt}\setlength{\belowdisplayskip}{6pt}\left\{ 
\begin{array}{l}
\hat{x}(k+1)=A\hat{x}(k)+Bu(k)+Lr(k) \\ 
y(k)=\hat{y}(k)+r(k),%
\end{array}%
\right.  \label{eq2-25}
\end{equation}%
which is called kernel-based model. Hereby, the dynamics of the residual $r$
is described by%
\begin{equation}
\setlength{\abovedisplayskip}{6pt}\setlength{\belowdisplayskip}{6pt}%
r=C\left( zI-A_{L}\right) ^{-1}\left( \Delta _{p}-L\Delta _{s}\right)
+\Delta _{s}.
\end{equation}%
It is assumed that $span\left\{ r(k),k\in \mathbb{Z}_{\geq 0}\right\} =%
\mathbb{R}^{m}.$

In the sequel, analogous to $I_{G}$ and $I_{C}$ and the input-output
representation (\ref{eq2-4}), we first address the image representations of
finite-sample signals. Specifically, the input-output data $\left(
u_{s}(k),y_{s}(k)\right) $ are described by an algebraic system driven by $%
\left( v,r\right) $ over a (finite) time interval. Whereby, the response of $%
\left( u_{s}(k),y_{s}(k)\right) $ to $v$, called image representation,
describes the nominal dynamics, while the response to $r$ characterizes the
dynamics induced by uncertainties. On this basis, image and residual
subspaces of finite-sample signals, similar to $\left( \mathcal{I}_{G},%
\mathcal{R}_{G}\right) ,$ are defined. An interesting application of this
work is to demonstrate the equivalence of the image subspace to $\func{Im}%
(T_{G,s}),$ implying an alternative proof of the fundamental lemma. The
response to $r$ plays a core role in fault detection. Differing from (\ref%
{eq2-3}) in the model-based framework, the kernel representation in this
work is introduced as the null-space of the image representation. In the
data-driven setting, this study extends the fundamental lemma to deal with
data corrupted with uncertainties. These theoretical results enable the
development of a data-driven projection-based fault detection approach. On
the basis of a low-rank matrix approximation by means of a singular value
decomposition (SVD) of the data matrix, projection-based residual generation
and evaluation are realized. Finally, detection performance of the proposed
approach is analyzed in the framework of matrix perturbation theory and
compared with existing detection methods.

\section{Image and kernel representations, signal subspaces, and fundamental
lemma}

\label{sec:method}

In this section, we introduce finite-sample image and kernel representations
and, associated with them, the image and residual subspaces as alternative
system model forms. We make a concerted effort to explore relations between
(i) the image subspace and fundamental lemma and (ii) residual and parity
subspaces. They highlight some unexplored theoretic aspects and lay the
groundwork for our endeavours to develop data-driven fault detection methods.

\subsection{Image representation and subspace, fundamental lemma}

In \citep{Ding_automatica2014}, data-driven image and kernel representations
have been introduced and applied for the purpose of fault detection and
control. In this subsection, we firstly introduce a finite-sample image
representation, an analogous form of the image representation\ (\ref{eq2-1a}%
) in finite samples-long signal subspace, that generalizes the data-driven
representation in \citep{Ding_automatica2014}. On this basis, image
subspace, a data-driven image representation and its equivalence to the
fundamental lemma are explored. We begin with the following lemma given in %
\citep{Ding_automatica2014}.

\begin{lemma}
\label{Lemma3-1}Given the system (\ref{eq2-10})-(\ref{eq2-11}), there
exist a matrix pair $\left( M_{s},N_{s}\right) $ and a vector $%
v_{s+n}(k-n)\in \mathbb{R}^{\left( s+n\right) p},$ s.t.%
\begin{gather}
\left[ 
\begin{array}{c}
u_{s}(k) \\ 
y_{s}(k)%
\end{array}%
\right] =\left[ 
\begin{array}{c}
M_{s} \\ 
N_{s}%
\end{array}%
\right] v_{s+n}(k-n),  \label{eq3-1} \\
\left[ 
\begin{array}{c}
M_{s} \\ 
N_{s}%
\end{array}%
\right] =\left[ 
\begin{array}{cc}
H_{u,x} & \text{ }H_{u,v} \\ 
H_{y,x} & \text{ }H_{y,v}%
\end{array}%
\right] \in \mathbb{R}^{s(p+m)\times (s+n)p},  \label{eq3-1a}
\end{gather}%
where matrices $H_{u,x},H_{u,v},H_{y,x}$ and $H_{y,v}$ are%
\begin{gather}
H_{u,x}=\left[ 
\begin{array}{c}
F \\ 
\vdots  \\ 
FA_{F}^{s-1}%
\end{array}%
\right] \left[ 
\begin{array}{ccc}
A_{F}^{n-1}B & \cdots  & \text{ }B%
\end{array}%
\right] \in \mathbb{R}^{sp\times np},  \notag \\
H_{u,v}=\left[ 
\begin{array}{cccc}
I & 0 & \cdots  & \text{ }0 \\ 
FB & I & \ddots  & \vdots  \\ 
\vdots  & \ddots  & \ddots  & \text{ }0 \\ 
FA_{F}^{s-2}B & \cdots  & FB & \text{ }I%
\end{array}%
\right] \in \mathbb{R}^{sp\times sp},  \notag \\
H_{y,x}=\left[ 
\begin{array}{c}
C_{F} \\ 
\vdots  \\ 
C_{F}A_{F}^{s-1}%
\end{array}%
\right] \left[ 
\begin{array}{ccc}
A_{F}^{n-1}B & \cdots  & \text{ }B%
\end{array}%
\right] \in \mathbb{R}^{sm\times np},  \notag \\
H_{y,v}=\left[ 
\begin{array}{cccc}
D & 0 & \cdots  & \text{ }0 \\ 
C_{F}B & \text{ }D & \ddots  & \vdots  \\ 
\vdots  & \ddots  & \ddots  & \text{ }0 \\ 
C_{F}A_{F}^{s-2}B & \cdots  & C_{F}B & \text{ }D%
\end{array}%
\right] \in \mathbb{R}^{sm\times sp},  \notag
\end{gather}%
with $F=F_{d},$ and all eigenvalues of matrix $A_{F}=A+BF_{d}$ equal zero. 
\end{lemma}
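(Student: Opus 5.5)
The plan is to realize the finite-sample form of the image representation (\ref{eq2-1a}) directly in state space, using a deadbeat state feedback. We introduce the latent variable $v$ through the state-feedback parametrization of the input, exactly as in the interpretation of (\ref{eq2-4}):
\begin{equation*}
v(k)=u(k)-F_{d}x(k).
\end{equation*}
Since the realization (\ref{eq2-10})-(\ref{eq2-11}) is minimal, $(A,B)$ is controllable. By pole placement there is therefore an $F_{d}$ for which all eigenvalues of $A_{F}=A+BF_{d}$ are zero. Then $A_{F}$ is nilpotent, and by Cayley--Hamilton $A_{F}^{n}=0$. With this choice the system becomes
\begin{equation*}
x(k+1)=A_{F}x(k)+Bv(k),\quad u(k)=F_{d}x(k)+v(k),\quad y(k)=C_{F}x(k)+Dv(k).
\end{equation*}
This is exactly the realization of $\left[ M^{T}\ N^{T}\right] ^{T}$ in (\ref{eq2-1a}).

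The key step is to eliminate the state using nilpotency. Iterating the closed-loop state equation $n$ times backward from time $k+1$ gives
\begin{equation*}
x(k+1)=A_{F}^{n}x(k-n+1)+\sum_{j=1}^{n}A_{F}^{n-j}Bv(k-n+j)=\left[ A_{F}^{n-1}B\ \cdots \ B\right] v_{n}(k-n),
\end{equation*}
because $A_{F}^{n}=0$. So the state at the start of the window is a fixed linear function of the previous $n$ values of $v$, independent of the initial condition.

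Next we stack the closed-loop output equations over $i=1,\dots ,s$. For $u$ this gives $u_{s}(k)=\mathcal{O}_{F}^{u}x(k+1)+H_{u,v}v_{s}(k)$, where the observability-type matrix has rows $F_{d}A_{F}^{i-1}$ and $H_{u,v}$ is the lower block-triangular Toeplitz matrix with identity on the diagonal and blocks $F_{d}A_{F}^{i-j-1}B$ below it. The same computation for $y$, with $C_{F}$ and $D$ in place of $F_{d}$ and $I$, gives $H_{y,v}$. Substituting the expression for $x(k+1)$ produces $H_{u,x}$ and $H_{y,x}$ as the product of the observability-type factor and $\left[ A_{F}^{n-1}B\ \cdots \ B\right]$. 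Finally, the concatenation $\left[ v_{n}(k-n);v_{s}(k)\right] $ is precisely $v_{s+n}(k-n)$, which yields (\ref{eq3-1})-(\ref{eq3-1a}) with the stated dimensions.

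The one delicate point is the bookkeeping at the beginning of the time axis. If $k-n+1$ precedes the initial time of the trajectory, the past values of $v$ are not defined by the data. In that case we use controllability once more: any state $x(k+1)$ is reachable in $n$ steps, and $\left[ A_{F}^{n-1}B\ \cdots \ B\right] $ has full row rank because it spans the same space as $\mathcal{C}_{n}$ (feedback does not change the reachable subspace). Hence there always exists some $v_{n}(k-n)$ reproducing $x(k+1)$. This settles the existence claim in full generality, at the cost that $v_{s+n}(k-n)$ need not be unique.

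I expect this index alignment, together with the full-row-rank argument, to be the only step needing care. Everything else is routine block multiplication.
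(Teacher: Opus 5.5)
Your proposal is correct and follows the paper's route. The paper only sketches the argument as setting $u=F_{d}x+v$ in the image representation (\ref{eq2-1a}) and using $A_{F_{d}}^{n}=0$ to eliminate the state; your explicit stacking reproduces $H_{u,x},H_{u,v},H_{y,x},H_{y,v}$ exactly, and your remark that $\left[ A_{F}^{n-1}B\ \cdots \ B\right]$ has full row rank (so a suitable $v_{n}(k-n)$ exists even when the window starts before the data) is a useful detail the paper leaves implicit.
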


The idea behind the proof of this lemma is to set $u=F_{d}x+v$ in the image
representation (\ref{eq2-1a}). The proof is straightforward on account of
the deadbeat behavior, i.e. $A_{F_{d}}^{n}=0$. The following lemma is
essential to prove that the model (\ref{eq3-1}) holds for any gain matrix $F$%
.

\begin{lemma}
\label{Lemma3-2}Given the system (\ref{eq2-10})-(\ref{eq2-11}), and suppose
that RCPs $(M_{1},N_{1})$ and $(M_{2},N_{2})$ are constructed for $F=F_{1}$
and $F=F_{2}\neq F_{1}$, respectively. Then, it holds%
\begin{align}
\begin{bmatrix}
\,u\, \\ 
\,y\,%
\end{bmatrix}%
& =%
\begin{bmatrix}
\,M_{1}\, \\ 
\,N_{1}\,%
\end{bmatrix}%
v=%
\begin{bmatrix}
\,M_{2}\, \\ 
\,N_{2}\,%
\end{bmatrix}%
Vv,  \label{eq3-2} \\
V& =\left( A_{F_{1}},B,F_{1}-F_{2},I\right) .
\end{align}%
Correspondingly,%
\begin{gather}
\begin{bmatrix}
M_{1,s}\, \\ 
N_{1,s}\,%
\end{bmatrix}%
=%
\begin{bmatrix}
\,M_{2,s}\, \\ 
\,N_{2,s}\,%
\end{bmatrix}%
V_{s+n},V_{s+n}=\mathcal{T}_{s+n,s+n}\left( V\right) ,  \label{eq3-3} \\
\left( \mathcal{T}_{s+n,s+n}\left( V\right) \right) _{i,j}=V_{i-j}=\left\{ 
\begin{array}{l}
(F_{1}-F_{2})A_{F_{1}}^{i-j}B,i>j \\ 
I,i=j \\ 
0,i<j,%
\end{array}%
\right.   \notag
\end{gather}%
where $\left( M_{i,s},N_{i,s}\right) ,i=1,2,$ are $\left( M_{s},N_{s}\right) 
$ given in Lemma \ref{Lemma3-1} for $F=F_{i}.$ 
\end{lemma}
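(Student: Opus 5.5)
The plan is to prove the transfer-function identity (\ref{eq3-2}) with a state-space trajectory argument. The finite-sample identity (\ref{eq3-3}) then follows by truncating that argument to a finite horizon with zero initial state.

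\emph{Step 1: the identity (\ref{eq3-2}).} Realize $[M_{1};N_{1}]$ as in (\ref{eq2-1a}) with $F=F_{1}$:
\[
x(k+1)=A_{F_{1}}x(k)+Bv(k),\quad u(k)=F_{1}x(k)+v(k),\quad y(k)=C_{F_{1}}x(k)+Dv(k).
\]
Define the new latent signal $\bar{v}(k)=u(k)-F_{2}x(k)=(F_{1}-F_{2})x(k)+v(k)$. Since $x$ is driven by $A_{F_{1}}x+Bv$, the map $v\mapsto \bar{v}$ is exactly $V=(A_{F_{1}},B,F_{1}-F_{2},I)$.

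Next I check that the same state $x$ also satisfies the realization of $[M_{2};N_{2}]$ driven by $\bar{v}$:
\[
x(k+1)=Ax+Bu=Ax+B(F_{2}x+\bar{v})=A_{F_{2}}x+B\bar{v},
\]
together with $u=F_{2}x+\bar{v}$ and $y=Cx+Du=C_{F_{2}}x+D\bar{v}$. With zero initial state, this gives $[u;y]=[M_{2};N_{2}]\bar{v}=[M_{2};N_{2}]Vv$. As transfer matrices this means $[M_{1};N_{1}]=[M_{2};N_{2}]V$. If desired, the same conclusion can be cross-checked by a state-space cascade computation with a similarity transformation.

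\emph{Step 2: reading $(M_{i,s},N_{i,s})$ as truncated Toeplitz operators.} From the structure of $H_{u,x},H_{u,v},H_{y,x},H_{y,v}$ in Lemma \ref{Lemma3-1}, the factor $[A_{F}^{n-1}B\ \cdots\ B]\,v_{n}(k-n)$ equals $x(k+1)$ when the state is zero at time $k-n+1$ and the system is driven by $v(k-n+1),\dots,v(k)$.

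It follows that $[M_{i,s};N_{i,s}]$ consists exactly of the last $s$ block rows of the lower-triangular block Toeplitz matrix $\mathcal{T}_{s+n,s+n}$ built from the Markov parameters of $[M_{i};N_{i}]$. This is the zero-initial-state input-output map over $s+n$ steps. I use this purely as an algebraic identification valid for any $F_{i}$; the deadbeat property plays no role here.

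\emph{Step 3: composition and truncation.} Apply the trajectory argument of Step 1 on the horizon $[k-n+1,k+s]$ with $x(k-n+1)=0$. The key point is that the state $x$ is shared by both realizations, so the zero initial state holds for the $F_{1}$-realization, for $V$, and for the $F_{2}$-realization simultaneously. Therefore, for every input sequence $v_{s+n}(k-n)$,
\[
\mathcal{T}_{s+n,s+n}([M_{1};N_{1}])\,v_{s+n}=\mathcal{T}_{s+n,s+n}([M_{2};N_{2}])\,\mathcal{T}_{s+n,s+n}(V)\,v_{s+n}.
\]
Equivalently, this is the fact that for causal LTI systems the finite-horizon lower-triangular Toeplitz matrix of a product equals the product of the Toeplitz matrices. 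Extracting the last $s$ block rows of both sides gives (\ref{eq3-3}) with $V_{s+n}=\mathcal{T}_{s+n,s+n}(V)$. Its entries are $(F_{1}-F_{2})A_{F_{1}}^{i-j}B$ below the diagonal and $I$ on the diagonal, as stated.

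The main obstacle is Step 2 together with the consistency of initial conditions in Step 3. One must confirm that the block structure of $M_{i,s}$ and $N_{i,s}$ really coincides with the truncated zero-initial-state Toeplitz map. In particular, the first $n$ input samples must enter only through $x(k+1)$, and this has to be true for a generic $F_{i}$, not only a deadbeat one. I would verify this by writing $x(k+j)=A_{F}^{j-1}x(k+1)+\sum_{l} A_{F}^{j-1-l}Bv(k+l)$ and matching it block by block against $H_{u,x},H_{u,v},H_{y,x},H_{y,v}$.
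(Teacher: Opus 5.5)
Your proof is correct, but it takes a different route from the paper's. The paper does not prove (\ref{eq3-2}) at all; it cites it as known. For (\ref{eq3-3}) it takes the truncated-Toeplitz form (\ref{eq5-2})--(\ref{eq5-3}) of $M_{i,s},N_{i,s}$ as given. It then multiplies out $M_{2,s}V_{s+n}$ into convolved Markov parameters $\bar M_k$ and reduces these to $M^1_k$ entry by entry with the telescoping identity (\ref{eq3-5}), and does the same for $\bar N_k$.

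You instead get the Markov-parameter identity directly from the change of variables $\bar v=(F_1-F_2)x+v$. One state trajectory serves the $F_1$-realization, $V$ and the $F_2$-realization at once, so the zero initial condition at $k-n+1$ holds for all three. Hence the $(s+n)$-step Toeplitz map of $[M_1;N_1]$ equals that of $[M_2;N_2]$ times $\mathcal{T}_{s+n,s+n}(V)$ for every input, and extracting rows finishes the argument.

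What each route buys:
\begin{itemize}
\item Your route proves (\ref{eq3-2}) and (\ref{eq3-3}) in one stroke. It also makes (\ref{eq3-5}) unnecessary, since that identity is just the impulse-response form of your change of variables.
\item The paper's route is a purely algebraic check that needs no trajectory interpretation, at the price of the sum manipulations.
\end{itemize}

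Your Step 2 identification is exactly the paper's $(\mathcal{T}_{s,s+n}(M^i))_{l,j}=M^i_{n+l-j}$. Apply ``last $s$ block rows'' to $M_i$ and $N_i$ separately, because $[M_{i,s};N_{i,s}]$ stacks all $u$-rows above all $y$-rows rather than interleaving them.

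One correction concerns the entries of $V_{s+n}$. Your argument actually yields $(F_1-F_2)A_{F_1}^{i-j-1}B$ below the diagonal, so $V_1=(F_1-F_2)B$, and this is also what the paper's proof uses in $\bar M_k$. The exponent $i-j$ in the displayed statement is the same off-by-one slip as in (\ref{eq1-1}) versus (\ref{eq1-2}). So do not assert the entries ``as stated''.
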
%
\begin{proof}
The relation (\ref{eq3-2}) is a known result, see,
for instance, \citep{Ding2020}. Below, (\ref{eq3-3}) is proven. For
notational simplicity, Toeplitz matrices of $\left( M_{s},N_{s}\right) ,$\
for $F=F_{i},C_{F}=C_{F_{i}},A_{F}=A_{F_{i}},i=1,2,$%
\begin{gather}
M_{s,i}=\mathcal{T}_{s,s+n}\left( M^{i}\right) ,M_{k}^{i}=\left\{ 
\begin{array}{l}
F_{i}A_{F_{i}}^{k-1}B,k>0 \\ 
I,k=0 \\ 
0,k<0,%
\end{array}%
\right.  \label{eq5-2} \\
N_{s,i}=\mathcal{T}_{s,s+n}\left( N^{i}\right) ,N_{k}^{i}=\left\{ 
\begin{array}{l}
C_{F_{i}}A_{F_{i}}^{k-1}B,k>0 \\ 
D,k=0 \\ 
0,k<0,%
\end{array}%
\right.  \label{eq5-3} \\
\left( \mathcal{T}_{s,s+n}\left( M^{i}\right) \right)
_{l,j}=M_{n+l-j}^{i},\left( \mathcal{T}_{s,s+n}\left( N^{i}\right) \right)
_{l,j}=N_{n+l-j}^{i},  \notag
\end{gather}%
are used. Observe that%
\begin{gather*}
M_{2,s}V_{s+n}=\mathcal{T}_{s,s+n}\left( M^{2}\right) \mathcal{T}%
_{s+n,s+n}\left( V\right) =:\mathcal{T}_{s,s+n}\left( \bar{M}\right) , \\
\left( \mathcal{T}_{s,s+n}\left( \bar{M}\right) \right) _{l,j}=\bar{M}%
_{n+l-j}, \\
\bar{M}_{k}=\left\{ 
\begin{array}{l}
\dsum\limits_{l=0}^{k-2}F_{2}A_{F_{2}}^{l}B\left( F_{1}-F_{2}\right)
A_{F_{1}}^{k-2-l}B \\ 
+\left( F_{1}-F_{2}\right) A_{F_{1}}^{k-1}B+F_{2}A_{F_{2}}^{k-1}B,k>0 \\ 
I,k=0 \\ 
0,k<0.%
\end{array}%
\right.
\end{gather*}%
Attributed to the equalities,%
\begin{gather}
B\left( F_{1}-F_{2}\right) =A_{F_{1}}-A_{F_{2}}\Longrightarrow  \notag \\
A_{F_{1}}^{k-1}-A_{F_{2}}^{k-1}=\dsum\limits_{i=0}^{k-2}A_{F_{2}}^{i}B\left(
F_{1}-F_{2}\right) A_{F_{1}}^{k-i-2},  \label{eq3-5}
\end{gather}%
it turns out, by routine computations, that 
\begin{equation*}
\bar{M}_{k}=M_{k}^{1}\Longrightarrow M_{2,s}V_{s+n}=M_{1,s}.
\end{equation*}%
Similarly, we have 
\begin{gather*}
\mathcal{T}_{s,s+n}\left( \bar{N}\right) :=N_{2,s}V_{s+n},\left( \mathcal{T}%
_{s,s+n}\left( \bar{N}\right) \right) _{l,j}=\bar{N}_{n+l-j}, \\
\bar{N}_{k}=\left\{ 
\begin{array}{l}
\dsum\limits_{l=0}^{k-2}C_{F_{2}}A_{F_{2}}^{l}B\left( F_{1}-F_{2}\right)
A_{F_{1}}^{k-2-l}B \\ 
+D\left( F_{1}-F_{2}\right) A_{F_{1}}^{k-1}B+C_{F_{2}}A_{F_{2}}^{k-1}B,k>0
\\ 
D,k=0 \\ 
0,k<0,%
\end{array}%
\right.
\end{gather*}%
which, by means of (\ref{eq3-5}), equals 
\begin{equation*}
\bar{N}_{k}=\left\{ 
\begin{array}{l}
C_{F_{1}}A_{F_{1}}^{k-1}B,k>0 \\ 
D,k=0 \\ 
0,k<0,%
\end{array}%
\right. =N_{k}^{1}.
\end{equation*}%
Thus, $N_{2,s}V_{s+n}=N_{1,s}.$ The proof is completed.
\end{proof}

On the basis of Lemmas \ref{Lemma3-1}-\ref{Lemma3-2}, we have the following
theorem.

\begin{theorem}
\label{Theorem 3-1}Given the system (\ref{eq2-10})-(\ref{eq2-11}), any
input-output data $\left( u_{s}(k),y_{s}(k)\right) $ can be modelled by (\ref%
{eq3-1}) for some $v_{s+n}(k-n)\in \mathbb{R}^{\left( s+n\right) p}$and $%
A_{F}=A+BF$ for an arbitrary gain matrix $F$. Furthermore, for $s\geq n,$%
\begin{equation}
\setlength{\abovedisplayskip}{3pt}\setlength{\belowdisplayskip}{0pt}
rank\left[ 
\begin{array}{c}
M_{s} \\ 
N_{s}%
\end{array}%
\right] =sp+n.  \label{eq3-3a}
\end{equation}
\end{theorem}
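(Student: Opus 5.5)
Both claims reduce to the deadbeat representation of Lemma~\ref{Lemma3-1}, carried over to an arbitrary $F$ by Lemma~\ref{Lemma3-2}.

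\emph{Representation for arbitrary $F$.} Start from Lemma~\ref{Lemma3-1} with $F=F_{d}$, which gives
\begin{equation*}
\left[\begin{array}{c} u_{s}(k)\\ y_{s}(k)\end{array}\right]=\left[\begin{array}{c} M_{d,s}\\ N_{d,s}\end{array}\right]v_{d,s+n}(k-n).
\end{equation*}
Now fix an arbitrary $F$ and apply (\ref{eq3-3}) of Lemma~\ref{Lemma3-2} with $F_{1}=F_{d}$ and $F_{2}=F$. This yields $\left[\begin{smallmatrix}M_{d,s}\\ N_{d,s}\end{smallmatrix}\right]=\left[\begin{smallmatrix}M_{s}\\ N_{s}\end{smallmatrix}\right]V_{s+n}$. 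Setting $v_{s+n}(k-n):=V_{s+n}v_{d,s+n}(k-n)$ gives (\ref{eq3-1}) for the gain $F$. Note that $V_{s+n}$ is block lower triangular with identity diagonal blocks, so it is invertible. Hence the column space of $\left[\begin{smallmatrix}M_{s}\\ N_{s}\end{smallmatrix}\right]$, and its rank, do not depend on $F$.

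\emph{Rank for $s\ge n$.} By this invariance it suffices to compute the rank for $F=F_{d}$. Write
\begin{equation*}
\left[\begin{array}{c} M_{s}\\ N_{s}\end{array}\right]=\left[\begin{array}{cc} \mathcal{O}^{F}_{u} & H_{u,v}\\ \mathcal{O}^{F}_{y} & H_{y,v}\end{array}\right]\left[\begin{array}{cc} \bar{\mathcal{C}}_{n} & 0\\ 0 & I\end{array}\right],
\end{equation*}
where $\bar{\mathcal{C}}_{n}=[A_{F}^{n-1}B\ \cdots\ B]$, and $\mathcal{O}^{F}_{u}$, $\mathcal{O}^{F}_{y}$ are the stacked observability-type blocks built from $(F,A_F)$ and $(C_F,A_F)$. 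Since $(A,B)$ is controllable and state feedback preserves controllability, $\bar{\mathcal{C}}_{n}$ has full row rank $n$. The right factor therefore has rank $n+sp$ and is surjective, so the rank of the product equals the rank of the $s(p+m)\times(n+sp)$ left factor.

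\emph{Key step: the left factor has full column rank $n+sp$.} Suppose $[x_{0};\bar v]$ lies in its kernel. The block rows are exactly the response of the closed loop $x^{+}=A_{F}x+Bv$, $u=Fx+v$, $y=C_{F}x+Dv$ from initial state $x_{0}$. A kernel element therefore produces $u_{s}=0$ and $y_{s}=0$. From $u\equiv0$ we get $v=-Fx$, so the state evolves by $x^{+}=Ax$ and the output is $y=Cx$. Thus $\mathcal{O}_{s}x_{0}=0$. Observability of $(C,A)$ together with $s\ge n$ gives $x_{0}=0$. Then $v=-Fx=0$ on the whole horizon, so $\bar v=0$ and the kernel is trivial. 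Consequently the rank is $sp+n$, which is (\ref{eq3-3a}).

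\emph{Main obstacle.} I expect the main care to be needed in justifying the factorization together with the surjectivity of $\bar{\mathcal{C}}_{n}$ (i.e.\ controllability of $(A_{F},B)$), and in using the minimality of the realization (\ref{eq2-10})--(\ref{eq2-11}) for both controllability and observability. The kernel argument itself is routine once the block rows are read as closed-loop responses.
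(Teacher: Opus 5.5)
Your proof is correct, and the representation part is the paper's argument: Lemma~\ref{Lemma3-1} for $F_d$ combined with Lemma~\ref{Lemma3-2}. You apply Lemma~\ref{Lemma3-2} in the opposite direction ($F_1=F_d$, $F_2=F$), so the new latent vector is simply $V_{s+n}v_{d,s+n}(k-n)$. The paper instead writes $[M;N]=[M_d;N_d]V$ and passes through a state-space realization of $\bar v=Vv$ with $x_V(0)=0$. Its windowed identity $\bar v_{s+n}(k-n)=V_{s+n}v_{s+n}(k-n)$ tacitly needs the state of $V$ to vanish at the start of the window, or else relies on the invertibility of $V_{s+n}$ to redefine the latent vector. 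Your direction makes this step purely algebraic.

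The rank part takes a genuinely different route. The paper specializes to $F=0$, where $H_{u,x}=0$, $H_{u,v}=I$, $H_{y,x}=\mathcal{O}_s\mathcal{C}_n$ and $H_{y,v}=\mathcal{T}_{s,s}(G)$. It reads off $\mathrm{rank}=sp+\mathrm{rank}(\mathcal{O}_s\mathcal{C}_n)=sp+n$ from the block-triangular structure, and transfers this to every $F$ through the invertibility of $V_{s+n}$. You instead factor out $\mathrm{diag}(\bar{\mathcal{C}}_n,I)$ and show the remaining factor has full column rank by undoing the feedback ($v=-Fx$) on a kernel element. That argument works for every $F$ directly, so your preliminary reduction to $F=F_d$ is not actually needed.

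The paper's route is shorter once Lemma~\ref{Lemma3-2} is available. Yours is self-contained and makes explicit why state feedback cannot change the rank: it is an invertible reparametrization of the input. With $F=0$ your argument collapses to the paper's computation, since the $u$-rows then force $\bar v=0$ and the $y$-rows reduce to $\mathcal{O}_s x_0=0$.
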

\begin{proof}
Consider the image representation of the system (\ref{eq2-10}%
)-(\ref{eq2-11}) given by (\ref{eq2-1a}), which, according to (\ref{eq3-2})
in Lemma \ref{Lemma3-2}, is further written into 
\begin{equation*}
\setlength{\abovedisplayskip}{4pt}\setlength{\belowdisplayskip}{4pt}\left[ 
\begin{array}{c}
M \\ 
N%
\end{array}%
\right] =\left[ 
\begin{array}{c}
M_{d} \\ 
N_{d}%
\end{array}%
\right] V\Longrightarrow \left[ 
\begin{array}{c}
u \\ 
y%
\end{array}%
\right] =\left[ 
\begin{array}{c}
M_{d} \\ 
N_{d}%
\end{array}%
\right] \bar{v},\bar{v}=Vv,
\end{equation*}%
where $\left( M_{d},N_{d}\right) $\ denote the image representation for $%
F=F_{d}.$ It follows from the state space model of $\bar{v}=Vv,$\ 
\begin{align*}
x_{V}(k+1)& =A_{F}x_{V}(k)+Bv(k),x_{V}(0)=0, \\
\bar{v}(k)& =\left( F-F_{d}\right) x_{V}(k)+v(k),
\end{align*}%
that $\bar{v}_{s+n}\left( k-n\right) =V_{s+n}v_{s+n}\left( k-n\right) .$\
Consequently, according to Lemmas \ref{Lemma3-1}-\ref{Lemma3-2}, we have 
\begin{equation*}
\setlength{\abovedisplayskip}{6pt}\setlength{\belowdisplayskip}{6pt}%
\begin{bmatrix}
\,u_{s}(k)\, \\ 
\,y_{s}(k)\,%
\end{bmatrix}%
=\left[ 
\begin{array}{c}
M_{d,s} \\ 
N_{d,s}%
\end{array}%
\right] \bar{v}_{s+n}\left( k-n\right) =\left[ 
\begin{array}{c}
M_{s} \\ 
N_{s}%
\end{array}%
\right] v_{s+n}\left( k-n\right) .
\end{equation*}%
Note that for $F=0,$ (\ref{eq3-1a}) is subject to 
\begin{equation*}
\setlength{\abovedisplayskip}{4pt}\setlength{\belowdisplayskip}{4pt}\left[ 
\begin{array}{cc}
H_{u,x} & \text{ }H_{u,v} \\ 
H_{y,x} & \text{ }H_{y,v}%
\end{array}%
\right] =\left[ 
\begin{array}{cc}
0 & \text{ }I \\ 
\mathcal{O}_{s}\mathcal{C}_{n} & \text{ }H_{y,v}%
\end{array}%
\right] \Longrightarrow \text{\textit{(\ref{eq3-3a}).}}
\end{equation*}%
Hence, according to Lemma \ref{Lemma3-2}, (\ref{eq3-3a}) holds for any $F.$
The proof is completed.
\end{proof}

An immediate result of Theorem \ref{Theorem 3-1} is that any input-output
data $\left( u_{s}(k),y_{s}(k)\right) $ belongs to the subspace $\mathcal{I}%
_{G,s},$ 
\begin{equation}
\setlength{\abovedisplayskip}{6pt}\setlength{\belowdisplayskip}{6pt} 
\mathcal{I}_{G,s}=\left\{ \left[ 
\begin{array}{c}
u_{s}(k) \\ 
y_{s}(k)%
\end{array}%
\right] ,\left[ 
\begin{array}{c}
u_{s}(k) \\ 
y_{s}(k)%
\end{array}%
\right] =\left[ 
\begin{array}{c}
M_{s} \\ 
N_{s}%
\end{array}%
\right] h\right\}  \label{eq3-4}
\end{equation}%
for $h\in \mathbb{R}^{\left( s+n\right) p}.$ $\mathcal{I}_{G,s}$ is called
image subspace.

\begin{corollary}
\label{Co3-1}Given the system (\ref{eq2-10})-(\ref{eq2-11}) with the input that
is persistently exciting of order $s+n$, then $\exists \mathcal{H}%
_{s+n}\left( v_{\left[ k_{p}+1:k_{p}+N\right] }\right) \in \mathbb{R}%
^{(s+n)p\times \left( N-s+1\right) },k_{p}=k_{0}-n$ such that%
\begin{gather}
T_{G,s}=\left[ 
\begin{array}{c}
M_{s} \\ 
N_{s}%
\end{array}%
\right] \mathcal{H}_{s+n}\left( v_{\left[ k_{p}+1:k_{p}+N\right] }\right) ,
\label{eq3-6} \\
\func{Im}\left( T_{G,s}\right) =\mathcal{B}_{G,s}=\mathcal{I}_{G,s}.
\label{eq3-6a}
\end{gather}
\end{corollary}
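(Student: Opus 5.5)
The plan is to build the factorization (\ref{eq3-6}) column by column from Theorem \ref{Theorem 3-1}, and then obtain (\ref{eq3-6a}) from a chain of inclusions closed by a rank count.

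\textbf{Step 1: the factorization (\ref{eq3-6}).} The data are generated by (\ref{eq2-10})-(\ref{eq2-11}) over $[k_0+1,k_0+N]$. Define a latent signal $v$ over $[k_p+1,k_p+N]$, with $k_p=k_0-n$, by $v(i)=u(i)-Fx(i)$, where $x$ is the state.

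\begin{itemize}
\item Taking $F=F_d$ in the construction behind Lemma \ref{Lemma3-1}, the $j$-th column $(u_s(k_0+j-1),y_s(k_0+j-1))$ of $T_{G,s}$ equals $\left[\begin{smallmatrix}M_s\\N_s\end{smallmatrix}\right]v_{s+n}(k_0+j-1-n)$.
\item The deadbeat property $A_{F_d}^n=0$ lets the $n$ preceding samples of $v$ reconstruct the state: $x(k+1)=\sum_{i=0}^{n-1}A_{F}^{n-1-i}Bv(k-n+1+i)$. This is precisely the block $H_{\cdot,x}$ in (\ref{eq3-1a}).
\item For a general $F$, pass through Lemma \ref{Lemma3-2}: replace the latent vectors by $V_{s+n}$ applied to them. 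This preserves the Hankel structure, because $\bar v=Vv$ is generated by an LTI system from a common latent trajectory.
\end{itemize}

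Stacking the columns gives (\ref{eq3-6}), with $\mathcal{H}_{s+n}(v_{[k_p+1:k_p+N]})$ of size $(s+n)p\times(N-s+1)$.

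There is an index issue to check. A Hankel matrix with depth $s+n$ on $N$ samples has $N-s-n+1$ columns, not $N-s+1$. So I would interpret $v$ on an extended window of $N+n$ samples (prepending $n$ samples to the past), or equivalently adopt the statement's convention. This is bookkeeping only.

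\textbf{Step 2: the inclusions.} Every column of $T_{G,s}$ lies in $\mathcal{I}_{G,s}$, so $\operatorname{Im}T_{G,s}\subseteq\mathcal{I}_{G,s}$. By Theorem \ref{Theorem 3-1}, $\mathcal{B}_{G,s}\subseteq\mathcal{I}_{G,s}$. Conversely, each vector $\left[\begin{smallmatrix}M_s\\N_s\end{smallmatrix}\right]h$ is the response of the system to a choice of $v$ over $s+n$ steps, with the state starting at the deadbeat point. Hence it is a trajectory, and $\mathcal{I}_{G,s}=\mathcal{B}_{G,s}$. It remains to show $\mathcal{I}_{G,s}\subseteq\operatorname{Im}T_{G,s}$.

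\textbf{Step 3: the main obstacle, a rank argument.} It suffices to show $\operatorname{rank}T_{G,s}=sp+n$, using (\ref{eq3-3a}) for $s\ge n$; for general $s$, compare against $\operatorname{rank}\left[\begin{smallmatrix}M_s\\N_s\end{smallmatrix}\right]$ directly. I would take $F=0$, so that $v=u$. Then $\mathcal{H}_{s+n}(v)=\mathcal{H}_{s+n}(u)$, which has full row rank $(s+n)p$ by persistent excitation of order $s+n$. With a full-row-rank right factor, Sylvester's inequality gives
\[
\operatorname{rank}T_{G,s}=\operatorname{rank}\left[\begin{smallmatrix}M_s\\N_s\end{smallmatrix}\right],
\]
and therefore $\operatorname{Im}T_{G,s}=\mathcal{I}_{G,s}$.

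For a general $F$, the relation $\left[\begin{smallmatrix}M_{1,s}\\N_{1,s}\end{smallmatrix}\right]=\left[\begin{smallmatrix}M_{2,s}\\N_{2,s}\end{smallmatrix}\right]V_{s+n}$ holds with $V_{s+n}$ block lower triangular with identity diagonal, hence invertible. So $\mathcal{I}_{G,s}$ does not depend on $F$, and the conclusion carries over.

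The delicate point is the $F=0$ case. There $\left[\begin{smallmatrix}0&I\\\mathcal{O}_s\mathcal{C}_n&H_{y,v}\end{smallmatrix}\right]$ uses the past $n$ inputs to represent the state at time $k$. This representation is valid only after $A^n$ terms vanish, which holds for $F_d$ but not for $F=0$ in general. So I would carry out the rank step in the following order:

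\begin{enumerate}
\item Take $v$ defined via $F_d$.
\item Observe that $\mathcal{H}_{s+n}(v)$ and $\mathcal{H}_{s+n}(u)$ are related by invertible block-triangular Toeplitz-type transformations, which preserve full row rank, up to an initial-state contribution.
\item Control the initial-state contribution by the standard Willems argument that persistent excitation of order $s+n$ yields full row rank of $\left[\begin{smallmatrix}\mathcal{H}_{s+n}(u)\\ x\text{-row}\end{smallmatrix}\right]$ under minimality.
\end{enumerate}

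This last step, making $\mathcal{H}_{s+n}(v)$ full row rank from persistent excitation of $u$ alone, is where I expect the real work, and it is where minimality (controllability) enters.
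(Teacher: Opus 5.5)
\paragraph{Where Step 3 fails.} Your Steps~1 and~2 follow the paper's route. Equation (\ref{eq3-6}) is read off column by column from Theorem~\ref{Theorem 3-1}. With $F=F_d$ the latent windows of consecutive columns are shifts of one signal $v=u-F_dx$, so the right factor is Hankel; your remark that this needs $N+n$ samples of $v$ is correct. The identity $\mathcal{B}_{G,s}=\mathcal{I}_{G,s}$ then follows from Theorem~\ref{Theorem 3-1} plus the deadbeat construction.

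The paper finishes in one line: it cites Lemma~\ref{Lemma2-1}, whose hypothesis is exactly excitation of order $s+n$, to get $\operatorname{Im}(T_{G,s})=\mathcal{B}_{G,s}$. Your Step~3 instead re-derives this inclusion, and the plan you give does not work.

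You rightly discard the $F=0$ shortcut. The replacement, however, relies on excitation of order $s+n$ making $\left[\begin{smallmatrix}\mathcal{H}_{s+n}(u)\\ X\end{smallmatrix}\right]$ full row rank, and that is false. The state-space form of Willems' lemma gives full row rank of $\left[\begin{smallmatrix}\mathcal{H}_{L}(u)\\ X\end{smallmatrix}\right]$ under excitation of order $L+n$, so depth $s+n$ would need order $s+2n$. A concrete failure: take $n=p=s=1$ and a sampled sinusoid, which is exciting of order exactly $2$. In steady state the row $x(t)$ lies in the span of the rows $u(t),u(t+1)$, so the rank is $2<3$.

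The target itself is also the wrong one. For $F=F_d$ one has $\left[\begin{smallmatrix}M_{s}\\ N_{s}\end{smallmatrix}\right]=\Phi\,\mathrm{diag}(\mathcal{C}_{n,F_d},I_{sp})$. Here $\mathcal{C}_{n,F_d}=[A_{F_d}^{n-1}B\ \cdots\ B]$, and $\Phi$ maps $(x(k+1),v_s(k))$ to the data window. The $n(p-1)$ directions of $\ker\mathcal{C}_{n,F_d}$ in the past latent samples therefore never reach the data. So full row rank of $\mathcal{H}_{s+n}(v)$ is not needed, and it is not implied by the hypothesis when $p>1$. The paper itself only assumes it, as an extra condition, in the identifiability remark after the corollary.

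\paragraph{How to repair it.} Either cite Lemma~\ref{Lemma2-1} as the paper does, or compress before counting ranks, as follows.
\begin{enumerate}
\item Write $T_{G,s}=\Phi\left[\begin{smallmatrix}X\\ \mathcal{H}_s(v_{[k_0+1:k_0+N]})\end{smallmatrix}\right]$ with $X=[x(k_0+1)\ \cdots\ x(k_0+N-s+1)]$.
\item Since $v=u-F_dx$ along the trajectory, this stacked matrix equals an invertible block lower-triangular matrix times $\left[\begin{smallmatrix}X\\ \mathcal{H}_s(u_{[k_0+1:k_0+N]})\end{smallmatrix}\right]$.
\item The latter has rank $sp+n$ by Willems' lemma at depth $s$: controllability plus excitation of order $s+n$, on exactly the window in the hypothesis.
\item Hence $\operatorname{Im}T_{G,s}=\operatorname{Im}\Phi=\operatorname{Im}\left[\begin{smallmatrix}M_s\\ N_s\end{smallmatrix}\right]$, because $\mathcal{C}_{n,F_d}$ is onto $\mathbb{R}^n$.
\end{enumerate}

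\paragraph{A correction to Step 1.} Drop the claim that $V_{s+n}$ preserves Hankel structure for general $F$. The windowed identity $\bar v_{s+n}(k-n)=V_{s+n}v_{s+n}(k-n)$ picks up a term driven by $x(k-n+1)$, because the state of $V$ is not reset at each window. Fix $F=F_d$ for (\ref{eq3-6}), and use invertibility of $V_{s+n}$ only to see that $\mathcal{I}_{G,s}$ does not depend on $F$.
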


\begin{proof}
Equation (\ref{eq3-6}) is attributed to Theorem \ref{Theorem
3-1}, while (\ref{eq3-6a}) follows from Lemma \ref{Lemma2-1}, Theorem \ref%
{Theorem 3-1} and (\ref{eq3-6}).
\end{proof}

As highlighted in \citep{markovsky2021}, the original proof of the
fundamental lemma in \textit{\citep{WILLEMS2005} "is based on a kernel
representation of the system and uses the notion of annihilators of the
behavior}". Also the proof with a state space representation in %
\citep{Waarde2020} is based on the concept of annihilators of the behavior.
Particularly, quoting from \citep{markovsky2021}: "\textit{Currently there
is no constructive proof that gives an intuition why the additional
persistency of excitation is needed nor how conservative the conditions are}%
." The results in Lemmas \ref{Lemma3-1}-\ref{Lemma3-2}, Theorem \ref{Theorem
3-1} and Corollary \ref{Co3-1} suggest a constructive proof of the
fundamental lemma on the basis of system image representation and an
alternative interpretation in the context of system factorization and the
associated image subspace. Specifically, the matrix $\mathcal{H}_{s+n}\left(
v_{\left[ k_{p}+1:k_{p}+N\right] }\right) $ of the latent vector implies the
hidden (persistent) excitation that enables the system input-output data set 
$T_{G,s}$ to comprise full information of the system image representation $%
\left( M_{s},N_{s}\right) .$ Notice that (\ref{eq3-6})-(\ref{eq3-6a}) mean
\begin{equation}
\setlength{\abovedisplayskip}{0pt}\setlength{\belowdisplayskip}{0pt}\func{Im}%
\left( \hspace{-2pt}\left[ 
\begin{array}{c}
M_{s} \\ 
N_{s}%
\end{array}%
\right] \hspace{-2pt}\mathcal{H}_{s+n}\left( v_{\left[ k_{p}+1:k_{p}+N\right]
}\right) \hspace{-2pt}\right) \hspace{-2pt}=\hspace{-2pt}\func{Im}\left( 
\hspace{-2pt}\left[ 
\begin{array}{c}
M_{s} \\ 
N_{s}%
\end{array}%
\right] \hspace{-2pt}\right) .  \label{eq3-6b}
\end{equation}%
Equation (\ref{eq3-6b}) provides us with a necessary and sufficient
condition for the required excitation. It can be interpreted as thoroughly
exciting the system image subspace. It is noteworthy that, if $\mathcal{H}%
_{s+n}\left( v_{\left[ k_{p}+1:k_{p}+N\right] }\right) $ is persistently
exciting of order $s+n,$ i.e. 
\begin{equation*}
\setlength{\abovedisplayskip}{6pt}\setlength{\belowdisplayskip}{6pt}%
rank\left( \mathcal{H}_{s+n}\left( v_{\left[ k_{p}+1:k_{p}+N\right] }\right)
\right) =\left( s+n\right) p,
\end{equation*}%
the image representation $\left( M_{s},N_{s}\right) $ is identifiable on the
basis of the system input-output data set, namely 
\begin{align*}
\left[ 
\begin{array}{c}
M_{s} \\ 
N_{s}%
\end{array}%
\right] & =T_{G,s}\mathcal{H}_{s+n}^{T}\left( v_{\left[ k_{p}+1:k_{p}+N%
\right] }\right) \\
& \times \left( \mathcal{H}_{s+n}\left( v_{\left[ k_{p}+1:k_{p}+N\right]
}\right) \mathcal{H}_{s+n}^{T}\left( v_{\left[ k_{p}+1:k_{p}+N\right]
}\right) \right) ^{-1}.
\end{align*}%
In fact, in this case, there exists a matrix $\Pi \in \mathbb{R}^{\left(
N-s+1\right) \times \left( N-s+1\right) }$ s.t. 
\begin{equation*}
\setlength{\abovedisplayskip}{6pt}\setlength{\belowdisplayskip}{6pt}\mathcal{%
H}_{s+n}\left( u_{\left[ k_{0}+1:k_{0}+N\right] }\right) =\mathcal{H}%
_{s+n}\left( \left( v_{\left[ k_{p}+1:k_{p}+N\right] }\right) \right) \Pi .
\end{equation*}%
The image representation (\ref{eq3-1}) and the associated equivalence (\ref%
{eq3-6})\textit{\ }provide alternative representation forms of system
dynamics, which build the core of our subsequent work.

\begin{definition}
\label{Def3-1} Given the LTI system (\ref{eq2-10})-(\ref{eq2-11}), the
input-output reprsentations (\ref{eq3-1}) and (\ref{eq3-6a}) are called
finite sample image representation and data-driven image representation,
respectively. 
\end{definition}

\subsection{Residual subspace and extension of fundamental lemma}

The system image representation and also fundamental lemma model the system
nominal dynamics exclusively. On the other hand, addressing system
uncertainties including faults is a necessary step to perform fault
detection. As illustrated in Section 2, in the model-based framework,
residual generation is the first step for a reliable fault detection. The
system kernel representation (\ref{eq2-3}) serves as the core of any
observer-based residual generator, and associated with it, fault detection
is achieved in the residual subspace. The objective of this subsection is
twofold. Firstly, dynamics of LTI systems with uncertainties are modelled in
the space of finite sample signals. In this regard, the fundamental lemma is
extended. Secondly, existence conditions of kernel representation and
residual subspace are examined.

Consider the system model (\ref{eq2-20})-(\ref{eq2-21}) and the associated
kernel-based model (\ref{eq2-25}). Particularly, for systems with the
process and sensor noises as uncertainties modelled by (\ref{eq2-20a})-(\ref%
{eq2-21b}), when $L$ is set as the Kalman filter gain $L_{K}$, 
\begin{align*}
L_{K}& =\left( APC^{T}+S_{\omega v}\right) \Sigma _{r}^{-1},P=APA^{T}+\Sigma
_{\omega }, \\
\Sigma _{r}& =CPC^{T}+\Sigma _{\upsilon }=\mathbb{E}\left(
r(k)r^{T}(k)\right) ,
\end{align*}%
the residual $r(k)\sim \mathcal{N}(0,\Sigma _{r})$ is an innovation series
and the system (\ref{eq2-25}) is known as innovation model %
\citep{Huang_2008_book}. The following lemma is essential for our subsequent
work.

\begin{lemma}
\label{Lemma3-3}Given the kernel-based model (\ref{eq2-25}), it holds, for
some $v_{s+n}(k-n),$ 
\begin{gather}
\left[ 
\begin{array}{c}
u_{s}(k) \\ 
y_{s}(k)%
\end{array}%
\right]\hspace{-2pt} =\hspace{-2pt}\left[ 
\begin{array}{c}
M_{s} \\ 
N_{s}%
\end{array}%
\right] v_{s+n}(k\hspace{-2pt}-\hspace{-2pt}n)\hspace{-2pt}+\hspace{-2pt}\left[ 
\begin{array}{c}
\hat{Y}_{s} \\ 
\hat{X}_{s}%
\end{array}%
\right] \hspace{-2pt}r_{s+n}(k\hspace{-2pt}-\hspace{-2pt}n)  \label{eq3-8} \\
\left[ 
\begin{array}{c}
\hat{Y}_{s} \\ 
\hat{X}_{s}%
\end{array}%
\right] =\left[ 
\begin{array}{cc}
H_{u,\hat{x}} & \text{ }H_{u,r,F_{d}} \\ 
H_{y,\hat{x}} & \text{ }H_{y,r,F_{d}}%
\end{array}%
\right] , \\
H_{u,\hat{x}}=\left[ 
\begin{array}{c}
F_{d} \\ 
\vdots  \\ 
F_{d}A_{F_{d}}^{s-1}%
\end{array}%
\right] \left[ 
\begin{array}{ccc}
A_{F_{d}}^{n-1}L & \cdots  & \text{ }L%
\end{array}%
\right] \in \mathbb{R}^{sp\times nm},  \notag \\
H_{u,r,F_{d}}=\left[ 
\begin{array}{cccc}
0 & 0 & \cdots  & \text{ }0 \\ 
F_{d}L & 0 & \ddots  & \vdots  \\ 
\vdots  & \ddots  & \ddots  & \text{ }0 \\ 
F_{d}A_{F_{d}}^{s-2}L & \cdots  & F_{d}L & \text{ }0%
\end{array}%
\right] \in \mathbb{R}^{sp\times sm},  \notag \\
H_{y,\hat{x}}=\left[ 
\begin{array}{c}
C_{F_{d}} \\ 
\vdots  \\ 
C_{F_{d}}A_{F_{d}}^{s-1}%
\end{array}%
\right] \left[ 
\begin{array}{ccc}
A_{F_{d}}^{n-1}L & \cdots  & \text{ }L%
\end{array}%
\right] \in \mathbb{R}^{sm\times nm},  \notag \\
H_{y,r,F_{d}}=\left[ 
\begin{array}{cccc}
I & 0 & \cdots  & \text{ }0 \\ 
C_{F_{d}}L & \text{ }I & \ddots  & \vdots  \\ 
\vdots  & \ddots  & \ddots  & \text{ }0 \\ 
C_{F_{d}}A_{F_{d}}^{s-2}L & \cdots  & C_{F_{d}}L & \text{ }I%
\end{array}%
\right] \in \mathbb{R}^{sm\times sm}, \notag
\end{gather}%
where $\left( M_{s},N_{s}\right) $ are defined by (\ref{eq3-2}) for any $F,$
and $F_{d}$ is the deadbeat gain matrix, i.e. all eigenvalues of matrix $%
A_{F_{d}}=A+BF_{d}$ are equal to zero. 
\end{lemma}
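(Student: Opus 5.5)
Write the kernel-based model (\ref{eq2-25}) as a closed loop with the deadbeat feedback $u(k)=F_{d}\hat{x}(k)+\bar{v}(k)$, with $\bar{v}$ the new latent input. Then $\hat{x}(k+1)=A_{F_{d}}\hat{x}(k)+B\bar{v}(k)+Lr(k)$ and $y(k)=C_{F_{d}}\hat{x}(k)+D\bar{v}(k)+r(k)$, where $A_{F_{d}}^{n}=0$. This is exactly the construction behind Lemma \ref{Lemma3-1}, except that the state is now also driven by $Lr$ and $y$ has the extra direct term $r$.

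Iterating the recursion backwards $n$ steps from time $k+1$ gives
\[
\hat{x}(k+1)=\sum_{j=0}^{n-1}A_{F_{d}}^{j}\left( B\bar{v}(k-j)+Lr(k-j)\right),
\]
since the term $A_{F_{d}}^{n}\hat{x}(k+1-n)$ vanishes. For $i=1,\dots,s$, propagate $\hat{x}(k+i)=A_{F_{d}}^{i-1}\hat{x}(k+1)+\sum_{l<i}A_{F_{d}}^{i-1-l}(B\bar{v}(k+l)+Lr(k+l))$. Substituting into $u=F_{d}\hat{x}+\bar{v}$ and $y=C_{F_{d}}\hat{x}+D\bar{v}+r$ and stacking over $i$ produces the following.
\begin{itemize}
\item The $\bar{v}$-part is $\left[\begin{smallmatrix}M_{d,s}\\ N_{d,s}\end{smallmatrix}\right]\bar{v}_{s+n}(k-n)$, exactly as in Lemma \ref{Lemma3-1}.
\item The $r$-part has past block $\mathcal{O}$-type factors times $[A_{F_{d}}^{n-1}L\ \cdots\ L]$. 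This gives $H_{u,\hat{x}}$ and $H_{y,\hat{x}}$.
\item Its Toeplitz future blocks are $H_{u,r,F_{d}}$ and $H_{y,r,F_{d}}$. Their diagonals are $0$ and $I$, because $u$ has no feedthrough from $r$ and $y$ has feedthrough $I$.
\end{itemize}
The $r$-terms are linear in $r_{s+n}(k-n)$ and independent of the choice of latent parametrization of $\bar v$. Hence $\left[\begin{smallmatrix}\hat{Y}_{s}\\ \hat{X}_{s}\end{smallmatrix}\right]$ is naturally expressed with $F_{d}$.

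To pass from $(M_{d,s},N_{d,s})$ to $(M_{s},N_{s})$ for arbitrary $F$, use Lemma \ref{Lemma3-2} exactly as in the proof of Theorem \ref{Theorem 3-1}. Apply (\ref{eq3-3}) with $F_{1}=F$ and $F_{2}=F_{d}$, which gives $\left[\begin{smallmatrix}M_{s}\\ N_{s}\end{smallmatrix}\right]=\left[\begin{smallmatrix}M_{d,s}\\ N_{d,s}\end{smallmatrix}\right]V_{s+n}$. The matrix $V_{s+n}$ is lower block-triangular with identity diagonal, hence invertible. Define $v_{s+n}(k-n)=V_{s+n}^{-1}\bar{v}_{s+n}(k-n)$; then the $\bar v$-term equals $\left[\begin{smallmatrix}M_{s}\\ N_{s}\end{smallmatrix}\right]v_{s+n}(k-n)$. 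This yields (\ref{eq3-8}). The statement only requires existence of some $v_{s+n}(k-n)$, and the invertibility of $V_{s+n}$ settles this without needing the causal state-space interpretation of $V^{-1}$.

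The main obstacle is bookkeeping rather than concept. One must check that the index alignment of the $r$-blocks matches the displayed $H$ matrices: the column order $A_{F_{d}}^{n-1}L,\dots,L$ against $r(k-n+1),\dots,r(k)$, and the window $r_{s+n}(k-n)$ covering $r(k-n+1)$ through $r(k+s)$. One must also confirm that $\hat{x}$ at the start of the window is fully determined by the $n$ past latent samples, which is exactly where $A_{F_{d}}^{n}=0$ is used. I would verify the first two block rows explicitly and argue the rest by induction on $i$.
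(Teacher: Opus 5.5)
Your proposal is correct and follows the paper's proof: the same deadbeat change of variables $u=F_{d}\hat{x}+\bar{v}$ in the kernel-based model, with the $\bar{v}$-response identified as $(M_{d,s},N_{d,s})$ and rewritten for arbitrary $F$ via Lemma~\ref{Lemma3-2}, using $v_{s+n}(k-n)=V_{s+n}^{-1}\bar{v}_{s+n}(k-n)$. The one difference is that you carry out the $r$-response bookkeeping explicitly, using $A_{F_{d}}^{n}=0$ to obtain $H_{u,\hat{x}}$ and $H_{y,\hat{x}}$ and deriving the Toeplitz blocks with diagonals $0$ and $I$; the paper only defers this step to the proof of Lemma~\ref{Lemma3-1} in the cited reference, and your index alignment checks out.
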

\begin{proof}
It follows from (\ref{eq2-4}) and the state space
realizations (\ref{eq2-1a}) and (\ref{eq2-2}) that for $u=F_{d}\hat{x}+\bar{v%
}$\ 
\begin{align*}
\hat{x}(k+1)& =A_{F_{d}}\hat{x}(k)+B\bar{v}(k)+Lr(k), \\
\left[ 
\begin{array}{c}
u(k) \\ 
y(k)%
\end{array}%
\right] & =\left[ 
\begin{array}{c}
F_{d} \\ 
C_{F_{d}}%
\end{array}%
\right] \hat{x}(k)+\left[ 
\begin{array}{c}
I \\ 
D%
\end{array}%
\right] v(k)+\left[ 
\begin{array}{c}
0 \\ 
I%
\end{array}%
\right] r(k).
\end{align*}%
Hence, according to Theorem \ref{Theorem 3-1}, the response of $\left(
u_{s}(k),y_{s}(k)\right) $\ to $\bar{v}$ can be described by, for any $F,$\ 
\begin{align*}
\left[ 
\begin{array}{c}
M_{d,s} \\ 
N_{d,s}%
\end{array}%
\right] \bar{v}_{s+n}(k-n)& =\left[ 
\begin{array}{c}
M_{s} \\ 
N_{s}%
\end{array}%
\right] v_{s+n}(k-n), \\
v_{s+n}(k-n)& =V_{s+n}^{-1}\bar{v}_{s+n}(k-n).
\end{align*}%
Analogous it, we obtain $\left[ 
\begin{array}{c}
\hat{Y}_{s} \\ 
\hat{X}_{s}%
\end{array}%
\right] r_{s+n}(k-n)$\ as the response to the residual $r,$\ whose proof can
be performed along the lines of the proof of Lemma \ref{Lemma3-1} given in %
\citep{Ding_automatica2014}. This completes the proof.
\end{proof}

Recall that for two different observer gain matrices, $L=L_{1},L=L_{2}\neq
L_{1},$ the observer-based residual generator $r=\hat{M}y-\hat{N}u$ satisfies%
\begin{align}
r& =\hat{M}_{2}y-\hat{N}_{2}u=R\left( \hat{M}_{1}y-\hat{N}_{1}u\right) ,
\label{eq3-23} \\
R& =\left( A_{L_{2}},L_{1}-L_{2},C,I\right) ,  \label{eq3-24}
\end{align}%
where $R$ is an invertible postfilter \citep{Ding2008}. It is of theoretic
interest to examine the change of $\left( \hat{X}_{s},\hat{Y}_{s}\right) $
caused by variations of $L$ and $F.$ The results are described in the
following lemma.

\begin{lemma}
\label{Lemma3-4}Given the kernel-based model (\ref{eq2-25}), and suppose
that $(\hat{X}_{s,i},\hat{Y}_{s,i})$ and $\left( M_{s,i},N_{s,i}\right) $
are constructed according to (\ref{eq3-8}) and (\ref{eq3-1a}), respectively,
for $L=L_{i},F=F_{i},i=1,2,$ Then, it holds 
\begin{gather}
\begin{bmatrix}
\hat{Y}_{s,1}\, \\ 
\hat{X}_{s,1}\,%
\end{bmatrix}%
=\left[ 
\begin{array}{cc}
M_{s,2} & \hat{Y}_{s,2} \\ 
N_{s,2} & \hat{X}_{s,2}%
\end{array}%
\right] \left[ 
\begin{array}{c}
\bar{R}_{s+n} \\ 
R_{s+n}%
\end{array}%
\right] ,  \label{eq3-50} \\
R_{s+n}=\mathcal{T}_{s+n,s+n}\left( R\right) ,R_{k}=\left\{ 
\begin{array}{l}
CA_{L_{2}}^{k-1}(L_{1}-L_{2}),k\geq 1 \\ 
I,k=0 \\ 
0,k<0,%
\end{array}%
\right.   \notag \\
\bar{R}_{s+n}=\bar{R}_{s+n}^{1}+\bar{R}_{s+n}^{2}=\mathcal{T}%
_{s+n,s+n}\left( \bar{R}\right) ,\bar{R}=\bar{R}^{1}+\bar{R}^{2},  \notag \\
\bar{R}_{k}^{1}=\left\{ 
\begin{array}{l}
(F_{1}-F_{2})A_{F_{1}}^{k-1}L_{1},k\geq 1 \\ 
0,k\leq 0,%
\end{array}%
\right.  \\
\bar{R}_{k}^{2}=\left\{ 
\begin{array}{l}
F_{2}A_{L_{2}}^{k-1}(L_{1}-L_{2}),k\geq 1 \\ 
0,k\leq 0,%
\end{array}%
\right.  \\
\left( \mathcal{T}_{s+n,s+n}\left( R\right) \right) _{i,j}=R_{i-j},\left( 
\mathcal{T}_{s+n,s+n}\left( \bar R\right) \right) _{i,j}=\bar{R}_{i-j}.  \notag
\end{gather}
\end{lemma}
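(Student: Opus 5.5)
Our plan is to first establish the transfer-function (operator) version of (\ref{eq3-50}) and then pass to finite-sample Toeplitz matrices, following the same two-step pattern as in the proof of Lemma \ref{Lemma3-2}. For a fixed input-output pair $(u,y)$ generated by the kernel-based model, (\ref{eq2-4}) holds for both parameter sets. We write
\[
\left[\begin{array}{c} u \\ y \end{array}\right] = I_{G,1}v_1 + I_{C,1}r_1 = I_{G,2}v_2 + I_{C,2}r_2 .
\]
By (\ref{eq3-23})-(\ref{eq3-24}) we have $r_2 = R r_1$. Applying the Bezout identity (\ref{eq2-6}) for the second pair, we obtain $v_2 = X_2 u + Y_2 y$. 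Substituting $u,y$ from the first representation gives $v_2 = X_2 I_{G,1}v_1 + X_2 I_{C,1}r_1$. The $v_1$-part is $V v_1$, as in Lemma \ref{Lemma3-2}. The part of interest is the $r_1$-part, namely $X_2 I_{C,1}$. The aim is therefore the operator identity $I_{C,1} = I_{G,2}\bar{R} + I_{C,2}R$ with $\bar{R}=X_2 I_{C,1}$ and $R=\hat{M}_2 I_{C,1}$. This identity follows directly from the second Bezout identity $I_{G,2}X_2 + I_{C,2}\hat{M}_2 = I$ (the block form $[I_G\; I_C][X;\,-\hat N \;\; \hat M]$ should be used with the paper's sign conventions).

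The second step computes state-space realizations of $\bar{R}=X_2 I_{C,1}$ and $R=\hat{M}_2 I_{C,1}$. The goal is to verify that their Markov parameters are $(F_1-F_2)A_{F_1}^{k-1}L_1 + F_2A_{L_2}^{k-1}(L_1-L_2)$ and $CA_{L_2}^{k-1}(L_1-L_2)$, respectively. For $R$ this is (\ref{eq3-24}) once one checks that $\hat{M}_2 I_{C,1}$ reduces to $(A_{L_2},L_1-L_2,C,I)$. This uses the cascade realization and a similarity transformation (state difference) that cancels the uncontrollable/unobservable modes, exactly as in the standard proof of (\ref{eq3-23}). For $\bar{R}$, we would form the cascade of $X_2=(A_{L_2},[-B_{L_2}\;-L_2],F_2,[I\;0])$ with $I_{C,1}$ (realized with $A_{F_1}$, input $L_1$, outputs $F_1$ and $C_{F_1}$). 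We then change coordinates to $(x_1,\,x_2-x_1)$ so that the state matrix becomes block triangular with diagonal blocks $A_{F_1}$ and $A_{L_2}$. The resulting impulse response should split into the two stated terms $\bar{R}^1$ and $\bar{R}^2$. Note that the feedthrough vanishes because $X_2$ has zero $y$-feedthrough and $I_{C,1}$ has zero $u$-feedthrough, which is consistent with $\bar{R}_0=0$.

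The third step passes to finite samples. All systems are causal LTI, so the composition of systems corresponds to the product of lower-triangular block Toeplitz matrices. Applying the finite-sample maps of Lemma \ref{Lemma3-3} and Theorem \ref{Theorem 3-1}, with the $n$-step look-back handled as in the proof of Lemma \ref{Lemma3-2} via $\mathcal{T}_{s,s+n}$ and the identity (\ref{eq3-5}), we would show $\hat{Y}_{s,1} = M_{s,2}\bar{R}_{s+n} + \hat{Y}_{s,2}R_{s+n}$ and likewise for $\hat{X}_{s,1}$. Concretely, we would expand the product entries, as the paper does for $\bar{M}_k$, and use (\ref{eq3-5}) together with its observer analogue $A_{L_2}-A_{L_1} = (L_1-L_2)C$ to telescope the sums.

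The main obstacle is the truncation. The finite-sample matrices of Lemma \ref{Lemma3-3} are built with the deadbeat gain $F_d$, whose initial-state block relies on $A_{F_d}^n=0$, so the factorization must be checked to be exact on the window rather than only asymptotically. The plan is to reduce everything to $F_d$ via Lemma \ref{Lemma3-2} (the factor $V_{s+n}$) and then verify the Toeplitz product entrywise by routine computation. This entrywise verification is where most of the work lies.
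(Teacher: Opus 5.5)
Your route is correct and genuinely different from the paper's. The paper stays at the matrix level and verifies (\ref{eq3-50}) entrywise in two stages:
\begin{itemize}
\item it first varies $L$ with $F=F_2$ fixed, which produces $R_{s+n}$ and $\bar{R}^2_{s+n}$ via $A_F^k-A_L^k=\sum_{i=0}^{k-1}A_F^{k-1-i}(BF+LC)A_L^i$;
\item it then varies $F$ with $L=L_1$ fixed, which produces $\bar{R}^1_{s+n}$ via (\ref{eq3-5});
\item finally it adds the two results.
\end{itemize}
You obtain both terms at once from the Bezout identity $I_{G,2}[X_2\;\,Y_2]+I_{C,2}[-\hat{N}_2\;\,\hat{M}_2]=I$. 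Here your $X_2$ and $\hat{M}_2$ should be read as the rows $[X_2\;\,Y_2]$ and $[-\hat{N}_2\;\,\hat{M}_2]$, and $I_C=[-\hat{Y};\,\hat{X}]$ as in Section 2; its Markov parameters are those of the paper's $[\hat{Y}_s;\,\hat{X}_s]$.

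Your realization step works. In the coordinates $(\hat{x}_1,e)$ with $e=\hat{x}_1-\hat{x}_2$, the cascade for $\bar{R}$ is in fact block diagonal:
\begin{itemize}
\item $\hat{x}_1^+=A_{F_1}\hat{x}_1+L_1r_1$ and $e^+=A_{L_2}e+(L_1-L_2)r_1$;
\item the output is $(F_1-F_2)\hat{x}_1+F_2e$, which gives exactly $\bar{R}^1+\bar{R}^2$.
\end{itemize}
For $R$ the output is $Ce+r_1$, so the $\hat{x}_1$-mode is unobservable and $R=(A_{L_2},L_1-L_2,C,I)$. Nothing here needs stability, so it also covers the choice $(F_2,L_2)=(0,0)$ used in Theorem \ref{Theo3-2}. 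What your route buys is an explanation of the formula: $\bar{R}$ and $R$ are the $v$- and $r$-coordinates of $I_{C,1}$ relative to the second factorization, just as $V=[X_2\;\,Y_2]I_{G,1}$ in Lemma \ref{Lemma3-2}. It also replaces the paper's long telescoping with a short state-space computation.

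You have, however, misidentified the main obstacle: there is no truncation issue. The matrix $\mathcal{T}_{s,s+n}(G)$ is just the last $s$ block rows of the lower block-triangular $\mathcal{T}_{s+n,s+n}(G)$, so
\[
\left(\mathcal{T}_{s,s+n}(G)\,\mathcal{T}_{s+n,s+n}(H)\right)_{l,j}=\sum_{t=j}^{n+l}G_{n+l-t}H_{t-j}=\sum_{q=0}^{n+l-j}G_{n+l-j-q}H_{q}.
\]
This holds because the admissible range $j\leq t\leq n+l$ always lies inside $1\leq t\leq s+n$. The product is therefore exactly $\mathcal{T}_{s,s+n}$ of the convolved sequence. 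Reading off the $z^{-k}$ coefficients, $k=n+l-j$, of $I_{C,1}=I_{G,2}\bar{R}+I_{C,2}R$ gives (\ref{eq3-50}) immediately, with no entrywise telescoping. The matrices in the lemma are the Toeplitz-defined ones for arbitrary $(F_i,L_i)$, so the deadbeat property $A_{F_d}^n=0$ never enters.

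Your proposed detour of reducing everything to $F_d$ via Lemma \ref{Lemma3-2} would also not work as stated. Lemma \ref{Lemma3-2} only relates the $(M_s,N_s)$ blocks. Moving $(\hat{X}_s,\hat{Y}_s)$ from a general $F$ to $F_d$ is precisely the $F$-variation half of the lemma you are proving. Drop that step and use the Toeplitz-product observation instead; your argument is then complete.
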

\begin{proof}
The proof is performed in two steps, (i) varying $%
L $\ with a fixed $F,$\ (ii) fixing $L,$\ varying $F.$\ The identity 
\begin{equation}
A_{F}^{k}-A_{L}^{k}=\dsum\limits_{i=0}^{k-1}A_{F}^{k-1-i}\left( BF+LC\right)
A_{L}^{i},  \label{eq5-0}
\end{equation}%
plays a key role in our proof, which can be easily proven by induction
method and noticing $BF+LC=A_{F}-A_{L}.$\ We begin with the first step and
introduce notation%
\begin{gather*}
\hat{Y}_{s,i}=\mathcal{T}_{s,s+n}\left( Y^{i}\right) ,Y_{k}^{i}=\left\{ 
\begin{array}{l}
FA_{F}^{k-1}L_{i},k\geq 1 \\ 
0,k\leq 0,%
\end{array}%
\right. \\
\hat{X}_{s,i}=\mathcal{T}_{s,s+n}\left( X^{i}\right) ,X_{k}^{i}=\left\{ 
\begin{array}{l}
C_{F}A_{F}^{k-1}L_{i},k\geq 1 \\ 
I,k=0 \\ 
0,k<0,%
\end{array}%
\right. \\
\left( \mathcal{T}_{s,s+n}\left( Y^{i}\right) \right)
_{l,j}=Y_{n+l-j}^{i},\left( \mathcal{T}_{s,s+n}\left( X^{i}\right) \right)
_{l,j}=X_{n+l-j}^{i}.
\end{gather*}%
To be proven is $\hat{Y}_{s,2}R_{s+n}=\hat{Y}_{s,1}-M_{s,2}\bar{R}%
_{s+n}^{2}, $\ for $L=L_{i},i=1,2,$\ and fixing $F$ $=F_{2}.$\ Consider the
Toeplitz matrix of $\hat{Y}_{s,2}R_{s+n},$\ which, by routine computation
according to Lemma \ref{Lemma3-3} and definition of $R_{s+n}$, is given by 
\begin{gather*}
\hat{Y}_{s,2}R_{s+n}=\mathcal{T}_{s,s+n}\left( \bar{Y}\right) ,\left( 
\mathcal{T}_{s,s+n}\left( \bar{Y}\right) \right) _{l,j}=\bar{Y}_{n+l-j}, \\
\bar{Y}_{k}=\left\{ 
\begin{array}{l}
F_{2}\left( 
\begin{array}{c}
\dsum\limits_{l=0}^{k-2}A_{F_{2}}^{l}L_{2}CA_{L_{2}}^{k-2-l}(L_{1}-L_{2}) \\ 
+A_{F_{2}}^{k-1}L_{2}%
\end{array}%
\right) ,k\geq 1 \\ 
0,k\leq 0.%
\end{array}%
\right.
\end{gather*}%
\ Attributed to (\ref{eq5-0}), 
\begin{gather}
\dsum\limits_{l=0}^{k-2}A_{F_{2}}^{l}L_{2}CA_{L_{2}}^{k-2-l}(L_{1}-L_{2})= 
\notag \\
\left(
A_{F_{2}}^{k-1}-A_{L_{2}}^{k-1}-\dsum%
\limits_{l=0}^{k-2}A_{F_{2}}^{l}BF_{2}A_{L_{2}}^{k-2-l}\right) \left(
L_{1}-L_{2}\right)  \notag \\
\Longrightarrow F_{2}\left(
A_{F_{2}}^{k-1}L_{2}+\dsum%
\limits_{l=0}^{k-2}A_{F_{2}}^{l}L_{2}CA_{L_{2}}^{k-2-l}(L_{1}-L_{2})\right) =
\notag \\
F_{2}\left( 
\begin{array}{c}
-\left(
A_{L_{2}}^{k-1}+\dsum\limits_{l=0}^{k-2}A_{F_{2}}^{l}BF_{2}A_{L_{2}}^{k-2-l}%
\right) (L_{1}-L_{2}) \\ 
+A_{F_{2}}^{k-1}L_{1}%
\end{array}%
\right) .  \label{eq5-1}
\end{gather}%
Note that, according to $M_{s,2}$ and $\bar{R}^{2}$ given in (\ref{eq5-2})
and Lemma \ref{Lemma3-4}, respectively, 
\begin{gather}
\mathcal{T}_{s,s+n}\left( \Xi \right) =M_{s,2}\bar{R}_{s+n}^{2},\left( 
\mathcal{T}_{s,s+n}\left( \Xi \right) \right) _{l,j}=\Xi _{n+l-j},  \notag \\
\Xi _{k}=\left\{ 
\begin{array}{l}
F_{2}\left( 
\begin{array}{c}
\dsum\limits_{t=0}^{k-2}A_{F_{2}}^{t}BF_{2}A_{L_{2}}^{k-2-t} \\ 
+A_{L_{2}}^{k-1}%
\end{array}%
\right) (L_{1}-L_{2}),k\geq 1 \\ 
0,k\leq 0.%
\end{array}%
\right.  \label{eq5-1a}
\end{gather}%
Summarizing (\ref{eq5-1}) and (\ref{eq5-1a}) results in 
\begin{equation*}
\mathcal{T}_{s,s+n}\left( \bar{Y}\right) =\mathcal{T}_{s,s+n}\left( \Xi
\right) \Longrightarrow \hat{Y}_{s,1}=\hat{Y}_{s,2}R_{s+n}+M_{s,2}\bar{R}%
_{s+n}^{2}.
\end{equation*}%
Similarly, it turns out that%
\begin{gather*}
\hat{X}_{s,2}R_{s+n}=\mathcal{T}_{s,s+n}\left( \bar{X}\right) ,\left( 
\mathcal{T}_{s,s+n}\left( \bar{X}\right) \right) _{l,j}=\bar{X}_{n+l-j}, \\
\bar{X}_{k}=\left\{ 
\begin{array}{l}
C_{F_{2}}\dsum%
\limits_{t=0}^{k-2}A_{F_{2}}^{t}L_{2}CA_{L_{2}}^{k-2-t}(L_{1}-L_{2}) \\ 
+C_{F_{2}}A_{F_{2}}^{k-1}L_{2}+CA_{L_{2}}^{k-1}(L_{1}-L_{2}),k>0 \\ 
I,k=0 \\ 
0,k\leq 0.%
\end{array}%
\right.
\end{gather*}%
which, according to (\ref{eq5-0}), equals\ 
\begin{equation*}
\bar{X}_{k}=\left\{ 
\begin{array}{l}
-C_{F_{2}}\left(
A_{L_{2}}^{k-1}+\dsum\limits_{t=0}^{k-2}A_{F_{2}}^{t}BF_{2}A_{L_{2}}^{k-2-t}%
\right) (L_{1}-L_{2}) \\ 
+C_{F_{2}}A_{F_{2}}^{k-1}L_{1}+CA_{L_{2}}^{k-1}(L_{1}-L_{2}),k>0 \\ 
I,k=0 \\ 
0,k\leq 0.%
\end{array}%
\right.
\end{equation*}%
Observe that 
\begin{gather*}
\mathcal{T}_{s,s+n}\left( \bar{\Xi}\right) =N_{s,2}\bar{R}_{s+n}^{2},\left( 
\mathcal{T}_{s,s+n}\left( \bar{\Xi}\right) \right) _{l,j}=\bar{\Xi}_{n+l-j}
\\
\bar{\Xi}_{k}=\left\{ 
\begin{array}{l}
\left( 
\begin{array}{c}
C_{F_{2}}\dsum\limits_{t=0}^{k-2}A_{F_{2}}^{t}BF_{2}A_{L_{2}}^{k-2-t} \\ 
+DF_{2}A_{L_{2}}^{k-1}%
\end{array}%
\right) (L_{1}-L_{2}),k>0 \\ 
0,k\leq 0.%
\end{array}%
\right.
\end{gather*}%
Hence, it holds%
\begin{gather*}
\mathcal{T}_{s,s+n}\left( \bar{X}\right) +\mathcal{T}_{s,s+n}\left( \bar{\Xi}%
\right) =\mathcal{T}_{s,s+n}\left( X^{1}\right) \\
\Longrightarrow \hat{X}_{s,1}=\hat{X}_{s,2}R_{s+n}+N_{s}\bar{R}_{s+n}^{2}.
\end{gather*}%
Next, in the second step, on the assumption of varying $F,F=F_{i},i=1,2,$\
and fixing $L=L_{1},$\ we prove 
\begin{equation*}
\hat{Y}_{s,1}-\hat{Y}_{s,2}=M_{s,2}\bar{R}_{s+n}^{1}\Longleftrightarrow \hat{%
Y}_{s,2}+M_{s,2}\bar{R}_{s+n}^{1}=\hat{Y}_{s,1}.
\end{equation*}%
Note that 
\begin{gather*}
\left( \mathcal{T}_{s,s+n}\left( Y^{1}\right) \right) _{l,j}-\left( \mathcal{%
T}_{s,s+n}\left( Y^{2}\right) \right) _{l,j}= \\
\left\{ 
\begin{array}{l}
\left( F_{1}A_{F_{1}}^{n-1+l-j}-F_{2}A_{F_{2}}^{n-1+l-j}\right) L_{1},n-1+l>j
\\ 
0,n-1+l\leq j,%
\end{array}%
\right.
\end{gather*}%
which is equal to, attributed to the equality (\ref{eq3-5}),%
\begin{gather*}
\left( F_{1}A_{F_{1}}^{n-1+l-j}-F_{2}A_{F_{2}}^{n-1+l-j}\right) L_{1}= \\
\left( \left( F_{1}-F_{2}\right) A_{F_{1}}^{n-1+l-j}+F_{2}\left(
A_{F_{1}}^{n-1+l-j}-A_{F_{2}}^{n-1+l-j}\right) \right) L_{1} \\
=\left( 
\begin{array}{c}
\left( F_{1}-F_{2}\right) A_{F_{1}}^{n-1+l-j} \\ 
+F_{2}\dsum\limits_{t=0}^{n-2+l-j}A_{F_{2}}^{t}B\left( F_{1}-F_{2}\right)
A_{F_{1}}^{n-2+l-j-t}%
\end{array}%
\right) L_{1}
\end{gather*}%
for $n-1+l>j.$ The last equation is the $\left( l,j\right) $-th entry of
Toeplitz matrix $\mathcal{T}_{s,s+n}\left( M_{s,2}\bar{R}_{s+n}^{1}\right) .$%
\ Thus, it holds $\hat{Y}_{s,1}-\hat{Y}_{s,2}=M_{s,2}\bar{R}_{s+n}^{1}.$
Analogously, 
\begin{gather*}
\left( \mathcal{T}_{s,s+n}\left( X^{1}\right) \right) _{l,j}-\left( \mathcal{%
T}_{s,s+n}\left( X^{2}\right) \right) _{l,j}= \\
\left\{ 
\begin{array}{l}
\left( C_{F_{1}}A_{F_{1}}^{n-1+l-j}-C_{F_{2}}A_{F_{2}}^{n-1+l-j}\right)
L_{1},n-1+l>j \\ 
I,n-1+l=j \\ 
0,n-1+l<j,%
\end{array}%
\right.
\end{gather*}%
leads to, attributed to the equality (\ref{eq3-5}),%
\begin{gather*}
\left( C_{F_{1}}A_{F_{1}}^{n-1+l-j}-C_{F_{2}}A_{F_{2}}^{n-1+l-j}\right)
L_{1}= \\
\left( D\left( F_{1}-F_{2}\right) A_{F_{1}}^{n-1+l-j}+C_{F_{2}}\left(
A_{F_{1}}^{n-1+l-j}-A_{F_{2}}^{n-1+l-j}\right) \right) L_{1} \\
=\left( 
\begin{array}{c}
D\left( F_{1}-F_{2}\right) A_{F_{1}}^{n-1+l-j} \\ 
+C_{F_{2}}\dsum\limits_{t=0}^{n-2+l-j}A_{F_{2}}^{t}B\left(
F_{1}-F_{2}\right) A_{F_{1}}^{n-2+l-j-t}%
\end{array}%
\right) L_{1}
\end{gather*}%
for $n-1+l>j.$ The last equation is the $\left( l,j\right) $-th entry of
Toeplitz matrix $\mathcal{T}_{s,s+n}\left( N_{s,2}\bar{R}_{s+n}^{1}\right) ,$%
\ which implies $\hat{X}_{s,1}-\hat{X}_{s,2}=N_{s,2}\bar{R}_{s+n}^{1}.$ With
the successful proof in both steps, the proof of the lemma is completed.
\end{proof}

\begin{theorem}
\label{Theo3-2}Given the kernel-based model (\ref{eq2-25}) and $\left(
M_{s},N_{s}\right) ,\left( \hat{X}_{s},\hat{Y}_{s}\right) $ for arbitrary $F$
and $L,$ the following statements hold: (i) $\left( M_{s},N_{s}\right) $ and 
$\left( \hat{X}_{s},\hat{Y}_{s}\right) $ are parameterized by $%
V_{s+n},R_{s+n},\bar{R}_{s+n}$ given in Lemma \ref{Lemma3-2} and Lemma \ref%
{Lemma3-4}, for $\left( F_{2},L_{2}\right) =\left( 0,0\right) $ and $\left(
F_{1},L_{1}\right) =\left( F,L\right) ,$ respectively, as follows, 
\begin{gather}
\Psi _{s}=\left[ 
\begin{array}{cc}
M_{s} & \hat{Y}_{s} \\ 
N_{s} & \hat{X}_{s}%
\end{array}%
\right] =\Psi _{s,0}\left[ 
\begin{array}{cc}
V_{s+n} & \bar{R}_{s+n} \\ 
0 & R_{s+n}%
\end{array}%
\right] ,  \label{eq3-55} \\
\Psi _{s,0}=\left[ 
\begin{array}{cc}
M_{s,0} & \hat{Y}_{s,0} \\ 
N_{s,0} & \hat{X}_{s,0}%
\end{array}%
\right] =\left[ 
\begin{array}{cc}
I_{G_{0}} & I_{C_{0}}%
\end{array}%
\right] ,  \label{eq3-57} \\
I_{G_{0}}=\left[ 
\begin{array}{cc}
0 & I \\ 
\mathcal{O}_{s}\mathcal{C}_{n} & \text{ }\mathcal{T}_{s,s}\left( G\right) 
\end{array}%
\right] ,I_{C_{0}}=\left[ 
\begin{array}{cc}
0 & 0 \\ 
\text{ }0 & \text{ }I%
\end{array}%
\right] , \\
R_{s+n}=\mathcal{T}_{s+n,s+n}\left( R\right) ,\bar{R}_{s+n}=\bar{R}%
_{s+n}^{1},
\end{gather}%
where $\mathcal{T}_{s,s}\left( G\right) $ is given in (\ref{eq1-1})-(\ref%
{eq1-2}); (ii) any system input-ouput pair $\left( u_{s}(k),y_{s}(k)\right) $
can be expressed by 
\begin{equation}
\left[ 
\begin{array}{c}
u_{s}(k) \\ 
y_{s}(k)%
\end{array}%
\right] =\left[ 
\begin{array}{cc}
M_{s} & \hat{Y}_{s} \\ 
N_{s} & \hat{X}_{s}%
\end{array}%
\right] \left[ 
\begin{array}{c}
v_{s+n}(k-n) \\ 
r_{s+n}(k-n)%
\end{array}%
\right]   \label{eq3-52}
\end{equation}%
for some $\left( v_{s+n}(k-n),r_{s+n}(k-n)\right) ,$ which is equal to 
\begin{gather}
\left[ 
\begin{array}{c}
u_{s}(k) \\ 
y_{s}(k)%
\end{array}%
\right] =\Psi _{s,0}\left[ 
\begin{array}{c}
\bar{v}_{s+n}(k-n) \\ 
\bar{r}_{s+n}(k-n)%
\end{array}%
\right]   \label{eq3-56} \\
=\bar{\Psi}_{s,0}\left[ 
\begin{array}{c}
\bar{v}_{s+n}(k-n) \\ 
\bar{r}_{s}(k)%
\end{array}%
\right] ,\bar{\Psi}_{s,0}=\left[ 
\begin{array}{cc}
M_{s,0} & 0 \\ 
N_{s,0} & I%
\end{array}%
\right] ,  \label{eq3-56a} \\
\bar{v}_{s+n}(k-n)=V_{s+n}v_{s+n}(k-n)+\bar{R}_{s+n}r_{s+n}(k-n),  \notag \\
\bar{r}_{s+n}(k-n)=R_{s+n}r_{s+n}(k-n);  \label{eq3-56b}
\end{gather}%
(iii) the following rank conditions hold%
\begin{equation}
rank\left( \Psi _{s}\right) =rank\left( \Psi _{s,0}\right) =rank\left( \bar{%
\Psi}_{s,0}\right) =s(p+m).  \label{eq3-51}
\end{equation}
\end{theorem}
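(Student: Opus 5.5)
Statement (i) follows by specialising Lemma \ref{Lemma3-2} and Lemma \ref{Lemma3-4} to $(F_2,L_2)=(0,0)$ and $(F_1,L_1)=(F,L)$.

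\emph{Lemma \ref{Lemma3-2}.} It gives $[M_s^T\ N_s^T]^T=[M_{s,0}^T\ N_{s,0}^T]^TV_{s+n}$. With $F_2=0$ we have $V_k=FA_F^{k-1}B$.

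\emph{Lemma \ref{Lemma3-4}.} It gives $[\hat Y_s^T\ \hat X_s^T]^T=\Psi_{s,0}[\bar R_{s+n}^T\ R_{s+n}^T]^T$. With $F_2=0$ and $L_2=0$, the term $\bar R^2_k=F_2A_{L_2}^{k-1}(L_1-L_2)$ vanishes, so $\bar R_{s+n}=\bar R^1_{s+n}$ with $\bar R^1_k=FA_F^{k-1}L$. Also $R_k=CA^{k-1}L$ for $k\ge 1$, since $A_{L_2}=A$.

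\emph{The matrix $\Psi_{s,0}$.} Evaluate the blocks of Lemma \ref{Lemma3-1} and Lemma \ref{Lemma3-3} at $F=0$, $L=0$ (the lemmas hold for arbitrary $F$ by Theorem \ref{Theorem 3-1}). This gives $H_{u,x}=0$, $H_{u,v}=I$, $H_{y,x}=\mathcal O_s\mathcal C_n$ (with the column order of $\mathcal C_n$ reversed as in the lemma) and $H_{y,v}=\mathcal T_{s,s}(G)$. For the residual part it gives $H_{u,\hat x}=0$, $H_{u,r}=0$, $H_{y,\hat x}=0$ and $H_{y,r}=I$. These are exactly $I_{G_0}$ and $I_{C_0}$. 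Collecting the pieces yields the block upper-triangular factorisation (\ref{eq3-55}).

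For (ii), Lemma \ref{Lemma3-3} gives (\ref{eq3-8}) for arbitrary $F$. Lemma \ref{Lemma3-4} makes it hold for arbitrary $L$ as well: starting from the representation valid for one gain, substitute the reparametrisation (\ref{eq3-23})–(\ref{eq3-24}) of $r$ and absorb the $\bar R$ term into $v$. This is (\ref{eq3-52}).

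Substituting (\ref{eq3-55}) gives (\ref{eq3-56}) with $\bar v_{s+n}$ and $\bar r_{s+n}$ as in (\ref{eq3-56b}). Since $I_{C_0}$ has nonzero columns only in its last $sm$ columns, it multiplies only the last $s$ blocks of $\bar r_{s+n}(k-n)$, namely $\bar r_s(k)$. This gives (\ref{eq3-56a}).

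For (iii), note that $V_{s+n}$ and $R_{s+n}$ are block lower-triangular Toeplitz matrices with identity diagonal, so they are invertible. Hence the triangular factor in (\ref{eq3-55}) is invertible and $\mathrm{rank}\,\Psi_s=\mathrm{rank}\,\Psi_{s,0}$.

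The columns of $\Psi_{s,0}$ are those of $\bar\Psi_{s,0}$ together with zero columns, so the two matrices have the same rank. Moreover $\bar\Psi_{s,0}$ contains the submatrix $\begin{bmatrix} I & 0\\ \mathcal T_{s,s}(G) & I\end{bmatrix}$, which is square of size $s(p+m)$ and nonsingular. Since $\bar\Psi_{s,0}$ has exactly $s(p+m)$ rows, its rank is $s(p+m)$.

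The main obstacle is bookkeeping rather than analysis: one has to check that the zero-gain specialisations of Lemmas \ref{Lemma3-1}–\ref{Lemma3-4} are legitimate, since those lemmas were stated with $F_d$, and that the index conventions of the $\mathcal T_{s,s+n}$ blocks match. I would handle this by relying on Theorem \ref{Theorem 3-1} and Lemma \ref{Lemma3-4}, which already remove the dependence on the particular gains.
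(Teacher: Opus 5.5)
Your proposal is correct and takes essentially the same route as the paper. You specialise Lemmas \ref{Lemma3-2} and \ref{Lemma3-4} to $(F_2,L_2)=(0,0)$, read off $\Psi_{s,0}=[I_{G_0}\ I_{C_0}]$, and obtain the rank conditions from the invertibility of the block upper-triangular factor; along the way you supply details the paper leaves as ``apparent'', notably the nonsingular $s(p+m)\times s(p+m)$ submatrix $\begin{bmatrix} I & 0\\ \mathcal{T}_{s,s}(G) & I\end{bmatrix}$ of $\bar\Psi_{s,0}$ that gives full row rank.
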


\begin{proof}
It follows from Lemmas \ref{Lemma3-2} and \ref{Lemma3-4} that 
$\left( M_{s},N_{s}\right) $ given in Lemma \ref{Lemma3-1} hold
for any $F,$ and for any $(F,L),$ $\left( \hat{X}_{s},%
\hat{Y}_{s}\right) $  can be expressed by (\ref{eq3-50}).
Now, let $\left( F_{1},L_{1}\right) =\left( F,L\right) $  and $%
\left( F_{2},L_{2}\right) =\left( 0,0\right) $ be the feedback gain
and observer gain matrices of $\left( M_{s,i},N_{s,i}\right) $  and 
$\left( \hat{X}_{s,i},\hat{Y}_{s,i}\right) $  in Lemmas \ref%
{Lemma3-2} and \ref{Lemma3-4}, respectively. It is straightforward that (\ref%
{eq3-55}) holds and $\left( \hat{X}_{s,2},\hat{Y}_{s,2}\right) =\left( \hat{%
X}_{s,0},\hat{Y}_{s,0}\right) $  and $\left( M_{s,2},N_{s,2}\right)
=\left( M_{s,0},N_{s,0}\right) $  satisfy (\ref{eq3-57}). Statement
(ii) follows immediately from (\ref{eq3-55}) and (\ref{eq3-57}). Since %
\begin{equation}
\setlength{\abovedisplayskip}{6pt}\setlength{\belowdisplayskip}{6pt}
rank\left[ 
\begin{array}{cc}
V_{s+n} & \bar{R}_{s+n} \\ 
0 & R_{s+n}%
\end{array}%
\right] =\left( m+p\right) \left( s+n\right) ,  \label{eq3-54}
\end{equation}%
it is apparent that (\ref{eq3-51}) is true. 
\end{proof}

By virtue of (\ref{eq3-55}), variations of $\left( M_{s},N_{s}\right) $ and $%
\left( \hat{X}_{s},\hat{Y}_{s}\right) $ resulted from varying $\left(
F,L\right) $\textit{\ }are fully parameterized by\textit{\ }$\left( R_{s+n},%
\bar{R}_{s+n},V_{s+n}\right) .$ More importantly, $\Psi _{s,0}$ and
equivalently $\bar{\Psi}_{s,0}$ characterize the elemental system structure
that is invariant with respect to $\left( F,L\right) $. The rank condition (%
\ref{eq3-51}) is the consequence of this property, which plays a core role
in our subsequent work.

An immediate result of Theorem \ref{Theo3-2} is an extension of the
fundamental lemma.

\begin{corollary}
\label{Co3-2} Consider the kernel-based model (\ref{eq2-25}). For an
input-output data matrix $T_{G,s}$ with $N-s+1\geq \left( s+n\right) p+sm,$
(i) there exists a matrix of the latent variables $\left( \bar{v}_{s+n}(k-n),%
\bar{r}_{s}(k)\right) $ so that for $k_{p}=k-n$%
\begin{equation}
\setlength{\abovedisplayskip}{6pt}\setlength{\belowdisplayskip}{6pt}
\func{Im}\left( T_{G,s}\right) =\func{Im}\left( \bar{\Psi}_{s,0}\left[ 
\begin{array}{c}
\mathcal{H}_{s+n}\left( \bar{v}_{\left[ k_{p}+1:k_{p}+N\right] }\right)  \\ 
\mathcal{H}_{s}\left( \bar{r}_{\left[ k+1:k+N\right] }\right) 
\end{array}%
\right] \right)   \label{eq3-40}
\end{equation}%
and $\forall g\in \mathbb{R}^{N-s+1},\left[ 
\begin{array}{c}
u_{s}(k) \\ 
y_{s}(k)%
\end{array}%
\right] =T_{G,s}g$ is an input-output trajectory of the system; (ii)%
\begin{gather*}
rank\left[ 
\begin{array}{c}
\mathcal{H}_{s+n}\left( \bar{v}_{\left[ k_{p}+1:k_{p}+N\right] }\right)  \\ 
\mathcal{H}_{s}\left( \bar{r}_{\left[ k+1:k+N\right] }\right) 
\end{array}%
\right] =\left( s+n\right) p+sm\Longrightarrow  \\
rank\left( T_{G,s}\right) =s(p+m)\Longleftrightarrow \func{Im}\left(
T_{G,s}\right) =\func{Im}\left( \bar{\Psi}_{s,0}\right) .
\end{gather*}
\end{corollary}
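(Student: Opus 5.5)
Apply Theorem \ref{Theo3-2}(ii) columnwise to the data matrix $T_{G,s}$, then derive the rank statement from Theorem \ref{Theo3-2}(iii) via Sylvester-type rank arguments.

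\textbf{Part (i).} For each window $k+i$, $i=0,\dots ,N-s$, the pair $(u_s(k+i),y_s(k+i))$ is the $(i+1)$-th column of $T_{G,s}$. By (\ref{eq3-56})-(\ref{eq3-56a}) it equals
\[
\bar\Psi_{s,0}\left[\begin{array}{c}\bar v_{s+n}(k+i-n)\\ \bar r_s(k+i)\end{array}\right],
\]
where $\bar v_{s+n}$ and $\bar r_{s}$ are the stacked latent sequences built in (\ref{eq3-56b}). A single underlying sequence $\bar v$, and likewise $\bar r$, can be chosen for all windows: $\bar v=Vv+\bar R r$ and $\bar r=Rr$ are outputs of causal LTI filters driven by the single signals $v$ and $r$ that generate the data through (\ref{eq2-25}). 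The block rows $\bar r_{s+n}(k-n)$ beyond the first $n$ entries are exactly $\bar r_s(k)$, and the first $n$ blocks are multiplied by the zero columns of $I_{C_0}$, which is how (\ref{eq3-56a}) arises.

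Stacking the columns gives the factorization
\[
T_{G,s}=\bar\Psi_{s,0}\left[\begin{array}{c}\mathcal H_{s+n}(\bar v_{[k_p+1:k_p+N]})\\ \mathcal H_s(\bar r_{[k+1:k+N]})\end{array}\right],
\]
and hence (\ref{eq3-40}). The trajectory claim follows from linearity of the kernel-based model (\ref{eq2-25}). For any $g$, $T_{G,s}g=\bar\Psi_{s,0}(\text{latent Hankel matrix})\,g$, which is of the form (\ref{eq3-56a}) with a latent vector. Any latent vector $(\bar v_{s+n},\bar r_s)$ is realized by some trajectory: choose $\bar v$ and $\bar r$ freely over the window and invert $R_{s+n}$ and $V_{s+n}$, which are block lower triangular with identity diagonal. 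Hence $T_{G,s}g$ is an input-output trajectory of (\ref{eq2-25}).

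\textbf{Part (ii).} Assume the latent Hankel matrix has full row rank $(s+n)p+sm$; this is possible since $N-s+1\ge (s+n)p+sm$. Its image is then the whole latent space, so
\[
\func{Im}(T_{G,s})=\func{Im}(\bar\Psi_{s,0}),\qquad rank(T_{G,s})=rank(\bar\Psi_{s,0})=s(p+m)
\]
by (\ref{eq3-51}). For the equivalence, $\func{Im}(T_{G,s})\subseteq\func{Im}(\bar\Psi_{s,0})$ always holds by the factorization. Since $\func{Im}(\bar\Psi_{s,0})=\mathbb R^{s(p+m)}$, which follows from the identity blocks in $\bar\Psi_{s,0}$ combined with (\ref{eq3-51}), equality of the images holds if and only if $rank(T_{G,s})=s(p+m)$.

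\textbf{Main obstacle.} The delicate step is the bookkeeping in part (i): showing that the shifted windows of the latent signals $\bar v$ and $\bar r$ assemble into genuine Hankel matrices indexed by $k_p=k-n$ and $k$. This requires the causal, time-invariant filter structure of $V_{s+n}$, $R_{s+n}$ and $\bar R_{s+n}$, together with zero initial conditions $x_V(0)=0$ applied consistently across windows. I would handle it by defining $\bar v$ and $\bar r$ as signals over the whole horizon first, and only then stacking them into Hankel matrices.
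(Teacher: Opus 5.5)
Your overall route is the paper's: apply Theorem~\ref{Theo3-2}(ii) window by window, stack the columns into $T_{G,s}=\bar{\Psi}_{s,0}Z$, and get (ii) from (\ref{eq3-51}) and the identity blocks of $\bar{\Psi}_{s,0}$. The paper's proof is exactly this one-liner, and your part (ii) is correct.

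The step that fails is the one you single out as the main obstacle. The matrices $V_{s+n},\bar{R}_{s+n},R_{s+n}$ in (\ref{eq3-56b}) are truncated Toeplitz operators. They represent the filters $V,\bar{R},R$ restarted from zero state at the first sample $k-n+1$ of \emph{each} window. Suppose you define $\bar{v}=Vv+\bar{R}r$ and $\bar{r}=Rr$ once over the whole horizon. Then their windows differ from the vectors in (\ref{eq3-56b}) by free responses from the filter states at time $k-n+1$, which are nonzero in general. ``Zero initial conditions applied consistently across windows'' is not something a single signal can satisfy. (For non-deadbeat $F$, the vectors in (\ref{eq3-52}) are not even windows of the global $v=u-F\hat{x}$; Theorem~\ref{Theo3-2}(ii) only asserts that \emph{some} such vectors exist.)

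In fact, no global choice can work. The $u$-block $[\,0\;\;I\;\;0\,]$ of $\bar{\Psi}_{s,0}$ forces $\bar{v}(t)=u(t)$ on the data horizon. The $l$-th $y$-row then gives $\bar{r}(t)=y(t)-Du(t)-C\sum_{q=1}^{l+n-1}A^{q-1}B\bar{v}(t-q)$, where $l$ is the position of $t$ in its window. Comparing windows $k$ and $k+1$ shows that a single sequence $\bar{r}$ requires $CA^{n+l-2}B\,\bar{v}(k-n+1)=0$ for $l=2,\dots,s$. Here $\bar{v}(k-n+1)=u(k-n+1)$ once $k\ge k_{0}+n$. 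For exciting data this means $CA^{j}B=0$ for $n\le j\le n+s-2$, which for a minimal system with $s>n$ forces $A^{n}=0$. So the matrix you would build by defining $\bar{v},\bar{r}$ globally and then stacking does not satisfy $T_{G,s}=\bar{\Psi}_{s,0}[\,\cdot\,]$.

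The obstacle is, however, spurious, so the fix is simply to drop the single-sequence claim. The corollary only asserts the existence of \emph{a matrix of the latent variables}. As in Corollary~\ref{Co3-1} and (\ref{eq4-1}), the Hankel symbols in (\ref{eq3-40}) just denote the matrix $Z$ whose $(i+1)$-th column is the per-window latent vector $\bigl(\bar{v}_{s+n}(k+i-n),\bar{r}_{s}(k+i)\bigr)$ supplied by Theorem~\ref{Theo3-2}(ii). With that reading, your columnwise stacking gives $T_{G,s}=\bar{\Psi}_{s,0}Z$ exactly, hence (\ref{eq3-40}). Your trajectory and rank arguments then go through unchanged, since neither uses Hankel structure.

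If you want a genuinely Hankel object, it exists only in the deadbeat parameterization. With $v_{d}=u-F_{d}\hat{x}$, nilpotency $A_{F_{d}}^{n}=0$ makes $\hat{x}(k+1)$ a function of the last $n$ samples of $(v_{d},r)$. Hence $T_{G,s}=\Psi_{s,d}\left[\begin{smallmatrix}\mathcal{H}_{s+n}(v_{d})\\ \mathcal{H}_{s+n}(r)\end{smallmatrix}\right]$ with true Hankel blocks, where $\Psi_{s,d}$ is $\Psi_{s}$ for $F=F_{d}$. Now let $\Theta=\left[\begin{smallmatrix}V_{s+n}&\bar{R}_{s+n}\\ 0&R_{s+n}\end{smallmatrix}\right]$ from (\ref{eq3-55}) with $F=F_{d}$. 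Then $Z$ is obtained from $\Theta\left[\begin{smallmatrix}\mathcal{H}_{s+n}(v_{d})\\ \mathcal{H}_{s+n}(r)\end{smallmatrix}\right]$ by deleting the first $nm$ rows of its lower block, which multiply the zero columns of $I_{C_{0}}$. Multiplication by $\Theta$ destroys the Hankel structure in general.
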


\begin{proof}
Equation (\ref{eq3-40}) follows from Theorem \ref{Theo3-2}
immediately, based on which and the rank condition (\ref{eq3-51}),
conclusion (ii) holds.
\end{proof}

Since $T_{G,s}\in R^{s(p+m)\times \left( N-s+1\right) },$ the statement (ii)
implies that $\left( u_{s}(k),y_{s}(k)\right) $ is an input-output
trajectory of the system if and only if $\exists g\in \mathbb{R}^{N-s+1}$ so
that $\left[ 
\begin{array}{c}
u_{s}(k) \\ 
y_{s}(k)%
\end{array}%
\right] =T_{G,s}g.$ We would like to remark that in %
\citep{WANGautomatica2025}\textit{\ }systems modelled by (\ref{eq2-25}) with
the innovation series $r(k)$ are considered, in which the input vector is
extended to $\left[ 
\begin{array}{cc}
u_{s}^{T}(k) & \text{ }r_{s}^{T}(k)%
\end{array}%
\right] ^{T}$. In this regard, it has been proven (Theorem 2 (a), %
\citep{WANGautomatica2025})\textit{\ }that the data set consisting of $%
\left( u_{s}(k),y_{s}(k),r_{s}(k)\right) $ fully characterizes the system
dynamics if the augmented input vector is persistently exciting of order $%
s+n.$ It is straightforward to showcase that this result is equivalent to
(ii) of Corollary \ref{Co3-2}.

The rank condition (\ref{eq3-40})\textit{\ }illustrates that under existence
of $r$ representing uncertainties in the system, (i) the dimension of the
image subspace of the input-output data, $\func{Im}\left( T_{G,s}\right) ,$
can be higher than $sp+n,$ which implies the existence of the complementary
subspace of the system image subspace, (ii) there exists mutual coupling
between these two subspaces. Both issues are of importance in dealing with
fault detection and thus motivate us to study the kernel representation and
residual subspace.

In the model-based framework, the residual subspace is the complementary
subspace of the image subspace, which, according to (\ref{eq3-13a}), is the
image subspace spanned by the RCP $\left( \hat{X},\hat{Y}\right) $ as well.
Consequently, the image and residual subspaces are characterized by the
properties that $\dim \left( \mathcal{R}_{G}\right) =m,\dim \left( \mathcal{I%
}_{G}\right) =p,$ 
\begin{equation}
\setlength{\abovedisplayskip}{6pt}\setlength{\belowdisplayskip}{6pt} \left[ 
\begin{array}{cc}
-\hat{N} & \hat{M}%
\end{array}%
\right] \left[ 
\begin{array}{cc}
M & \text{ }-\hat{Y} \\ 
N & \text{ }\hat{X}%
\end{array}%
\right] =\left[ 
\begin{array}{cc}
0 & \text{ }I%
\end{array}%
\right] .  \label{eq3-61}
\end{equation}%
This is the theoretic basis for a kernel representation serving as a
residual generator. Analogous to it, to detect faults in the subspace of
finite-sample input-output signals, it is necessary that 
\begin{gather}
\func{Im}\left( \left[ 
\begin{array}{c}
-\hat{Y}_{s} \\ 
\hat{X}_{s}%
\end{array}%
\right] \right) \nsubseteq \func{Im}\left( \left[ 
\begin{array}{c}
M_{s} \\ 
N_{s}%
\end{array}%
\right] \right) ,  \label{eq3-18a} \\
rank\left( \left[ 
\begin{array}{c}
M_{s} \\ 
N_{s}%
\end{array}%
\right] \right) <s\left( p+m\right) .  \label{eq3-18b}
\end{gather}%
Below, we examine the existence conditions for (\ref{eq3-18a})-(\ref{eq3-18b}%
), which are equivalent to the existence conditions of the complementary
subspace of $\mathcal{I}_{G,s}$. As delineated by the subsequent theorem, in
the space of finite-sample signals, kernel representation and residual
subspace exist only under certain conditions.

\textbf{Remark 3}. For the sake of simplicity, notation%
\begin{equation*}
\setlength{\abovedisplayskip}{6pt}\setlength{\belowdisplayskip}{6pt} I_{G,s}=%
\left[ 
\begin{array}{c}
M_{s} \\ 
N_{s}%
\end{array}%
\right] ,I_{C,s}=\left[ 
\begin{array}{c}
\hat{Y}_{s} \\ 
\hat{X}_{s}%
\end{array}%
\right]
\end{equation*}%
is introduced in the sequel.

\begin{theorem}
\label{Theorem3-3}Given the kernel-based model (\ref{eq2-25}) and the
corresponding input-output model (\ref{eq3-8}), then (i) 
\begin{equation}
rank\left( I_{G,s}\right) =\beta +sp,\beta =rank\left( \mathcal{O}%
_{s}\right) ;  \label{eq3-10a}
\end{equation}%
(ii) the rank condition 
\begin{equation}
rank\left( I_{G,s}\right) <s\left( m+p\right)   \label{eq3-15}
\end{equation}%
holds if and only if $sm>\beta ,$ and (iii) if $s>\mu _{obs},$ where $\mu
_{obs}\left( \leq n\right) $ is the system observability index, it holds that%
\begin{gather}
rank\left( I_{G,s}\right) =sp+n<s\left( p+m\right) ,  \label{eq3-17} \\
\dim \left( \mathcal{R}_{G,s}\right) =s(p+m)-\dim \left( \mathcal{I}%
_{G,s}\right) =sm-n,  \label{eq3-19} \\
\dim \left( \mathcal{R}_{G,s}\right) <\dim \left( \func{Im}\left(
I_{C,s}\right) \right) =sm,  \label{eq3-20a}
\end{gather}%
where residual subspace $\mathcal{R}_{G,s}$ is the complementary subspace of 
$\mathcal{I}_{G,s},$ and there exist $\left( u_{s}(k),y_{s}(k)\right) \left(
\neq 0\right) $ so that 
\begin{equation}
\left[ 
\begin{array}{c}
u_{s}(k) \\ 
y_{s}(k)%
\end{array}%
\right] \in \func{Im}\left( I_{G,s}\right) \cap \func{Im}\left(
I_{C,s}\right) .  \label{eq3-21}
\end{equation}
\end{theorem}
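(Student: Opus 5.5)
We work throughout with the reference choice $F=0$, because Lemma \ref{Lemma3-2} gives $I_{G,s}=I_{G_{0}}V_{s+n}$ with $V_{s+n}$ block lower triangular with identity diagonal blocks. Hence $V_{s+n}$ is invertible, and $\mathrm{rank}(I_{G,s})=\mathrm{rank}(I_{G_{0}})$ and $\func{Im}(I_{G,s})=\func{Im}(I_{G_{0}})$ for every $F$. Similarly, Theorem \ref{Theo3-2} shows that $\func{Im}(I_{C,s})$ together with $\func{Im}(I_{G,s})$ spans $\func{Im}(\Psi _{s,0})$. For $L=0$ the residual part $I_{C_{0}}$ has the form $\left[ 0\;\;I\right]$ acting on $r_{s}(k)$, so $\func{Im}(I_{C,s})$ is the $sm$-dimensional output coordinate subspace $\{(0,y_{s})\}$ up to the parametrization. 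I would verify this directly: the block $\hat{X}_{s}$ contains $H_{y,r}$, which is block lower triangular with identity diagonal, so $\mathrm{rank}(I_{C,s})=sm$ for any $L$.

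For (i), take
\[
I_{G_{0}}=\left[ \begin{array}{cc} 0 & I \\ \mathcal{O}_{s}\mathcal{C}_{n} & \mathcal{T}_{s,s}(G) \end{array}\right] .
\]
Column operations using the identity block eliminate $\mathcal{T}_{s,s}(G)$, so $\mathrm{rank}(I_{G_{0}})=sp+\mathrm{rank}(\mathcal{O}_{s}\mathcal{C}_{n})$. Since the state-space realization is minimal, $(A,B)$ is controllable and $\mathcal{C}_{n}$ has full row rank $n$. Therefore $\mathrm{rank}(\mathcal{O}_{s}\mathcal{C}_{n})=\mathrm{rank}(\mathcal{O}_{s})=\beta$, which gives (\ref{eq3-10a}).

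Statement (ii) then follows immediately: $sp+\beta <s(p+m)$ holds if and only if $\beta <sm$.

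For (iii), take $s>\mu _{obs}$. Observability gives $\beta =n$. Also $sm>n$: from $\mathrm{rank}(\mathcal{O}_{\mu _{obs}})=n\leq \mu _{obs}m$ we get $sm\geq (\mu _{obs}+1)m>n$. This proves (\ref{eq3-17}). Since $\mathcal{I}_{G,s}=\func{Im}(I_{G,s})$ by (\ref{eq3-4}), a complement in $\mathbb{R}^{s(p+m)}$ has dimension $s(p+m)-(sp+n)=sm-n$, which is (\ref{eq3-19}). Comparing with $\dim \func{Im}(I_{C,s})=sm$ gives (\ref{eq3-20a}), because $n\geq 1$.

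Finally, for (\ref{eq3-21}) we use a dimension count. By Theorem \ref{Theo3-2}(iii), $\func{Im}(I_{G,s})+\func{Im}(I_{C,s})=\func{Im}(\Psi _{s})$ has dimension $s(p+m)$. Hence
\[
\dim \left( \func{Im}(I_{G,s})\cap \func{Im}(I_{C,s})\right) =(sp+n)+sm-s(p+m)=n>0 .
\]
So a nonzero $(u_{s}(k),y_{s}(k))$ exists in the intersection. Concretely, for $F=L=0$ the intersection is $\{(0,\mathcal{O}_{s}x)\}$, the free responses from initial states.

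The main obstacle is justifying $\dim \func{Im}(I_{C,s})=sm$ for arbitrary $(F,L)$. Using only (\ref{eq3-50}) couples it with $I_{G,s}$, so I would instead argue directly from the lower-triangular identity-diagonal block $H_{y,r,F_{d}}$ (respectively $\hat{X}_{s}$), which gives full row rank $sm$ of the $y$-block rows. A further point is that the general case needs $n\geq 1$, the nontrivial-dynamics assumption.
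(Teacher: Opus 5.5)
The proposal follows the paper's route for (i), (ii), (\ref{eq3-17}) and (\ref{eq3-19}): reduce to $F=0$ through the invertible $V_{s+n}$, eliminate $\mathcal{T}_{s,s}(G)$ with the identity block, then use controllability and the observability index. Your extra check $sm\geq(\mu_{obs}+1)m>n$ is a detail the paper skips.

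\textbf{The gap: the equality $\dim\mathrm{Im}(I_{C,s})=sm$.} Full row rank of the $y$-rows $\hat{X}_{s}$, obtained from the identity-diagonal block $H_{y,r,F_d}$, only gives $\mathrm{rank}(I_{C,s})\geq sm$. The $u$-rows $\hat{Y}_{s}$ are built from the blocks $F_dA_{F_d}^{k}L$, and they need not lie in the row space of $\hat{X}_{s}$.

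The phrase ``up to the parametrization'' does not carry the $I_{C_0}$ picture over to general $(F,L)$ either. By (\ref{eq3-55}), $I_{C,s}=I_{G_0}\bar{R}_{s+n}+I_{C_0}R_{s+n}$, and the first term can enlarge the image. The paper's own appeal to Theorem \ref{Theo3-2} has the same problem: (\ref{eq3-51}) only concerns $\Psi_s$ as a whole, and $\Psi_{s,0}$ does not have full column rank, so $\mathrm{rank}(I_{C,s})$ and $\mathrm{rank}(I_{C_0})$ can differ.

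\textbf{Counterexample.} The equality actually fails for general $(F,L)$. Take $n=p=m=1$, $A=B=C=1$, $D=0$, $L=1$ and $s=2$, so that $F_d=-1$ and $A_{F_d}=0$. With $v=0$ one gets $\hat{x}(k+1)=r(k)$, $u(k)=-r(k-1)$ and $y(k)=r(k-1)+r(k)$. Lemma \ref{Lemma3-3} then gives
\[
I_{C,2}=\left[\begin{array}{ccc} -1 & 0 & 0\\ 0 & -1 & 0\\ 1 & 1 & 0\\ 0 & 1 & 1\end{array}\right],
\]
whose rank is $3=sm+n$, not $sm=2$.

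\textbf{What you can prove.} Eliminating with the invertible block $H_{y,r,F_d}$ gives
\[
\mathrm{rank}(I_{C,s})=sm+\mathrm{rank}\left(H_{u,\hat{x}}-H_{u,r,F_{d}}H_{y,r,F_{d}}^{-1}H_{y,\hat{x}}\right)\geq sm .
\]
Equality holds when the $u$-rows of $I_{C,s}$ vanish. In (\ref{eq3-55}) these rows are the last $s$ block rows of $\bar{R}_{s+n}$, with blocks $FA_F^{k-1}L$, so they vanish e.g.\ for $F=0$ or $L=0$, where $\mathrm{Im}(I_{C,s})=\mathrm{Im}(I_{C_0})$.

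The lower bound is all that the strict inequality in (\ref{eq3-20a}) and the claim (\ref{eq3-21}) need. Indeed,
\[
\dim\left(\mathrm{Im}(I_{G,s})\cap\mathrm{Im}(I_{C,s})\right)\geq(sp+n)+sm-s(p+m)=n>0,
\]
without invoking Theorem \ref{Theo3-2}(iii) at all.

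So replace ``$=sm$'' by ``$\geq sm$'', and replace your ``$=n$'' for the intersection by ``$\geq n$''; in the example above the intersection has dimension $2$. The equality part of (\ref{eq3-20a}) can only be established in the restricted case just described.
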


\begin{proof}
To prove (\ref{eq3-10a}), setting $F=0$\ leads to%
\begin{equation}
rank\left( I_{G,s}\right) =sp+rank\left( \mathcal{O}_{s}\mathcal{C}%
_{n}\right) .  \label{eq3-13}
\end{equation}%
Since the system (\ref{eq2-10})-(\ref{eq2-11}) is controllable, we have%
\begin{equation*}
rank\left( \mathcal{O}_{s}\mathcal{C}_{n}\right) =\beta \Longrightarrow
rank\left( I_{G,s}\right) =sp+\beta .
\end{equation*}%
The proof of (ii) follows immediately from (\ref{eq3-10a}). Concerning
(iii), recall that\textit{\ }$rank\left( \mathcal{O}_{\mu _{obs}}\right) =n.$
Thus, $\forall s>\mu _{obs},$\ (\ref{eq3-17}) holds, and (\ref{eq3-19}) is
the immediate result of (\ref{eq3-17}). Recalling the rank condition (refer
to Theorem \ref{Theo3-2}) 
\begin{equation*}
rank\left( I_{C,s}\right) =rank\left( I_{C_{0}}\right) =sm,
\end{equation*}%
we further have (\ref{eq3-20a}), from which (\ref{eq3-21}) follows. This
completes the proof.
\end{proof}

\textbf{Remark 4}. \textit{The necessary condition }$s>\mu _{obs}$\textit{\
and the conclusion (\ref{eq3-17}) are known in behavioral systems theory,
where they are expressed in terms of the concept lag of the system instead
of the observability index, in order to deal with LTI systems in a general
form \citep{WILLEMS2005,markovsky2021}. For an LTI with the minimal state
space realization (\ref{eq2-10})-(\ref{eq2-11}), the observability index
equals the lag of the system}.

Theorem \ref{Theorem3-3} reveals that the complementary subspace of $%
\mathcal{I}_{G,s}$ exists, only if the data length $s$ is sufficiently
large. In our subsequent study, it is assumed that $s>n\left( \geq \mu
_{obs}\right) .$ The conditions (\ref{eq3-19})-(\ref{eq3-21}) demonstrate
the differences between the image and residual subspaces in $\mathcal{H}_{2}$
space and in the space of finite-sample signals. They reveal that, in the
latter case, (i) increasing the evaluation window $s$ is instrumental to
enhance detection performance, (ii) $\func{Im}\left( I_{C,s}\right) $ is not
the residual subspace, and thus (iii) the nominal and uncertain system
dynamics are stronger corrupted. Aiming at fault detection, it is essential
to determine a kernel representation as the dual expression of the image
representation $I_{G,s}$ and the basis of residual generation in the
residual subspace of finite sample signals. To this end, we now introduce
the definition of finite-sample kernel representation.

\begin{definition}
\label{def3-2}Given the kernel-based model (\ref{eq2-25}) and the associated
finite sample input-output model (\ref{eq3-8}), system 
\begin{equation}
r_{K }(k)=K_{G,s}\left[ 
\begin{array}{c}
u_{s}(k) \\ 
y_{s}(k)%
\end{array}%
\right] \in \mathbb{R}^{\theta },\theta =sm-n  \label{eq3-19a}
\end{equation}%
is called finite-sample kernel representation, if $K_{G,s}I_{G,s}=0,rank%
\left( K_{G,s}\right) =sm-n.$
\end{definition}

\textbf{Remark 5}. \textit{Hereafter, }$\gamma $\textit{\ and }$\theta $%
\textit{\ are used to denote }%
\begin{align}
\gamma & =rank\left( I_{G,s}\right) =sp+n, \\
\theta & =rank\left( K_{G,s}\right) =sm-n.
\end{align}%
It is evident that the kernel representation (\ref{eq3-19a}) is a residual
generator outputting the residual $r_{K}$ satisfying 
\begin{equation*}
r_{K}(k)=K_{G,s}I_{C,s}r_{s+n}(k-n)\neq 0
\end{equation*}%
as far as $r_{s+n}(k-n)\neq 0.$ Moreover, $\forall R\in \mathbb{R}^{\phi
\times \theta },\phi \leq \theta ,rank\left( R\right) \leq \theta ,$%
\begin{equation}
\bar{r}_{K}(k)=RK_{G,s}I_{C,s}r_{s+n}(k-n)\neq 0,  \label{eq3-21a}
\end{equation}%
which gives a parameterization of all finite-sample residual generators. As
a result, in order to achieve a reliable fault detection, the kernel
representation $K_{G,s}$ (\ref{eq3-19a}) should satisfy the condition 
\begin{equation}
K_{G,s}I_{G,s}=0,rank\left( K_{G,s}I_{C,s}\right) =\theta .  \label{eq3-18}
\end{equation}%
In terms of $K_{G,s}$, the residual subspace is now defined.

\begin{definition}
\label{def3-3}Given the kernel-based model (\ref{eq2-25}) and the associated
finite sample kernel representation (\ref{eq3-19a}), the subspace $\mathcal{R%
}_{G,s},$%
\begin{equation*}
\mathcal{R}_{G,s}=\left\{ 
\begin{array}{c}
\left[ 
\begin{array}{c}
u_{s}(k) \\ 
y_{s}(k)%
\end{array}%
\right] :\left[ 
\begin{array}{c}
u_{s}(k) \\ 
y_{s}(k)%
\end{array}%
\right] =K_{G,s}^{T}r_{K},r_{K}\in \mathbb{R}^{\theta }%
\end{array}%
\right\} 
\end{equation*}%
is called finite-sample residual subspace. 
\end{definition}

\subsection{Kernel representation and parity matrix}

On account of the parameterization of $\Psi _{s}$ described in Theorem \ref%
{Theo3-2},\textit{\ }(\ref{eq3-18}) holds if and only if for $K_{G,s}=\left[ 
\begin{array}{cc}
K_{1} & K_{2}%
\end{array}%
\right] $ 
\begin{gather}
K_{G,s}\Psi _{s,0}=\left[ 
\begin{array}{cccc}
K_{2}\mathcal{O}_{s}\mathcal{C}_{n} & K_{1}+K_{2}\mathcal{T}_{s,s}\left(
G\right) & 0 & K_{2}%
\end{array}%
\right]  \notag \\
=\left[ 
\begin{array}{cccc}
0 & \text{ }0 & \text{ }0 & \text{ }K_{2}%
\end{array}%
\right] \Longrightarrow K_{1}=-K_{2}\mathcal{T}_{s,s}\left( G\right) , 
\notag \\
K_{2}\mathcal{O}_{s}=0,rank\left( K_{2}\right) =sm-n.  \label{eq3-30}
\end{gather}

\begin{theorem}
\label{Theo3-3}Given the kernel-based model (\ref{eq2-25}) and the
finite-sample kernel representation%
\begin{equation}
r_{K}(k)=K_{G,s}\left[ 
\begin{array}{c}
u_{s}(k) \\ 
y_{s}(k)%
\end{array}%
\right] ,  \label{eq3-60}
\end{equation}%
the condition (\ref{eq3-18}) is satisfied if and only if 
\begin{equation}
\setlength{\abovedisplayskip}{6pt}\setlength{\belowdisplayskip}{6pt}%
K_{G,s}=K_{2}\left[ 
\begin{array}{cc}
-\mathcal{T}_{s,s}\left( G\right)  & \text{ }I%
\end{array}%
\right]   \label{eq3-62}
\end{equation}%
with $K_{2}$ satisfying (\ref{eq3-18}). The dynamics of the kernel
representation (\ref{eq3-60})\textit{\ }is governd by 
\begin{equation}
\setlength{\abovedisplayskip}{6pt}\setlength{\belowdisplayskip}{6pt}%
r_{K}(k)=K_{2}\bar{r}_{s}(k).  \label{eq3-38}
\end{equation}
\end{theorem}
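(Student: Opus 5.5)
The proof reduces everything to the invariant factor $\Psi _{s,0}$ of Theorem \ref{Theo3-2}. By (\ref{eq3-55}) we have $I_{G,s}=I_{G_{0}}V_{s+n}$, where the block upper-triangular Toeplitz matrix $V_{s+n}$ has identity diagonal blocks and is therefore invertible. Hence $K_{G,s}I_{G,s}=0$ holds if and only if $K_{G,s}I_{G_{0}}=0$. Similarly, $I_{C,s}=I_{G_{0}}\bar{R}_{s+n}+I_{C_{0}}R_{s+n}$ with $R_{s+n}$ invertible. Under $K_{G,s}I_{G,s}=0$ this gives $K_{G,s}I_{C,s}=K_{G,s}I_{C_{0}}R_{s+n}$, and therefore $rank\left( K_{G,s}I_{C,s}\right) =rank\left( K_{G,s}I_{C_{0}}\right) $. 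So (\ref{eq3-18}) is equivalent to the same conditions with $\left( I_{G,s},I_{C,s}\right) $ replaced by $\left( I_{G_{0}},I_{C_{0}}\right) $, which no longer depend on $(F,L)$.

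Next, write $K_{G,s}=\left[ K_{1}\ K_{2}\right] $ and compute $K_{G,s}\Psi _{s,0}$ blockwise, as in (\ref{eq3-30}). The condition $K_{G,s}I_{G_{0}}=0$ splits into two parts. The second column block gives $K_{1}+K_{2}\mathcal{T}_{s,s}(G)=0$, so $K_{1}=-K_{2}\mathcal{T}_{s,s}(G)$. The first column block gives $K_{2}\mathcal{O}_{s}\mathcal{C}_{n}=0$. Since the realization is minimal, $\mathcal{C}_{n}$ has full row rank $n$, so this is equivalent to $K_{2}\mathcal{O}_{s}=0$.

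The rank condition then follows from $K_{G,s}I_{C_{0}}=\left[ 0\ K_{2}\right] $, so $rank(K_{G,s}I_{C,s})=rank(K_{2})$ must equal $\theta =sm-n$. This proves necessity: $K_{G,s}$ has the form (\ref{eq3-62}) with $K_{2}\mathcal{O}_{s}=0$ and $rank(K_{2})=sm-n$. For sufficiency, the same computations read backwards show that any such $K_{2}$ yields $K_{G,s}I_{G,s}=0$ and $rank(K_{G,s}I_{C,s})=\theta $. Also $rank(K_{G,s})=rank(K_{2})=\theta $ because of the identity block in $\left[ -\mathcal{T}_{s,s}(G)\ I\right] $, so Definition \ref{def3-2} is met. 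Such a $K_{2}$ exists because $s>\mu _{obs}$ gives $rank(\mathcal{O}_{s})=n$, so the left null space of $\mathcal{O}_{s}$ has dimension exactly $sm-n$. By (\ref{eq2-15}), $K_{2}$ is then a parity matrix.

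For the dynamics (\ref{eq3-38}), apply $K_{G,s}$ to the representation (\ref{eq3-56a}) of Theorem \ref{Theo3-2}:
\begin{equation*}
r_{K}(k)=K_{G,s}\bar{\Psi}_{s,0}\left[
\begin{array}{c}
\bar{v}_{s+n}(k-n) \\
\bar{r}_{s}(k)
\end{array}
\right] =\left[ K_{G,s}I_{G_{0}}\ \ K_{2}\right] \left[
\begin{array}{c}
\bar{v}_{s+n}(k-n) \\
\bar{r}_{s}(k)
\end{array}
\right] =K_{2}\bar{r}_{s}(k),
\end{equation*}
since $K_{G,s}I_{G_{0}}=0$. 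The main obstacle is bookkeeping rather than ideas. One point is to verify carefully that $\bar{\Psi}_{s,0}$ keeps only the last $s$ blocks of $\bar{r}_{s+n}$, i.e.\ that the identity block in $I_{C_{0}}$ sits in those positions. The other is to confirm that the rank reduction from $K_{G,s}I_{C,s}$ to $K_{2}$ genuinely uses the invertibility of $R_{s+n}$ and the vanishing term $K_{G,s}I_{G_{0}}\bar{R}_{s+n}$.
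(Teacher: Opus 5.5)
Your proposal is correct and follows the paper's route. You reduce (\ref{eq3-18}) to conditions on $\Psi_{s,0}$ through the invertible factor in (\ref{eq3-55}), read off $K_1=-K_2\mathcal{T}_{s,s}(G)$, $K_2\mathcal{O}_s=0$ and $rank(K_2)=sm-n$ blockwise as in (\ref{eq3-30}), and apply $K_{G,s}$ to the $\Psi_{s,0}$-representation to obtain $r_K(k)=K_2\bar{r}_s(k)$. Along the way you make explicit several steps the paper leaves implicit: the invertibility of $V_{s+n}$ and $R_{s+n}$, controllability to cancel $\mathcal{C}_n$, and $rank(K_{G,s})=\theta$.

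One cosmetic slip: $V_{s+n}$ and $R_{s+n}$ are block \emph{lower}-triangular, since their blocks vanish for $i<j$. This does not affect the argument, because their diagonal blocks are identities.
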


\begin{proof}
It follows from (\ref{eq3-30}) that $K_{G,s}$\ subject to (%
\ref{eq3-62})\ is necessary and sufficient for the existence of (\ref{eq3-18}%
). Since 
\begin{equation*}
\setlength{\abovedisplayskip}{6pt}\setlength{\belowdisplayskip}{6pt}%
r_{K}(k)=K_{G,s}\Psi _{s,0}\left[ 
\begin{array}{c}
\bar{v}_{s+n}(k-n) \\ 
\bar{r}_{s+n}(k-n)%
\end{array}%
\right] ,
\end{equation*}%
\ according to Theorem \ref{Theo3-2}, for $K_{G,s}$ subject to (\ref{eq3-62}%
) we have 
\begin{equation*}
K_{G,s}\Psi _{s,0}\left[ 
\begin{array}{c}
\bar{v}_{s+n}(k-n) \\ 
\bar{r}_{s+n}(k-n)%
\end{array}%
\right] =K_{2}\bar{r}_{s}(k).
\end{equation*}%
\ The theorem is proven.
\end{proof}

We would like to call the reader's attention to the condition (\ref{eq3-30})
that is identical to the parity matrix condition (\ref{eq2-15}).
Furthermore, the kernel representation (\ref{eq3-60}) is equivalent to the
parity space residual generator (\ref{eq2-14a}). In fact,\ in %
\citep{Ding_automatica2014} the data-driven kernel representation-based
fault detection was introduced based on the parity space method. The above
results give a proof of the kernel representation in a more general
framework.

\section{A data-driven fault detection scheme}

\subsection{Matrix perturbation theory-based problem set-up and theoretic
basis}

Suppose that system input-output data $\left( u(i),y(i)\right) ,i\in \lbrack
k_{0}+1,k_{0}+N],k_{0}\geq n,s>n,$ are collected from the system (\ref%
{eq2-20})-(\ref{eq2-21}) during fault-free operations and ordered into the
input-output data matrix 
\begin{equation*}
\setlength{\abovedisplayskip}{6pt}\setlength{\belowdisplayskip}{6pt}T_{G,s}=%
\left[ 
\begin{array}{c}
\mathcal{H}_{s}\left( u_{\left[ k_{0}+1:k_{0}+N\right] }\right) \\ 
\mathcal{H}_{s}\left( y_{\left[ k_{0}+1:k_{0}+N\right] }\right)%
\end{array}%
\right] ,
\end{equation*}%
where $u$ is persistently exciting of order $s+n,$ and $N$ is sufficiently
large, satisfying%
\begin{equation*}
\setlength{\abovedisplayskip}{6pt}\setlength{\belowdisplayskip}{6pt}%
N-s+1\geq s\left( p+m\right) \Longleftrightarrow N\geq s\left( p+m\right)
+s-1.
\end{equation*}%
According to Theorem \ref{Theo3-2}, for $k_{p}=k_{0}-n$ 
\begin{equation}
\setlength{\abovedisplayskip}{6pt}\setlength{\belowdisplayskip}{6pt}T_{G,s}=%
\left[ 
\begin{array}{cc}
I_{G,s} & \text{ }I_{C,s}%
\end{array}%
\right] \left[ 
\begin{array}{c}
\mathcal{H}_{s+n}\left( v_{\left[ k_{p}+1:k_{p}+N\right] }\right) \\ 
\mathcal{H}_{s+n}\left( r_{\left[ k_{p}+1:k_{p}+N\right] }\right)%
\end{array}%
\right] .  \label{eq4-1}
\end{equation}%
On a practical assumption that 
\begin{equation}
\left\Vert \mathcal{H}_{s+n}\left( r_{\left[ k_{p}+1:k_{p}+N\right] }\right)
\right\Vert <<\left\Vert \mathcal{H}_{s+n}\left( v_{\left[ k_{p}+1:k_{p}+N%
\right] }\right) \right\Vert  \label{eq4-3}
\end{equation}%
the influence of $r$ is addressed as a perturbation, which enables us to
leverage matrix perturbation theory \citep{Stewart_book_1990} to explore
data-driven fault detection issues. To this end, we introduce orthogonal
projections onto a subspace in $\mathbb{R}^{s(m+p)\times \left( sp+n\right)
} $ and the Davis-Kahan theorem, which play a core role in our work. Let $%
I_{G,s,n},$ 
\begin{equation*}
\setlength{\abovedisplayskip}{6pt}\setlength{\belowdisplayskip}{6pt}%
I_{G,s,n}=I_{G,s}\left( I_{G,s}^{T}I_{G,s}\right) ^{-1/2}\Longrightarrow
I_{G,s,n}^{T}I_{G,s,n}=I,
\end{equation*}%
be the normalized image representation of $I_{G,s}$. The operator $\mathcal{P%
}_{{\mathcal{I}}_{G,s}},\mathcal{P}_{{\mathcal{I}}%
_{G,s}}=I_{G,s,n}I_{G,s,n}^{T},$ defines an orthogonal projection that
projects $\left( u_{s}(k),y_{s}(k)\right) $ onto ${\mathcal{I}}_{G,s},$%
\begin{equation}
\setlength{\abovedisplayskip}{6pt}\setlength{\belowdisplayskip}{6pt}%
\begin{bmatrix}
\hat{u}_{s}(k) \\ 
\hat{y}_{s}(k)%
\end{bmatrix}%
=\mathcal{P}_{{\mathcal{I}}_{G,s}}%
\begin{bmatrix}
u_{s}(k) \\ 
y_{s}(k)%
\end{bmatrix}%
=I_{G,s,n}I_{G,s,n}^{T}%
\begin{bmatrix}
u_{s}(k) \\ 
y_{s}(k)%
\end{bmatrix}%
.  \label{eq4-0}
\end{equation}%
Next, do an SVD of the data matrix 
\begin{equation}
\setlength{\abovedisplayskip}{6pt}\setlength{\belowdisplayskip}{6pt}\frac{1}{%
\sqrt{N}}\left[ 
\begin{array}{c}
\mathcal{H}_{s}\left( u_{\left[ k_{0}+1:k_{0}+N\right] }\right) \\ 
\mathcal{H}_{s}\left( y_{\left[ k_{0}+1:k_{0}+N\right] }\right)%
\end{array}%
\right] =U\Sigma V^{T},  \label{eq4-5}
\end{equation}%
and decompose $U$ and $\Sigma $ into 
\begin{eqnarray*}
U &=&\left[ 
\begin{array}{cc}
U_{1} & U_{2}%
\end{array}%
\right] ,U_{1}\in \mathbb{R}^{s(p+m)\times \left( sp+n\right) }, \\
\Sigma &=&diag\left( \Sigma _{1},\Sigma _{2}\right) ,\Sigma _{1}=diag\left(
\sigma _{1},\cdots \sigma _{sp+n}\right) , \\
\Sigma _{2} &=&diag\left( \sigma _{sp+n+1},\cdots ,\sigma _{s\left(
p+m\right) }\right) ,\sigma _{1}\geq \cdots \geq \sigma _{s\left( p+m\right)
}.
\end{eqnarray*}%
On account of (\ref{eq4-1}) and perturbation assumption (\ref{eq4-3}), it is
nature to consider $U_{1}$ and $U_{2}$ as approximations of the normalized
image representation and its (orthogonal) complement, the normalized kernel
representation, 
\begin{gather}
U_{1}\approx I_{G,s,n},U_{2}=U_{1}^{\bot }\approx I_{G,s,n}^{\bot
}=:K_{G,s,n}^{T},  \notag \\
K_{G,s,n}I_{G,s,n}=0,K_{G,s,n}K_{G,s,n}^{T}=I,  \notag \\
K_{G,s,n}^{T}K_{G,s,n}+I_{G,s,n}I_{G,s,n}^{T}=I.  \label{eq4-24}
\end{gather}%
It is apparent that the approximation errors compromise fault detection
performance. It is of essential importance to specify a tight bound of the
approximation errors. In matrix perturbation theory, the Davis-Kahan theorem
is widely applied to evaluating the influence of perturbations. We summarize
the application of the Davis-Kahan theorem \citep{Stewart_book_1990} to our
problem set-up in the following theorem.

\begin{theorem}
\label{Theo4-1}Consider perturbation problem (\ref{eq4-6}). It holds 
\begin{equation}
\left\Vert I_{G,s,n}I_{G,s,n}^{T}-U_{1}U_{1}^{T}\right\Vert _{2}\leq \frac{%
\left\Vert S_{2}\right\Vert _{2}}{\lambda _{\gamma }\left( S_{1}\right) },
\label{eq4-4}
\end{equation}%
where matrices $S_{1},S_{2}$ are defined by%
\begin{gather}
\frac{1}{N}T_{G,s}T_{G,s}^{T}=S=S_{1}+S_{2},S_{1}=I_{G,s}\Sigma
_{v}I_{G,s}^{T},  \label{eq4-6} \\
S_{2}=I_{G,s}\hat{\Sigma}_{vr}I_{C,s}^{T}+I_{C,s}\hat{\Sigma}%
_{vr}^{T}I_{G,s}^{T}+I_{C,s}\hat{\Sigma}_{r}I_{C,s}^{T},  \notag \\
\Sigma _{v}=\frac{\mathcal{H}_{s+n}\left( v_{\left[ k_{p}+1:k_{p}+N\right]
}\right) \mathcal{H}_{s+n}^{T}\left( v_{\left[ k_{p}+1:k_{p}+N\right]
}\right) }{N},  \notag \\
\hat{\Sigma}_{r}=\frac{\mathcal{H}_{s+n}\left( r_{\left[ k_{p}+1:k_{p}+N%
\right] }\right) \mathcal{H}_{s+n}^{T}\left( r_{\left[ k_{p}+1:k_{p}+N\right]
}\right) }{N},  \notag \\
\hat{\Sigma}_{vr}=\frac{\mathcal{H}_{s+n}\left( v_{\left[ k_{p}+1:k_{p}+N%
\right] }\right) \mathcal{H}_{s+n}^{T}\left( r_{\left[ k_{p}+1:k_{p}+N\right]
}\right) }{N},  \notag
\end{gather}%
and $\lambda _{i}\left( S_{1}\right) \geq \lambda _{i+1}\left( S_{1}\right)
,i=1,\cdots ,\gamma ,\lambda _{i}\left( S_{1}\right) =0,i>\gamma ,$ are the
eigenvalues of $S_{1}.$ 
\end{theorem}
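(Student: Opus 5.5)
Our plan is to read Theorem \ref{Theo4-1} as a direct application of the Davis--Kahan $\sin\Theta$ theorem to the symmetric positive semidefinite matrix $S=\frac{1}{N}T_{G,s}T_{G,s}^{T}$, viewed as the unperturbed matrix $S_{1}$ plus the symmetric perturbation $S_{2}$. First we check the decomposition (\ref{eq4-6}). Substituting (\ref{eq4-1}) into $\frac{1}{N}T_{G,s}T_{G,s}^{T}$ and expanding the product of block matrices gives exactly $S_{1}+S_{2}$ with the stated $\Sigma_{v},\hat{\Sigma}_{r},\hat{\Sigma}_{vr}$. We note that $S_{2}$ is symmetric. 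Moreover, by (\ref{eq4-5}), $S=U\Sigma^{2}U^{T}$, so $U_{1}$ spans the dominant $\gamma$-dimensional eigenspace of $S$, with eigenvalues $\sigma_{1}^{2},\dots,\sigma_{\gamma}^{2}$.

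Next we identify the eigenstructure of $S_{1}$. Under persistent excitation we take $\Sigma_{v}$ to be positive definite; this is the condition (\ref{eq3-6b}) / full row rank of $\mathcal{H}_{s+n}(v)$, which we will state as an assumption. With $rank(I_{G,s})=\gamma$ by Theorem \ref{Theorem3-3}, this gives $\operatorname{Im}(S_{1})=\operatorname{Im}(I_{G,s})=\mathcal{I}_{G,s}$ and $rank(S_{1})=\gamma$. Hence the eigenvectors of $S_{1}$ for the nonzero eigenvalues $\lambda_{1}(S_{1})\ge\dots\ge\lambda_{\gamma}(S_{1})>0$ span $\operatorname{Im}(I_{G,s,n})$, and the remaining eigenvalues are $0$ on $\operatorname{Im}(K_{G,s,n}^{T})$. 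The orthogonal projector onto this eigenspace is therefore $I_{G,s,n}I_{G,s,n}^{T}$.

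We then apply Davis--Kahan in the form of Stewart--Sun. Let $E=S_{2}$ and write $K_{G,s,n}^{T}=U_{2}$ in the unperturbed setting. The residual $R=S\,I_{G,s,n}-I_{G,s,n}\Lambda_{1}$, with $\Lambda_{1}=I_{G,s,n}^{T}S_{1}I_{G,s,n}$, equals $S_{2}I_{G,s,n}$ because $S_{1}I_{G,s,n}=I_{G,s,n}\Lambda_{1}$. Thus $\|R\|_{2}\le\|S_{2}\|_{2}$.

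The delicate step is the gap. The classical $\sin\Theta$ theorem requires separation between the spectrum of $\Lambda_{1}$ (contained in $[\lambda_{\gamma}(S_{1}),\infty)$) and the complementary spectrum of the perturbed matrix, i.e. $\sigma_{\gamma+1}^{2},\dots$. Two routes are available:
\begin{itemize}
\item use the perturbed complementary eigenvalues, which lie in $[0,\|S_{2}\|_{2}]$ by Weyl, giving a gap $\lambda_{\gamma}(S_{1})-\|S_{2}\|_{2}$;
\item use the one-sided version, exploiting that the unperturbed complementary eigenvalues are exactly $0$. Here $\sin\Theta$ equals $\|U_{2}^{T}I_{G,s,n}\|_{2}$, and from $U_{2}^{T}S=\Sigma_{2}^{2}U_{2}^{T}$ one gets $U_{2}^{T}S_{2}I_{G,s,n}=\Sigma_{2}^{2}U_{2}^{T}I_{G,s,n}-U_{2}^{T}I_{G,s,n}\Lambda_{1}$.
\end{itemize}
For the second route, bounding the Sylvester operator $X\mapsto X\Lambda_{1}-\Sigma_{2}^{2}X$ from below needs $\lambda_{\gamma}(S_{1})>\sigma_{\gamma+1}^{2}$. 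To reach the clean denominator $\lambda_{\gamma}(S_{1})$, we would swap roles and bound $I_{G,s,n}^{\perp T}U_{1}$ instead. Since $K_{G,s,n}S_{1}=0$, we have $K_{G,s,n}S_{2}U_{1}=K_{G,s,n}U_{1}\Sigma_{1}^{2}$. This gives $\|K_{G,s,n}U_{1}\|_{2}\le\|S_{2}\|_{2}/\sigma_{\gamma}^{2}$. Finally, $\sigma_{\gamma}^{2}\ge\lambda_{\gamma}(S_{1})$ would hold if $S_{2}\succeq0$ on the relevant subspace; otherwise Weyl again introduces $\pm\|S_{2}\|_{2}$.

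We expect this to be the main obstacle: justifying exactly the stated denominator $\lambda_{\gamma}(S_{1})$ rather than $\lambda_{\gamma}(S_{1})-\|S_{2}\|_{2}$. We would try the swapped-role argument first, and otherwise invoke the perturbation assumption (\ref{eq4-3}) to absorb the difference. The last step is the identity $\|I_{G,s,n}I_{G,s,n}^{T}-U_{1}U_{1}^{T}\|_{2}=\|K_{G,s,n}U_{1}\|_{2}=\|\sin\Theta\|_{2}$, which holds because both subspaces have dimension $\gamma$. This yields (\ref{eq4-4}).
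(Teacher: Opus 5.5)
You correctly locate the obstacle, but the proposal does not prove (\ref{eq4-4}) as stated, and that last step cannot be closed. Everything you actually derive is right. Your swapped-role step is a complete proof of a slightly different bound. From $K_{G,s,n}S_1=0$ and $SU_1=U_1\Sigma_1^2$ you get $K_{G,s,n}U_1\Sigma_1^2=K_{G,s,n}S_2U_1$, hence
$\|I_{G,s,n}I_{G,s,n}^T-U_1U_1^T\|_2=\|K_{G,s,n}U_1\|_2\le\|S_2\|_2/\lambda_\gamma(S)$.
Courant--Fischer gives $\lambda_\gamma(S)\ge\lambda_\gamma(S_1)+\lambda_{\min}(I_{G,s,n}^TS_2I_{G,s,n})$. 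So this is exactly (\ref{eq4-4}) whenever $I_{G,s,n}^TS_2I_{G,s,n}\succeq0$, for instance when the cross terms drop out and $S_2=I_{C,s}\hat\Sigma_rI_{C,s}^T$ as idealized in (\ref{eq4-2}). The genuine gap is your fallback for the general case. Assumption (\ref{eq4-3}) is an informal ``$\ll$'' condition. It cannot turn a denominator $\lambda_\gamma(S)$, or $\lambda_\gamma(S_1)-\|S_2\|_2$, into $\lambda_\gamma(S_1)$ in an exact inequality.

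No argument can close this, because in the generality of (\ref{eq4-6}) the inequality is false. The $\hat\Sigma_{vr}$ terms allow $I_{G,s,n}^TS_2I_{G,s,n}$ to be indefinite, for example when $r$ partially cancels $v$ along $\mathcal{I}_{G,s}\cap\mathrm{Im}(I_{C,s})$.

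A concrete failure: take $\gamma=1$, $S_1=\mathrm{diag}(1,0)$ and $S_2=\left[\begin{smallmatrix}-1/4&1/3\\ 1/3&1/4\end{smallmatrix}\right]$. Then $\|S_2\|_2=5/12$. The sum $S=\left[\begin{smallmatrix}3/4&1/3\\ 1/3&1/4\end{smallmatrix}\right]\succ0$ has eigenvalues $11/12$ and $1/12$, with leading eigenvector $(2,1)^T/\sqrt{5}$. So the left-hand side of (\ref{eq4-4}) is $1/\sqrt{5}\approx0.447>5/12$.

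This is not a large-perturbation artifact. With $S_2=\left[\begin{smallmatrix}-2\epsilon^2&\epsilon\\ \epsilon&2\epsilon^2\end{smallmatrix}\right]$, the left-hand side $\sin\theta$ satisfies $\sin^2\theta=\epsilon^2+5\epsilon^4+O(\epsilon^6)>\epsilon^2+4\epsilon^4=\|S_2\|_2^2$ for all sufficiently small $\epsilon>0$. Hence (\ref{eq4-3}) does not rescue the statement.

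The paper gives no proof of its own; it only presents the theorem as an application of the Davis--Kahan theorem from Stewart--Sun. No standard form of the $\sin\Theta$ theorem uses the unperturbed gap with constant $1$. What can be proved is what your two routes produce:
\begin{itemize}
\item $\|S_2\|_2/\lambda_\gamma(S)$, which is notably the denominator the paper itself adopts in its empirical bound in the detection-performance analysis;
\item $\|S_2\|_2/(\lambda_\gamma(S_1)-\|S_2\|_2)$, when $\lambda_\gamma(S_1)>\|S_2\|_2$;
\item $2\|S_2\|_2/\lambda_\gamma(S_1)$, unconditionally.
\end{itemize}
Alternatively, keep (\ref{eq4-4}) and add the hypothesis $I_{G,s,n}^TS_2I_{G,s,n}\succeq0$.
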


Note that $\mathcal{P}_{\mathcal{I}_{U_{1}}}=U_{1}U_{1}^{T}$ is an operator
of orthogonal projection onto the subspace $\mathcal{I}_{U_{1}}=\func{Im}%
\left( U_{1}\right) .$ Recall that 
\begin{equation}
\left\Vert I_{G,s,n}I_{G,s,n}^{T}-U_{1}U_{1}^{T}\right\Vert _{2}=\left\Vert 
\mathcal{P}_{\mathcal{I}_{U_{1}}}-\mathcal{P}_{\mathcal{I}_{G,s}}\right\Vert
.  \label{eq4-22}
\end{equation}%
Here, $\left\Vert \mathcal{P}_{\mathcal{I}_{U_{1}}}-\mathcal{P}_{\mathcal{I}%
_{G,s}}\right\Vert $ denotes the operator norm. Thus, inequality (\ref{eq4-4}%
) gives a bound of the gap metric $\delta \left( \mathcal{I}_{U_{1}},%
\mathcal{I}_{G,s}\right) =\left\Vert \mathcal{P}_{\mathcal{I}_{U_{1}}}-%
\mathcal{P}_{\mathcal{I}_{G,s}}\right\Vert $ that measures the variation of $%
\mathcal{I}_{G,s}$ caused by the perturbation \citep{Kato_book}. Observe
that \textit{\ }%
\begin{equation*}
\mathcal{P}_{\mathcal{I}_{G,s}}-\mathcal{P}_{\mathcal{I}_{U_{1}}}=\mathcal{I}%
-\mathcal{P}_{\mathcal{I}_{U_{1}}}-\left( \mathcal{I}-\mathcal{I}%
_{G,s}\right) =\mathcal{P}_{\mathcal{I}_{U_{2}}}-\mathcal{P}_{\mathcal{I}%
_{G,s}^{\bot }},
\end{equation*}%
where $\mathcal{P}_{\mathcal{I}_{U_{2}}}$ and $\mathcal{P}_{\mathcal{I}%
_{G,s}^{\bot }}$ are operators of orthogonal projection onto $\mathcal{I}%
_{U_{2}}=\func{Im}\left( U_{2}\right) $ and $\mathcal{I}_{G,s}^{\bot },$
respectively. Hence, 
\begin{equation}
\left\Vert \mathcal{P}_{\mathcal{I}_{U_{2}}}-\mathcal{P}_{\mathcal{I}%
_{G,s}^{\bot }}\right\Vert =\left\Vert \mathcal{P}_{\mathcal{I}_{U_{1}}}-%
\mathcal{P}_{\mathcal{I}_{G,s}}\right\Vert \leq \frac{\left\Vert
S_{2}\right\Vert _{2}}{\lambda _{\gamma }\left( S_{1}\right) }.
\label{eq4-23}
\end{equation}%
Now, we are in the position to formulate the data-driven fault detection to
be addressed in the sequel. To begin with, we introduce the following
definition.

\begin{definition}
\label{Def4-1}Given the data matrix $T_{G,s}$ and the SVD (\ref{eq4-5}), the
orthogonal projections defined by the operators $\mathcal{P}_{\mathcal{I}%
_{U_{1}}}$ and $\mathcal{P}_{\mathcal{I}_{U_{1}}^{\bot }},\mathcal{P}_{%
\mathcal{I}_{U_{1}}^{\bot }}=U_{2}U_{2}^{T}=I-U_{1}U_{1}^{T}=:\mathcal{P}_{%
\mathcal{I}_{U_{2}}},$ are called data-driven orthogonal projections onto
the image subspace $\mathcal{I}_{G,s}$ and residual subspace $\mathcal{I}%
_{G,s}^{\bot }$, respectively. 
\end{definition}

The data-driven fault detection problem will be solved in two steps: (i)
residual generation by means of a data-driven orthogonal projection, (ii)
residual evaluation and threshold setting. Below, we present a fault
detection approach.

\subsection{Orthogonal projection-based fault detection}

As showcased by (\ref{eq4-6}), the data subspace $\mathcal{I}_{U_{1}}$
comprises the data collected during fault-free system operations. Let $%
\mathcal{P}_{\mathcal{I}_{U_{1}}}$ be the operator of an orthogonal
projection onto $\mathcal{I}_{U_{1}},$ the projection-based residual
generator is given by%
\begin{equation}
r_{\mathcal{I}_{U_{1}}}(k)=\left( \mathcal{I}-\mathcal{P}_{\mathcal{I}%
_{U_{1}}}\right) 
\begin{bmatrix}
u_{s}(k) \\ 
y_{s}(k)%
\end{bmatrix}%
=%
\begin{bmatrix}
u_{s}(k) \\ 
y_{s}(k)%
\end{bmatrix}%
-%
\begin{bmatrix}
\hat{u}_{s}(k) \\ 
\hat{y}_{s}(k)%
\end{bmatrix}%
.  \label{eq4-25}
\end{equation}%
Recall (\ref{eq4-24}) and the system (\ref{eq3-8}). It turns out 
\begin{gather*}
r_{\mathcal{I}_{U_{1}}}(k)=U_{2}U_{2}^{T}%
\begin{bmatrix}
u_{s}(k) \\ 
y_{s}(k)%
\end{bmatrix}
\\
\approx K_{G,s,n}^{T}K_{G,s,n}\left[ 
\begin{array}{cc}
I_{G,s} & \text{ }I_{C,s}%
\end{array}%
\right] 
\begin{bmatrix}
v_{s+n}(k-n) \\ 
r_{s+n}(k-n)%
\end{bmatrix}
\\
=K_{G,s,n}^{T}K_{G,s,n}I_{C,s}r_{s+n}(k-n).
\end{gather*}%
Noticing that $\left\Vert r_{\mathcal{I}_{U_{1}}}(k)\right\Vert =\left\Vert
U_{2}^{T}%
\begin{bmatrix}
u_{s}(k) \\ 
y_{s}(k)%
\end{bmatrix}%
\right\Vert ,$ for the detection purpose, it is equivalent to construct a
kernel-based residual generator 
\begin{gather}
r_{U_{2}}(k)=U_{2}^{T}%
\begin{bmatrix}
u_{s}(k) \\ 
y_{s}(k)%
\end{bmatrix}
\label{eq4-26} \\
=K_{G,s,n}I_{C,s}r_{s+n}(k-n)+\Delta _{r_{\mathcal{I}_{U_{2}}}}\in \mathbb{R}%
^{\theta },  \label{eq4-48}
\end{gather}%
where $\Delta _{r_{\mathcal{I}_{U_{2}}}}$ addresses the possible error
caused by $K_{G,s,n}-U_{2}.$ On this basis, the corresponding data-driven
fault detection problem is formulated as: given the residual matrix $R_{U_{2}%
\left[ k_{0}+1:k_{0}+N\right] }=U_{2}^{T}T_{G,s}$ and writing 
\begin{gather}
r_{U_{2}}(k)-\Delta _{r_{\mathcal{I}_{U_{2}}}}=d_{U_{2}}(k)+f_{U_{2}}(k),
\label{eq4-27} \\
d_{U_{2}}(k)=K_{G,s,n}I_{C,s}r_{s+n}(k-n),
\end{gather}%
with $f_{U_{2}}(k)$ representing the influence of the system faults on the
residual vector $r_{U_{2}},$ find a residual evaluation function and a
threshold setting that deliver an optimal fault detection. We first consider
the case that $r(k)$ represents an $\ell _{2}$-bounded unknown signal.
Impose $\Sigma _{r_{U_{2}}}=R_{U_{2}\left[ k_{0}+1:k_{0}+N\right] }R_{U_{2}%
\left[ k_{0}+1:k_{0}+N\right] }^{T}.$ On the assumption that\textit{\ }%
\begin{equation*}
rank\left( \mathcal{H}_{s+n}\left( r_{\left[ k_{p}+1:k_{p}+N\right] }\right)
\right) =(s+n)m,
\end{equation*}%
the residual evaluation function is defined by 
\begin{equation*}
J=\bar{r}_{U_{2}}(k)^{T}\Sigma _{r_{U_{2}}}^{-1}\bar{r}_{U_{2}}(k),\bar{r}%
_{U_{2}}(k)=r_{U_{2}}(k)-\hat{\Delta}_{r_{\mathcal{I}_{U_{2}}}}
\end{equation*}%
with $\hat{\Delta}_{r_{\mathcal{I}_{U_{2}}}}$ as an estimate of $\Delta _{r_{%
\mathcal{I}_{U_{2}}}}.$ The threshold setting problem is now formulated as
finding $\hat{\Delta}_{r_{\mathcal{I}_{U_{2}}}}$ and $J_{th}$ so that 
\begin{equation}
\setlength{\abovedisplayskip}{6pt}\setlength{\belowdisplayskip}{6pt}\left\{ 
\begin{array}{l}
J\leq J_{th},\text{ fault-free operations} \\ 
J>J_{th},\text{ faulty.}%
\end{array}%
\right.  \label{eq4-29}
\end{equation}%
The above detection problem is in fact a data-driven realization of the
so-called unified solution \citep{Ding2008}. To solve it, we propose to
apply support vector data description (SVDD) method. SVDD is a
well-established data-driven algorithm, originally developed for solving
one-class classification problems \citep{SVDD2002}. In the SVDD framework,
our problem is formulated as follows. Given the data set, $r_{U_{2}}(k)\in 
\mathbb{R}^{\theta },k=k_{0}+i,i=1,\cdots ,N-s+1,$ find $\hat{\Delta}_{r_{%
\mathcal{I}_{U_{2}}}}\in \mathbb{R}^{\theta },J_{th}\in \mathbb{R}$ and $\xi
_{i}\in \mathbb{R},i=1,\cdots ,N-s+1,$ that solve the minimization problem: 
\begin{gather}
\min_{J_{th},\hat{\Delta}_{r_{\mathcal{I}_{U_{2}}}},\xi _{i}}\left(
J_{th}^{2}+C\sum\limits_{i=1}^{N-s+1}\xi _{i}\right)  \label{eq4-19a} \\
\text{s.t. }\forall \left\Vert \bar{r}_{U_{2}}(k_{0}+i)\right\Vert _{\Sigma
_{r_{U_{2}}}^{-1}}^{2}\leq J_{th}^{2}+\xi _{i},\xi _{i}\geq 0,  \notag
\end{gather}%
where $C\geq 0$ is a design parameter, $i=1,\cdots ,N-s+1,$ and $\xi _{i}$
are slack variables representing tolerable outliers (false alarms). There
exist well-established algorithms to solve the optimization problem (\ref%
{eq4-19a}). The reader is referred to \citep{SVDD2002} for details.

Next, assume that the uncertainties in the system (\ref{eq2-20})-(\ref%
{eq2-21}) are process and measurement noises modelled by (\ref{eq2-20a})-(%
\ref{eq2-21b}), the residual $r(k)\sim \mathcal{N}(0,\Sigma _{r})$ is white
noise series subject to (\ref{eq3-20}). Hence, for sufficiently large $N,$
we have the approximations%
\begin{gather}
\hat{\Sigma}_{r}\approx \Sigma _{r_{s+n}}=\mathbb{E}r_{s+n}\left( k-n\right)
r_{s+n}^{T}\left( k-n\right) ,  \notag \\
\hat{\Sigma}_{vr}\approx \Sigma _{v_{s+n}r_{s+n}}=\mathbb{E}v_{s+n}\left(
k-n\right) r_{s+n}^{T}\left( k-n\right) =0,  \notag \\
\Longrightarrow S_{1}=I_{G,s}\Sigma _{v}I_{G,s}^{T},S_{2}=I_{C,s}\hat{\Sigma}%
_{r}I_{C,s}^{T}.  \label{eq4-2}
\end{gather}%
Consequently, the dynamics of the residual generator (\ref{eq4-26}) is
described by 
\begin{equation*}
\setlength{\abovedisplayskip}{6pt}\setlength{\belowdisplayskip}{6pt}%
r_{U_{2}}(k)=U_{2}^{T}\Psi _{s}\left[ 
\begin{array}{c}
v_{s+n}\left( k-n\right) \\ 
r_{s+n}\left( k-n\right)%
\end{array}%
\right] \sim \mathcal{N}(\Delta _{r_{\mathcal{I}_{U_{2}}}},\Sigma
_{r_{U_{2}}}).
\end{equation*}%
Taking into account of faults, we have 
\begin{equation}
\setlength{\abovedisplayskip}{6pt}\setlength{\belowdisplayskip}{6pt}%
r_{U_{2}}(k)=\Delta _{r_{\mathcal{I}_{U_{2}}}}+\varepsilon
_{U_{2}}(k)+f_{U_{2}}(k)\in \mathbb{R}^{sm-n},  \label{eq4-10}
\end{equation}%
where $\varepsilon _{U_{2}}(k)\sim \mathcal{N}(0,\hat{\Sigma}_{r_{U_{2}}}).$
If $\Delta _{r_{\mathcal{I}_{U_{2}}}}$ is sufficiently small or a
quasi-stationary signal, the detection problem (\ref{eq4-10}) is well solved
using the $\chi ^{2}$-test statistic $J$, 
\begin{gather}
J=\bar{r}_{U_{2}}(k)^{T}\Sigma _{r_{U_{2}}}^{-1}\bar{r}_{U_{2}}(k),
\label{eq4-16} \\
\hat{\Delta}_{r_{\mathcal{I}_{U_{2}}}}=\frac{1}{N}\dsum%
\limits_{k=1}^{N}r_{U_{2}}(k),J_{th}=\chi _{\alpha }^{2}\left( \theta
\right) )  \label{eq4-16a}
\end{gather}%
for a false alarm rate $\alpha $ \citep{Ding2008}. Otherwise, an upper-bound
of $\left\Vert \Delta _{r_{\mathcal{I}_{U_{2}}}}\right\Vert $ is estimated
and the corresponding detection problem is solved by means of the
well-established generalized likelihood ratio (GLR) testing. For details,
the reader is referred to \citep{Basseville93}.

\begin{algorithm}
\textbf{Projection-based data-driven fault detection}
\begin{enumerate}
\item Collect input-output data and build the data matrix $T_{G,s};$

\item Perform the SVD (\ref{eq4-5}) with output $U_{2}$;

\item Construct the residual generator (\ref{eq4-26});

\item Generate $r_{U_{2}}(k)$ and build the data matrices $R_{U_{2}\left[ k_{0}+1:k_{0}+N\right] },\Sigma_{r_{U_{2}}}$; 

\item Run SVDD algorithm to determine $\left( \hat{\Delta}_{r_{\mathcal{I}%
_{U_{2}}}},J_{th}\right) $ or alternatively compute $\left( \hat{\Delta}_{r_{%
\mathcal{I}_{U_{2}}}},J_{th}\right) $ according to (\ref{eq4-16a}).
\end{enumerate}
\end{algorithm}

\subsection{Detection performance analysis}

Performance of a fault detection system is mainly characterized by false
alarm rate (FAR) and miss detection rate (MDR). We now analyze the impact of
the gaps between $\mathcal{I}_{G,s}$ and $\mathcal{I}_{U_{1}}$ as well as
between $\mathcal{R}_{G,s}$ and $\mathcal{I}_{U_{2}}$ on these two
performance indices, which are resulted from the perturbation. As described
in the previous subsection, the perturbation yields 
\begin{align*}
\delta \left( \mathcal{I}_{U_{1}},\mathcal{I}_{G,s}\right) & =\left\Vert 
\mathcal{P}_{\mathcal{I}_{U_{1}}}-\mathcal{P}_{\mathcal{I}_{G,s}}\right\Vert
, \\
\delta \left( \mathcal{I}_{U_{2}},\mathcal{R}_{G,s}\right) & =\left\Vert 
\mathcal{P}_{\mathcal{I}_{U_{2}}}-\mathcal{P}_{\mathcal{I}_{G,s}^{\bot
}}\right\Vert .
\end{align*}%
It is a well-known result that gap metric can be computed by solving a
model-matching problem (MMP) \citep{Kato_book}. Applying it to our case
gives 
\begin{align*}
\delta \left( \mathcal{I}_{U_{1}},\mathcal{I}_{G,s}\right) &
=\min_{Q_{1}}\left\Vert I_{G,s,n}-U_{1}Q_{1}\right\Vert _{2}, \\
\delta \left( \mathcal{I}_{U_{2}},\mathcal{R}_{G,s}\right) &
=\min_{Q_{2}}\left\Vert K_{G,s,n}^{T}-U_{2}Q_{2}\right\Vert _{2}.
\end{align*}%
It is straightforward that 
\begin{gather}
\left\Vert I_{G,s,n}-U_{1}Q_{1}\right\Vert _{2}=\left\Vert \left[ 
\begin{array}{c}
U_{1}^{T} \\ 
U_{2}^{T}%
\end{array}%
\right] \left( I_{G,s,n}-U_{1}Q_{1}\right) \right\Vert _{2}  \notag \\
=\left\Vert \left[ 
\begin{array}{c}
U_{1}^{T}I_{G,s,n}-Q_{1} \\ 
U_{2}^{T}I_{G,s,n}%
\end{array}%
\right] \right\Vert _{2}\Longrightarrow  \notag \\
\delta \left( \mathcal{I}_{U_{1}},\mathcal{I}_{G,s}\right) =\left\Vert
U_{2}^{T}I_{G,s,n}\right\Vert _{2}=\left\Vert
U_{2}U_{2}^{T}I_{G,s,n}\right\Vert _{2},  \label{eq4-7} \\
\left\Vert K_{G,s,n}^{T}-U_{2}Q_{2}\right\Vert _{2}=\left\Vert \left[ 
\begin{array}{c}
U_{1}^{T} \\ 
U_{2}^{T}%
\end{array}%
\right] \left( K_{G,s,n}^{T}-U_{2}Q_{2}\right) \right\Vert _{2}  \notag \\
=\left\Vert \left[ 
\begin{array}{c}
U_{1}^{T}K_{G,s,n}^{T} \\ 
U_{2}^{T}K_{G,s.n}^{T}-Q_{2}%
\end{array}%
\right] \right\Vert _{2}\Longrightarrow  \notag \\
\delta \left( \mathcal{I}_{U_{2}},\mathcal{R}_{G,s}\right) =\left\Vert
U_{1}U_{1}^{T}K_{G,s,n}^{T}\right\Vert _{2}.  \label{eq4-8}
\end{gather}%
Equations (\ref{eq4-7})-(\ref{eq4-8}) mean that the gap between $\mathcal{I}%
_{U_{1}}$and $\mathcal{I}_{G,s}$ is induced by the projection of the system
image subspace onto the data-driven residual subspace, while the gap between 
$\mathcal{I}_{U_{2}}$ and $\mathcal{R}_{G,s}$ is the result of projecting
the system residual subspace onto the data-driven image subspace. In the
context of fault detection, this implies that (i) a larger gap metric
between $\mathcal{I}_{U_{1}}$and $\mathcal{I}_{G,s}$ leads to a higher FAR,
(ii) a larger gap metric between $\mathcal{I}_{U_{2}}$ and $\mathcal{R}%
_{G,s} $ results in a higher MDR. It follows from (\ref{eq4-4}) in Theorem %
\ref{Theo4-1} that the ratio $\left\Vert S_{2}\right\Vert _{2}/\lambda
_{\gamma }\left( S_{1}\right) $ dictates the value of the gap metric.
Consequently, it quantifies the fault detection performance. Noticing that
both $S_{1},S_{2}$ cannot be computed without model knowledge, an empirical
bound on the assumption of (\ref{eq4-3}) is proposed in order to estimate
the possible FAR and MDR. To this end, $\lambda _{\gamma }\left(
S_{1}\right) $ is replaced by $\lambda _{\gamma }\left( S\right) =\sigma
_{\gamma }^{2}\left( \frac{T_{G,s}}{\sqrt{N}}\right) .$ Moreover, observe
that%
\begin{equation*}
\setlength{\abovedisplayskip}{6pt}\setlength{\belowdisplayskip}{6pt}
S_{2}=S-S_{1}\Longleftrightarrow S_{2}U=U\Sigma ^{2}-S_{1}U.
\end{equation*}%
Substituting $S_{1}$ by $U_{1}\Sigma _{1}^{2}U_{1}^{T}$ leads to%
\begin{equation*}
\setlength{\abovedisplayskip}{6pt}\setlength{\belowdisplayskip}{6pt}
S_{2}U\approx U\Sigma ^{2}-U_{1}\Sigma _{1}^{2}\left[ 
\begin{array}{cc}
I & \text{ }0%
\end{array}%
\right] =\left[ 
\begin{array}{cc}
0 & \text{ }U_{2}\Sigma _{2}^{2}%
\end{array}%
\right] .
\end{equation*}%
Hence, $\left\Vert S_{2}\right\Vert _{2}$ is replaced by $\lambda _{\gamma
+1}\left( S\right) ,$ attributed to the relation%
\begin{equation*}
\setlength{\abovedisplayskip}{6pt}\setlength{\belowdisplayskip}{6pt}
\left\Vert S_{2}\right\Vert _{2}\leq \left\Vert S_{2}U\right\Vert
_{2}\approx \lambda _{\max }\left( U_{2}\Sigma _{2}^{2}\right) \leq \lambda
_{\gamma +1}\left( S\right) ,
\end{equation*}%
which finally gives an empirical bound of 
\begin{equation}
\setlength{\abovedisplayskip}{6pt}\setlength{\belowdisplayskip}{6pt} \frac{%
\left\Vert S_{2}\right\Vert _{2}}{\lambda _{\gamma }\left( S_{1}\right) }%
\approx \frac{\lambda _{\gamma +1}\left( S\right) }{\lambda _{\gamma }\left(
S\right) }=\frac{\sigma _{\gamma +1}^{2}\left( \frac{T_{G,s}}{\sqrt{N}}%
\right) }{\sigma _{\gamma }^{2}\left( \frac{T_{G,s}}{\sqrt{N}}\right) }.
\end{equation}

\subsection{A summary and comparison}

At the end of this section, we briefly summarize the proposed data-driven
projection-based detection approach from a substitute aspect and, based on
it, compare it with the existing data-driven detection method %
\citep{Ding_IJP_2014}. The core idea behind the proposed detection approach
is to determine the system image representation $I_{G,s}$ for the given data
matrix $T_{G,s}$ and, aided by it, to generate residual by projecting the
data onto the image/residual subspace. Mathematically, this is a low-rank
approximation problem. Specifically for our application, it is formulated as 
\begin{gather}
\min_{I_{G,s}}\left\Vert \frac{T_{G,s}}{\sqrt{N}}-I_{G,s}\mathcal{H}%
_{s+n}\left( v_{\left[ k_{p}+1:k_{p}+N\right] }\right) \right\Vert _{2}
\label{eq4-40} \\
\text{s.t. }rank\left( I_{G,s}\right) =\gamma =sp+n.  \notag
\end{gather}%
According to Eckart-Young-Mirsky theorem \citep{Eckart_Young_1936}, the SVD (%
\ref{eq4-5}) is the analytical solution of (\ref{eq4-40}), namely the
leading left singular subspace of $\frac{T_{G,s}}{\sqrt{N}}$ gives the
optimal rank-$\gamma $ approximation of $\func{Im}(I_{G,s}).$ Denoted by $%
\left( \hat{u}_{s}(k),\hat{y}_{s}(k)\right) $ the optimal estimate in the
data-driven image subspace, residual $r_{U_{2}}(k)$ is generated based on $%
r_{U_{1}}(k)$ (refer to (\ref{eq4-25})-(\ref{eq4-26})). On the other hand,
the existing data-driven optimal residual generation \citep{Ding_IJP_2014}
follows a different scheme, namely residual generation is based an optimal
output estimation according to (\ref{eq2-9c}). In \citep{Ding_IJP_2014}, by
means of a RQ-decomposition of the input-output data matrix$,$ 
\begin{gather*}
\left[ 
\begin{array}{c}
T_{G,s+\rho } \\ 
\mathcal{H}_{s}\left( y_{\left[ k_{0}+1:k_{0}+N\right] }\right)%
\end{array}%
\right] =\left[ 
\begin{array}{cc}
R_{11} & 0 \\ 
R_{21} & R_{12}%
\end{array}%
\right] \left[ 
\begin{array}{c}
Q_{1} \\ 
Q_{2}%
\end{array}%
\right] , \\
T_{G,s+\rho }=\left[ 
\begin{array}{c}
\mathcal{H}_{\rho }\left( u_{\left[ k_{\rho }+1:k_{\rho }+N\right] }\right)
\\ 
\mathcal{H}_{\rho }\left( y_{\left[ k_{\rho }+1:k_{\rho }+N\right] }\right)
\\ 
\mathcal{H}_{s}\left( u_{\left[ k_{0}+1:k_{0}+N\right] }\right)%
\end{array}%
\right] ,
\end{gather*}%
with $k_{\rho }=k_{0}-\rho ,\rho >n,$ an optimal output estimation is
achieved in the least spares (LS) sense, 
\begin{equation}
\setlength{\abovedisplayskip}{6pt}\setlength{\belowdisplayskip}{6pt}%
\min_{\Phi _{s}}\left\Vert \mathcal{H}_{s}\left( y_{\left[ k_{0}+1:k_{0}+N%
\right] }\right) -\Phi _{s}T_{G,s+\rho }\right\Vert .  \label{eq4-41}
\end{equation}%
In the above equation, 
\begin{equation*}
\setlength{\abovedisplayskip}{6pt}\setlength{\belowdisplayskip}{6pt}R_{y,s}=%
\left[ 
\begin{array}{cc}
-\Phi _{s} & I%
\end{array}%
\right] \left[ 
\begin{array}{c}
T_{G,s+\rho } \\ 
\mathcal{H}_{s}\left( y_{\left[ k_{0}+1:k_{0}+N\right] }\right)%
\end{array}%
\right]
\end{equation*}%
builds the output residual matrix $R_{y,s},$ whose corresponding vectorized
finite-sample model 
\begin{equation*}
\setlength{\abovedisplayskip}{6pt}\setlength{\belowdisplayskip}{6pt}%
r_{y,s}(k)=\left[ 
\begin{array}{cc}
-\Phi _{s} & I%
\end{array}%
\right] \left[ 
\begin{array}{c}
\left[ 
\begin{array}{c}
u_{\rho }(k-\rho ) \\ 
y_{\rho }(k-\rho ) \\ 
u_{s}(k)%
\end{array}%
\right] \\ 
y_{s}(k)%
\end{array}%
\right]
\end{equation*}%
is indeed the finite-sample kernel representation 
\begin{equation}
\setlength{\abovedisplayskip}{6pt}\setlength{\belowdisplayskip}{6pt}%
r_{y,s}(k)=\left[ 
\begin{array}{cc}
-\hat{N}_{s} & \text{ }\hat{M}_{s}%
\end{array}%
\right] \left[ 
\begin{array}{c}
u_{s+\rho }(k-\rho ) \\ 
y_{s+\rho }(k-\rho )%
\end{array}%
\right] ,  \label{eq4-42}
\end{equation}%
a dual form of the image representation (\ref{eq3-1}), which can be derived
from the kernel-based model (\ref{eq2-3}) with the state equation (\ref%
{eq2-9}). It is worth mentioning that an optimal estimation of $R_{y,s}$ in
the sense of (\ref{eq4-41}) was reported in \citep{WANGautomatica2025}.

Comparing (\ref{eq4-40}) and (\ref{eq4-41}) reveals the crucial difference
between these two residual generation schemes. It is the optimal estimation
in two different subspace, one in the input-output data subspace $\left(
u_{s}(k),y_{s}(k)\right) $, while the other one exclusively in the output
subspace $y_{s}(k)$. As illustrated in \citep{DL2026}, a residual in the
former case is capable of handling (multiplicative) uncertainties, including
faults, and thus results in higher fault detectability. In contrast, the
output residual works efficiently to deal with systems with (additive)
unknown inputs. It is noteworthy that 
\begin{equation*}
\dim \left( r_{U_{2}}\right) =(s+\rho )m-n>sm=\dim (r_{y,s}),
\end{equation*}%
a further aspect that illustrates the higher fault detectability of the
projection-based method.

\section{Conclusions}

The objective of this paper is twofold: (i) to introduce the concepts of
image and kernel representations of finite-sample signals of LTI systems
and, associated to them, image and residual subspaces of finite-sample
signals, on this basis, (ii) to develop a data-driven fault detection
approach. Proceeding from the coprime factorizations of LTI systems, the
finite-sample image representation has been derived, which leads to the
definition of image subspace $\mathcal{I}_{G,s}$. The image representation
describes the nominal system behavior, whose data-driven version is
verifiably equivalent to $T_{G,s},$ implying $\mathcal{I}_{G,s}=\func{Im}%
\left( T_{G,s}\right) .$ This study provides an alternative proof of the
fundamental lemma, highlights the role of the latent variable, and showcases
a meaningful aspect behind the excitation condition described in the
fundamental lemma. On the basis of the kernel-based model (\ref{eq2-25}) and
the Bezout identity-induced input-output model (\ref{eq2-4}), a complete
finite-sample model (\ref{eq3-8}) has been established. It describes the
system behavior $\left( u_{s}(k),y_{s}(k)\right) $ under the existence of
uncertainties. The system behavior (\ref{eq3-8}) is fully characterized by
the latent variable pair $\left( v_{s+n},r_{s+n}\right) ,$ and its
data-driven version extends the fundamental lemma. In this regard,
finite-sample kernel representation and residual subspace have been
introduced. It is remarkable that the finite-sample kernel representation
and residual subspace differ from their counterpart induced by the coprime
factorization. Particularly, the statement (\ref{eq3-21}) in Theorem \ref%
{Theorem3-3} underlines the difference between the kernel representation (%
\ref{eq2-9})-(\ref{eq2-9c}) satisfying (\ref{eq3-61}) and the finite-sample
kernel representation (\ref{eq3-60}). It has been proven that the latter is
equivalent to a parity-based residual generator.

Relying on the achieved theoretical results, a data-driven projection-based
fault detection approach has been proposed. This detection approach
comprises three steps, (i) a low-rank matrix approximation of the normalized
image representation $I_{G,s,n}$ by means of an SVD of the data matrix $%
T_{G,s}$, (ii) an orthogonal projection onto the approximated image subspace
and, based on it, generation of projection-based residual vector $%
r_{U_{2}}(k),$ and (iii) threshold setting either by the SVDD algorithm for
the $\ell _{2}$-bounded residual or using $\chi ^{2}$ ($T^{2}$ or GLR)
statistic testing for the residual in the form of an innovation series.
Further effort has been dedicated to the analysis of the detection
performance of the proposed data-driven detection system using the
Davis-Kahan theorem in the framework of matrix perturbation theory. In this
regard, gap metric-based estimation of FAR and MDR has been explored. As a
summary of our work in comparison with the existing data-driven fault
detection methods, the principal difference between the proposed and the
existing methods has been underlined. The proposed one is an orthogonal
projection-based optimal estimation of the nominal behavior in the
input-output data space $\left( u_{s}(k),y_{s}(k)\right) ,$ and the existing
one is based on an LS-estimation of the output $y_{s}(k).$

In our recent work, a unified framework of control and detection based on
the Bezout identity and the mapping (\ref{eq2-3}) between the input-output $%
\left( u,y\right) $ and the latent variable pair $\left( v,r\right) $ is
established. Inspired by this work, analogous forms of the Bezout identity
and the mapping (\ref{eq2-3}) will be explored in the finite-sample
subspaces and system representations. Their potential application to
data-driven feedback control would be the objective of these research
endeavours.

\end{document}